\documentclass[ejsv2]{imsart}
\def\isXr{0}
\input{myPreliminary}

\begin{document}
\ifnum\isXr=1{\externaldocument{supp}}\fi
\begin{frontmatter}
\title{Difference-based variance estimators with repeated measurements}
\runtitle{Difference-based variance estimators}

\begin{aug}
\author[A]{\fnms{Chak Ming}~\snm{Lee}\ead[label=e1]{chakminglee@link.cuhk.edu.hk}}
\and
\author[A]{\fnms{Kin Wai}~\snm{Chan}\ead[label=e2]{kinwaichan@cuhk.edu.hk}}
\address[A]{Department of Statistics and Data Science, The Chinese University of Hong Kong\printead[presep={,\ }]{e1,e2}}

\runauthor{Lee and Chan}
\end{aug}

\begin{abstract}
In this paper, we formulate a general differencing framework for variance estimation across a range of settings.
We demonstrate that conventional difference-based noise variance estimators cannot achieve the desired bias-correcting power in nonparametric regression with repeated measurements.
A new high-order bias-corrected differencing scheme, adapted to repeated measurements, is proposed by interlacing inter-group and intra-group differencing.
The theoretical properties of the new sequences and estimators are studied.
Our proposals are particularly efficient in finite samples and under high signal-to-noise ratio scenarios,
where asymptotic convergence has not yet fully taken effect, due to their strong bias-correcting power.
\end{abstract}

\begin{keyword}[class=MSC]
\kwdgroup[type=primary]{\kwd{62G05}
\kwd{62G10}}
\kwdgroup[type=secondary]{\kwd{62G20}}
\end{keyword}

\begin{keyword}
\kwd{bias correction}
\kwd{difference-variate}
\kwd{multiple observations}
\kwd{heteroscedasticity}
\end{keyword}

\end{frontmatter}

\section{Introduction \label{section_review_diff}}
Repeated measurements facilitate noise variance estimation
in nonparametric regression
since
fitting of the regression function is not needed
when a design point has multiple observations.
However,
the conventional difference-based variance estimator fails to achieve the
desired high-order corrected bias in regression data with repeated measurements.
We derive the minimal difference order $m$ such that the desired bias property is achievable
and propose a new difference sequence adaptive to repeated data.
We show that repeated measurements boost both the robustness and efficiency of the proposed difference-based methods.

Consider a nonparametric regression problem of the form
\begin{equation}\label{eqt:model}
    Y_{h,i}=g(X_{h,i})
    +\epsilon_{h,i}, \qquad(h=1,\ldots,s(i)
    \; \text{and} \; i=1,\ldots,n)
\end{equation}
where $X_{1,i}=\cdots=X_{s(i),i} \equiv X^\star_i {\in \mathbb{R}}$ for each $i$,
{$s(i)\geq 1$\linelabel{def:s(i)} is the number of repeated measurements at $X^\star_i$,}
{the unknown regression function $g:\mathbb{R}\rightarrow\mathbb{R}$ is continuous},
and the noises $\epsilon_{h,i}$'s are independent and identically distributed random variables
with zero mean, variance $\sigma^2<\infty$, and
$\lambda_4=\E(\epsilon_{h,i}^4) <\infty$. \linelabel{def:lambda4}
The independence assumption will be relaxed in \S~\ref{section_extension}.
If $s(1) = \cdots = s(n) = 1$,
i.e., without repeated measurements,
then it is called a standard design; otherwise,
it is called a repetitional design.
If $s(1) = \cdots = s(n)$, the repetitional design is said to be balanced; otherwise, it is unbalanced.
We stack the data in (\ref{eqt:model}) as follows:
\begin{align}
    Y=(Y_1, \ldots, Y_N)
        &=(
        \underbracket{Y_{1,1},\ldots,Y_{s(1),1}}_{Y_1, \ldots, Y_{s(1)}},\quad
        \ldots, \quad
        \underbracket{Y_{1,n},\ldots,Y_{s(n),n}}_{Y_{S(1,n-1)+1}, \ldots, Y_{N}}
        ) , \label{eqt:stackY} \\
    X=(X_1, \ldots, X_N)
        &= (\underbracket{X_{1,1},\ldots,X_{s(1),1}}_{\equiv X^\star_1},\quad\ldots,\quad \underbracket{X_{1,n},\ldots,X_{s(n),n}}_{\equiv X^\star_n}),
        \label{eqt:stackX}\\
    \epsilon=(\epsilon_1, \ldots, \epsilon_N)
        &= (\epsilon_{1,1},\ldots,\epsilon_{s(1),1},\quad\ldots,\quad\epsilon_{1,n},\ldots,\epsilon_{s(n),n}),\label{eqt:stackEpsilon}
\end{align}
where $N = S(1,n)$ is the sample size and $S(a,b)=\sum_{i=a}^bs(i)$.
The setup (\ref{eqt:model}) is applicable to univariate covariates;
see \S~\ref{section_extension} for the extension to the multivariate case.
Without loss of generality, assume that
$0\leq X^\star_1< \cdots<X^\star_n\leq 1$.

The estimation of the variance $\sigma^2$ is important in, e.g.,
construction of confidence intervals of the nonparametric regression function $g$ \citep{H1992},
local bandwidth selection \citep{WA1988},
and nonparametric tests \citep{Y1992, Z2009, KAP2015}.
In the absence of repeated measurements,
{the sample size is $n=N$.}
Typical estimators of $\sigma^2$ admit a quadratic form:
\begin{equation} \label{quadratic_form}
    \hat{\sigma}_{\text{QF}}^2 =Y^{\T}AY / \tr(A),
\end{equation}
where $A$ is an $n\times n$ positive definite and symmetric matrix
and $\tr(A)$ is the trace of $A$.
Estimators in the form of (\ref{quadratic_form}) in
the standard regression problem have a long history.
Examples include
the residual-based method \citep{W1978, CES1992, HC1989, HM1990, N1994, MSW2003} and
the difference-based method \citep{R1984, GSJ1986, HKT1990};
see
{\ifnum\isXr=1{\S~\ref{rem:resBasedVarEst}}\else{\S~B.2}\fi}
for a detailed review.
{This article studies the difference-based
estimators that
admit the form in (\ref{quadratic_form}) with
\begin{align}\label{eq:Wmat_diffseq}
    A=\DiffSeqMat^{\T}\DiffSeqMat
    \quad\text{and} \quad
    \DiffSeqMat
    = \begin{pmatrix}
        v_1 \cdots v_n
    \end{pmatrix}
    = \begin{pmatrix}
        d_{1,0}  & \cdots & d_{1,m} & \cdots & 0 & \cdots & 0 \\
        \vdots &  \ddots  & \vdots  & \cdots &\vdots &\ddots & \vdots \\
        0 & \cdots & 0 & \cdots &d_{n-m,0} & \cdots & d_{n-m,m}
    \end{pmatrix} ,
\end{align}}
where $d_{(i)}=(d_{i,0},\ldots,d_{i,m})^\T$\linelabel{def:d(i)}
is a $m$th order difference sequence for locally differencing
the data $Y_{i},\ldots,Y_{i+m}$ such that
\begin{align}
	\sum^m_{j=0}d_{i,j}=0 \label{diff_poly_cond0}
	\qquad \text{and} \qquad
	\sum^m_{j=0}d^2_{i,j}=1 .
\end{align}
If $(X_i)_{i=1}^n$ are equidistant,
it suffices to do global differencing by
using
$d_{(1)}=\cdots=d_{(n-m)} \equiv d$
with a common
$d= (d_0, \ldots, d_m)^{\T}$.
Generally, using $d_{(i)}$, we form the difference statistic
\begin{align}\label{eqt:def_D}
	D_i=\sum^m_{j=0}d_{i,j}Y_{i+j}=D^g_i+D^\epsilon_i,
         \quad \text{where} \quad
        D^g_i=\sum^m_{j=0}d_{i,j}g(X_{i+j})
        \quad  \text{and} \quad
        D^\epsilon_i=\sum^m_{j=0}d_{i,j}\epsilon_{i+j}
\end{align}
for $i=1, \ldots, n-m$.
Then $\hat{\sigma}_{\text{QF}}^2=\sum^{n-m}_{i=1}D^2_i/(n-m)$ can be rewritten as
\begin{align}\label{diff_est_general}
    \hat{\sigma}^2_{*}
    	=\frac{1}{n-m}\sum^{n-m}_{i=1}\left(\sum^m_{j=0}d_{i,j}Y_{i+j}\right)^2
	\quad\text{and}\quad
	\hat{\sigma}^2=\frac{1}{n-m}\sum^{n-m}_{i=1}\left(\sum^m_{j=0}d_jY_{i+j}\right)^2
\end{align}
if local differencing and global differencing are used, respectively.
The component $D_i^g$ is asymptotically negligible when $g(\cdot)$ is {Lipschitz continuous or high-order differentiable;
see Proposition \ref{prop:poly_cancelling} for the details.}
Note that $D^\epsilon_i$ serves as a pseudo noise
since $\E(D^\epsilon_i)=0$ and $\Var(D^\epsilon_i)=\sigma^2$.
As a result, $\hat{\sigma}^2_{*}$ and $\hat{\sigma}^2$ are weakly consistent for $\sigma^2$.
In the literature, there are two classes of difference sequences: ordinary and optimal.

The ordinary difference sequence,
which we call the bias-optimal difference sequence,
uses all free parameters in $d_{(i)}$ to perform differencing
\citep{R1984,GSJ1986,DMW1998}.
The variance-optimal difference sequence uses all free parameters in $d_{(i)}$ to minimize
$\MSE(\hat{\sigma}_{}^2) = \E\{(\hat{\sigma}_{}^2-\sigma^2)^2\}$ after doing the first-order differencing \citep{HKT1990}.
These two classes of sequences
and our proposal to be introduced
are summarized in Table \ref{table:diff_seq}.

Ideally, differencing removes the
nuisance
$D_i^g$ in (\ref{eqt:def_D})
so that $\hat{\sigma}_{*}^2$ measures solely the noise variability.
However, by Taylor's expansion of $g(x)$ at $x=X_i$,
\begin{align}\label{eq:TE_Di}
    D_i^g
    =
    	\sum_{h=0}^{m} E_h
		+o\big\{(X_{i+m}-X_{i})^{m}\big\} ,
\end{align}
where $E_h = g^{(h)}(X_i)/h!\sum_{j=0}^md_{i,j}(X_{i+j}-X_{i})^h.$
The $m$th order ordinary sequences ensure $E_0=\cdots = E_{m-1} = 0$
{due to the bias-correcting constraints to be introduced in (\ref{diff_poly_random_cond}) with $r=m-1$},
leading to a smaller bias, while
the variance is not optimized.
The optimal sequences ensure only $E_0=0$
from the first constraint in
(\ref{diff_poly_cond0}).
It creates a large bias,
although the variance is optimized.
So,
neither method satisfactorily balances bias and variance.
Example \ref{eg:compare} illustrates this.

\begin{figure}[t]
  \begin{center}
  \includegraphics[width=.8\textwidth]{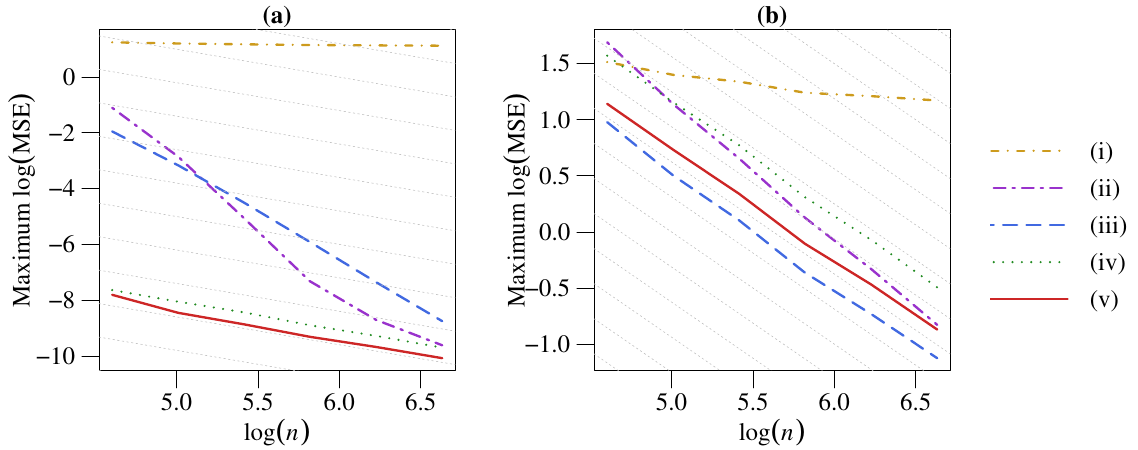}
  \vspace{-0.2cm}
  \caption{
  The maximum log MSE across 4 regression functions are plotted against $\log(n)$.
  The estimators (i)--(v) are defined in the example in \S~\ref{section_review_diff},
  where (v) is our proposal.
  Plots (a) and (b) correspond to $\sigma^2=0.1$ and $\sigma^2=10$, respectively.
  The reference dotted lines are of slope $-1$.
  Each experiment is performed with $2048$ replications.
  }
  \label{fig_compare_diff_seq}
  \end{center}
\end{figure}

\begin{example}\label{eg:compare}
Let
$g(x) = \sum^{k}_{s=0}\binom{k+\alpha}{k-s}\binom{k+\beta}{s} (x-1)^s(x+1)^{k-s}/2^k$,
i.e., a Jacobi polynomial,
with $\alpha=1$, $\beta=10$ and $k=2,4,6,8$.
Let $\epsilon_i\sim \Normal(0,\sigma^2)$ independently
with $\sigma^2=0.1,10$ and
$X_i = i/n$ ($i=1, \ldots, n$).
Five estimators are compared:
(i) the parametric estimator:
a residual-based estimator that fits
a quartic trend,
(ii) the nonparametric estimator: a residual-based estimator that fits a local linear regression with Epanechnikov kernel and bandwidth selected by {the R function \texttt{nprobust::lpbwselect} \citep{CCF2019}},
(iii) the variance-optimal third-order difference-based estimator \citep{HKT1990},
(iv) the bias-optimal third-order difference-based estimator, and
(v) the proposed third-order difference-based estimator with order $r=1$ to be introduced.
Figure \ref{fig_compare_diff_seq} shows their mean squared errors (MSE).
The variance-optimal and the bias-optimal estimators perform poorly when the signal-to-noise ratio is
high and low,
respectively.
The parametric and nonparametric estimators have poor performances in both cases.
Our proposal performs decently
regardless of the signal-to-noise ratio.
\end{example}

For regression with repeated measurements,
as far as we know, \citet{DMTZ2015} is the only existing approach on difference-based variance estimation,
despite that many real problems fall into this setting
\citep{GHD2010,L2015}.
They proposed three methods to utilize the repeated observations.
However, they are not optimally adapted to repeated data.
This motivates this research.

In this article, we contribute
in three aspects.
First, under some conditions, for any user-specified bias order, the variance-optimal difference sequence is derived for regular and random designs.
Second, an angular representation system of such optimal difference sequences is derived for fast computation.
Third, we propose difference-based variance estimators that work for repeated measurements. They achieve appealing theoretical and empirical results.

The rest of the article is organized as follows.
In \S~\ref{section_gds},
we introduce our general framework for defining our proposed difference
sequence.
In \S~\ref{sec:repeated},
we extend the settings to data with repeated measurements and propose a new differencing scheme adapted to repeated measurements.
\S~\ref{section_theory} states the theoretical results of our proposal estimators.
\S~\ref{section_algorithm} provides details for implementing our proposals.
In \S~\ref{section_extension},
we discuss some possible extensions of the proposal difference sequences, including multivariate data and time series data.
In \S~\ref{section_experiment},
we compare our proposed estimators with existing methods in simulation experiments.
We also apply one of our proposals in an existing test to reveal the impact of our proposed difference sequences in a simulation experiment and a real data example.

In summary,
to arrive at our final proposed estimator,
it involves six intermediate steps:
\begin{itemize}
    \item[(i)] $m$th order differencing for reducing variance, i.e., the existing $\hat{\sigma}^2_*$ and $\hat{\sigma}^2$ in (\ref{diff_est_general}),
    \item[(ii)] $r$th order bias correction for reducing bias, i.e., $\hat{\sigma}^2(m,r)$ in (\ref{eqt:hat_m_r}),
    \item[(iii)] local differencing for handling unequally spaced covariates, i.e., $\hat{\sigma}^2_*(m,r)$ in (\ref{eqt:hat_m_r_star}),
    \item[(iv)] intra-group differencing for handling repeated measurements, i.e., $\hat{\sigma}^2_\textsc{r}(m,r)$ in (\ref{eq:rep_diff_est}),
    \item[(v)] $V$-statistic aggregation for boosting efficiency, i.e., $\hat{\sigma}^2_{\textsc{r}+}(m,r)$ in (\ref{eq:repeated_vb}), and
    \item[(vi)] adjusting $m$ for handling imbalanced design, i.e., $\hat{\sigma}^2_{\textsc{r}+}(m,r)$ for $m\geq m_{\min}'$ in (\ref{eq:repeated_m_prime}).
\end{itemize}
Table \ref{table:diff_seq_summary} provides a summary of
of the above estimators.
For ease of referencing,
Table \ref{table:dictionary} provides a summary of notation that is used throughout the article.
The \texttt{R} function \texttt{dvar} for computing the proposed estimator can be found in the online supplementary materials.

\begin{table}[t]
    \def~{\hphantom{0}}
    \caption{\label{table:diff_seq_summary}
    Summary table of difference-based variance estimators.
    The higher the order $m\in\{1,2,\ldots\}$ the smaller the variance of the estimator.
    The higher the order $r\in\{0,1,\ldots,m-1\}$ the stronger the bias-correcting power.
    The adaptability and key properties of each proposal to various design points are also stated.
    We abbreviate the references as follows:
    R1984 \citep{R1984}, HKT1990 \citep{HKT1990},
    GSJ1986 \citep{GSJ1986}, and DMW1998 \citep{DMW1998}.
    }
    \setlength{\tabcolsep}{2pt}
    \begin{adjustbox}{width=\textwidth}
        \begin{tabular}{llllllll}
            \toprule
            &&& \multicolumn{3}{c}{Design} &\\
            \cline{4-6}
            Estimator & $m$ & $r$ & Regular & Random & Repetitional   & Key property\\[4pt]
            \midrule
            $\hat{\sigma}^2(m,r)$ & Any & $0$  & Yes & No& No & Variance-optimal
            (R1984, HKT1990)
            \\
            &Any & $m-1$  & Yes & No& No&
            Bias-optimal
            (GSJ1986, DMW1998)
            \\
            &Any & Any  & Yes & No& No& (New) High-order bias corrected and variance-optimal via (\ref{diff_poly_cond0}, \ref{diff_poly_cond})
            \\
            $\hat{\sigma}_{*}^2(m,r)$ &Any & Any & Yes& Yes & No& (New) Local differencing with angular representation  (Lemma \ref{lemma:diff_sol}) \\
            $\hat{\sigma}_{\Long}^2(m,r)$ & $\geq m_{\min}$ & Any  & Yes& Yes& Yes& (New) Interlacing intra-group and inter-group differencing via (\ref{eq:rep_diff_est})\\
            $\hat{\sigma}_{\Long+}^2(m,r)$ & $\geq m_{\min}$ & Any & Yes& Yes& Yes& (New) Efficiency boosting via V-statistics via (\ref{eq:repeated_vb})\\
            $\hat{\sigma}_{\Long+}^2(m,r)$ &$\geq m'_{\min}$ & Any  & Yes& Yes& Yes& (New)  Adaptive to imbalanced designs to boost efficiency by (\ref{eq:repeated_m_prime}) \\
            \bottomrule
        \end{tabular}
    \end{adjustbox}
\end{table}

{
\begin{table}[t]
    \def~{\hphantom{0}}
    \setlength{\tabcolsep}{2pt}
    \caption{\label{table:dictionary}Summary of notation.}
    \begin{adjustbox}{width=\textwidth}

        \begin{tabular}{lll}
            \toprule
             & Notation & Definition  \\
            \midrule
            \S 1 & $Y_{h,i}, X_{h,i}, \epsilon_{h,i}$ & The $h$th records of the response, covariate, and noise at the $i$th design point; see (\ref{eqt:model})\\
            & $Y_i,X_i,\epsilon_i$ & The $i$th elements in $Y,X,\epsilon$; see (\ref{eqt:stackY}), (\ref{eqt:stackX}), and (\ref{eqt:stackEpsilon}) \\
            & $n$ & Total number of distinct design points \\
             & $N$ & Total sample size\\
             & $s(i)$ & Number of measurements at the $i$th distinct design point; see line \ref{def:s(i)} on page \pageref{def:s(i)} \\
             & $\lambda_4=\E(\epsilon_{h,i}^4)$ & Fourth moment of the noises; see line \ref{def:lambda4} on page \pageref{def:lambda4}\\
             & $d_{(i)}=(d_{i,0},\ldots,d_{i,m})$ & Difference sequence of order $m$ for the data $(X_{i},\ldots,X_{i+m})$; see line \ref{def:d(i)} on page \pageref{def:d(i)}\\
             & $D_i$ & Difference statistic computed with $(X_i,\ldots,X_{i+m})$ and $d_{(i)}$; see (\ref{eqt:def_D}) \\
             & $A,V$ & The matrices that define the quadratic form of an estimator; see (\ref{quadratic_form}) and (\ref{eq:Wmat_diffseq}) \\ \bottomrule
            \S 3
             & $\mathcal{X}_i = \{ X_{i}, \ldots, X_{i+m} \}$ & Set of designs points of data in $D_i$; see line \ref{def:X_i} on page \pageref{def:X_i} \\
            & $\mathcal{X}^\star_i=\{ X_{i,1}^\star,\ldots,X_{i,m(i)+1}^\star \}$ & Set of distinct design points in $\mathcal{X}_i$; see line \ref{def:Xstari} on page \pageref{def:Xstari} \\
             & $m(i)=|\mathcal{X}^\star_i|-1$ & Actual diferencing order of $d_{(i)}$; see line \ref{def:m(i)} on page \pageref{def:m(i)} \\
            & $\psi(\cdot,h)$ & $h$th permutation function of the indices in $\{Y_1,\ldots,Y_N\}$; see line \ref{def:psi(dot,h)} on page \pageref{def:psi(dot,h)}\\
             & $\Psi$ & Set of all possible permutation functions $\psi(\cdot,h)$; see line \ref{def:Psi} on page \pageref{def:Psi}\\ \bottomrule
            \S 4 & $S_n$ & Total number of design points with single measurements; see line \ref{def:Sn} on page \pageref{def:Sn} \\
             & $m_{\min}$ & Minimal $m$ such that all $D_i$ are repetitional-bias-corrected; see line \ref{def:m_min} on page \pageref{def:m_min} \\
             & $m_{\min}'$ & Adjusted value of $m_{\min}$ for handling imbalanced design; see
             line \ref{def:m_min'} on page \pageref{def:m_min'}\\ \bottomrule
            \S 5 & $\mathcal{L}_i'$ & Reduced row echelon form of the matrix $\mathcal{L}_i$; see line \ref{def:Lrref} on page \pageref{def:Lrref}\\
            & $t(i,j)$ & Matching index such that $X_{i+j}=X^\star_{i,t(i,j)}$; see line \ref{def:t(i,j)} on page \pageref{def:t(i,j)}\\
            & $\Lambda_j(i)$ & Set of indices $j'\in\{0,\ldots,m\}$ such that $X_{i+j'}=X^\star_{i,1+j}$; see (\ref{def:Lambdaj(i)}) \\
            & $\mathcal{I}_m$ & Set of indices $i\in\{1,\ldots,N-m\}$ such that all design points in $\mathcal{X}_i$ are repeated; see (\ref{def:Im})\\
            & $\mathcal{Q}_i=\{t:X_t=X_i\}$ & Set of indices of design points that are equal to $X_i$; see line \ref{def:Qi} on page \pageref{def:Qi}\\
             & $\alpha(i)$ & Matching index such that $X_{\alpha(i)}^\star=X_i$; see line \ref{def:alpha(i)} on page \pageref{def:alpha(i)}\\
            \bottomrule
        \end{tabular}
    \end{adjustbox}

\end{table}
}

\section{High-order accurate differencing \label{section_gds}}
In this section,
we consider differencing for standard design, i.e., all design points $X$ are distinct.
Define $r\in\{0, 1, \ldots, m-1\}$ as a bias-correcting parameter.
The first constraint in (\ref{diff_poly_cond0}) can be generalized
to the $r$th order
bias correcting
constraints:
\begin{align}
    \sum^m_{j=0}d_{i,j}(X_{i+j}-X_{i})^h&=0 \qquad (h=1,\ldots,r), \label{diff_poly_random_cond} \\
    \sum^m_{j=0}d_jj^h &=0 \qquad (h=1,\ldots,r). \label{diff_poly_cond}
\end{align}
The constraints (\ref{diff_poly_random_cond}) and
(\ref{diff_poly_cond})
force $E_1=\cdots = E_r=0$ in (\ref{eq:TE_Di})
under arbitrary design and equidistant design, respectively.
A $m$th order difference sequence
(i.e., $d_{i,0}, \ldots, d_{i,m}$ in (\ref{diff_poly_random_cond})
or $d_0, \ldots, d_m$ in (\ref{diff_poly_cond}))
contains $m-1$ degrees of freedom in view of (\ref{diff_poly_cond0}).
These $m-1$ degrees of freedom can be divided to achieve two goals:
(a) $r$ of them are used to enhance robustness against trends, and
(b) $m-r-1$ of them are used to boost
efficiency.
Exhausting all the degrees of freedom on either side results in the inferior performance of the two traditional approaches demonstrated in Example \ref{eg:compare}.
Flexibly choosing the parameter $r$
opens up the possibility of striking a better balance between robustness and efficiency.
Hence, we propose estimators that satisfy
the high-order accurate differencing
schemes in (\ref{diff_poly_random_cond})--(\ref{diff_poly_cond})
with $0\leq r \leq m-1$.
{Recall that $d_{(i)}$
is a $m$th order difference sequence for locally differencing
the data $Y_{i},\ldots,Y_{i+m}$.}

\begin{assumption}[Two types of difference sequences]\label{ass:typeDiffSeq}
Let $m\in\mathbb{N}$ and $r\in\mathbb{N}_0$ such that $r+1\leq m$.
\begin{enumerate}[topsep=2pt,label={(\alph*)}]
    \item \label{assumption:design_adapt_con}
    The local difference sequences $d_{(1)}, \ldots, d_{(n-m)}$ satisfy
    (\ref{diff_poly_cond0}) and (\ref{diff_poly_random_cond}).

    \item \label{assumption:regular_con}
    The global difference sequence $d$ satisfies
    (\ref{diff_poly_cond0}) and (\ref{diff_poly_cond}).
\end{enumerate}

\end{assumption}

\begin{assumption}[Regular design]\label{assumption:basic}
    {$\max_{i=1, \ldots, n-1}|X_{i+1}-X_{i}| = O_p(n^{-1})$.}
\end{assumption}

Assumption \ref{ass:typeDiffSeq}\ref{assumption:regular_con} is equivalent to Assumption \ref{ass:typeDiffSeq}\ref{assumption:design_adapt_con}
either when
$X_{i}=i/n$ for $i=1,\ldots,n$
or $r=0$.
Assumption \ref{assumption:basic} regularizes a random design,
so that the gap between any consecutive $X_i$ and $X_{i-1}$
is small asymptotically.
It ensures that $D_i$ is defined with outcomes $Y_i, \ldots, Y_{i+m}$
corresponding to nearby covariates that are at most $O_p(n^{-1})$ apart from each other.
Otherwise, the difference statistic $D_i$
will not have the desired mean cancellation properties.
The order of magnitude $O_p(n^{-1})$
in Assumption \ref{assumption:basic} is sensible and mild.
By Theorem 5.1 in \citet{P1965}, Assumption \ref{assumption:basic} is satisfied if $(X_i)_{i=1}^n$ are generated from a cumulative distribution function $F$ whose $F^{-1}(u)$ is uniquely defined for $0<u<1$.
It can be verified in practice by inspecting the empirical distribution of $(X_i)_{i=1}^n$.
Assumption \ref{assumption:basic} includes the equidistant design as a special case.
Our proposed estimators are
\begin{align}
    \hat{\sigma}^2_{*}(m,r)
    &=\frac{1}{n-m}\sum^{n-m}_{i=1}\left(\sum^m_{j=0}d_{i,j}Y_{i+j}\right)^2, \quad\text{where $d_{i,j}$'s satisfy Assumption \ref{ass:typeDiffSeq}\ref{assumption:design_adapt_con},} \label{eqt:hat_m_r_star}\\
    \hat{\sigma}^2(m,r)
    &=\frac{1}{n-m}\sum^{n-m}_{i=1}\left(\sum^m_{j=0}d_jY_{i+j}\right)^2,\quad\text{where $d_j$'s satisfy Assumption \ref{ass:typeDiffSeq}\ref{assumption:regular_con}.} \label{eqt:hat_m_r}
\end{align}
The properties of difference statistics used in the above estimators
are as follows.

\begin{proposition}\label{prop:poly_cancelling}
    Suppose $(X_i)_{i=1}^n$ follows a random design.
    \begin{enumerate}
        \item Under Assumption \ref{ass:typeDiffSeq}\ref{assumption:design_adapt_con}, we have
        (a) $D_i^g=0$ if $g(x)=\sum_{\ell=0}^{r}a_\ell x^\ell$ for some $a_0, \ldots, a_r$; and
        (b) $D_i^g=O_p(n^{-r-1})$
        if $g^{(r+1)}$ exists and is continuous.
        \item Under Assumption \ref{ass:typeDiffSeq}\ref{assumption:regular_con}, we have
        (c) $D_i^g=0$ if $g(x)$ is a constant function; and
        {(d) $D_i^g=O_p(n^{-1})$ if $g(x)$ is continuously differentiable or Lipschitz continuous.}
    \end{enumerate}
\end{proposition}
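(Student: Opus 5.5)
The plan is to treat parts (a) and (c) as direct algebraic consequences of the constraints and parts (b) and (d) as Taylor-remainder bounds, using Assumption \ref{assumption:basic} to control the spacings. Throughout, write $\Delta_{i,j}=X_{i+j}-X_i$ for $j=0,\ldots,m$. Since $m$ is fixed, $0\le \Delta_{i,j}\le \sum_{k=0}^{m-1}|X_{i+k+1}-X_{i+k}|\le m\max_k|X_{k+1}-X_k| = O_p(n^{-1})$ uniformly in $i$ and $j$. The normalisation $\sum_j d_{i,j}^2=1$ gives $|d_{i,j}|\le 1$, so the coefficients never inflate the bounds.

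For (a), expand the polynomial around $X_i$: $g(X_{i+j})=\sum_{h=0}^r b_h \Delta_{i,j}^h$, where the $b_h$ depend on $X_i$ and $a_0,\ldots,a_r$ but not on $j$. Then $D_i^g=\sum_{h=0}^r b_h\sum_j d_{i,j}\Delta_{i,j}^h$. The $h=0$ term vanishes by the first constraint in (\ref{diff_poly_cond0}), and the terms $h=1,\ldots,r$ vanish by (\ref{diff_poly_random_cond}).

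For (b), apply Taylor's theorem with Lagrange remainder at $X_i$:
\[
g(X_{i+j})=\sum_{h=0}^{r}\frac{g^{(h)}(X_i)}{h!}\Delta_{i,j}^h+\frac{g^{(r+1)}(\xi_{i,j})}{(r+1)!}\Delta_{i,j}^{r+1},
\]
with $\xi_{i,j}$ between $X_i$ and $X_{i+j}$. The polynomial part is annihilated exactly as in (a). The remainder is bounded by $\sum_j|d_{i,j}|\,\|g^{(r+1)}\|_\infty\,\Delta_{i,j}^{r+1}/(r+1)!\le (m+1)\|g^{(r+1)}\|_\infty\{m\max_k|X_{k+1}-X_k|\}^{r+1}/(r+1)!$, which is $O_p(n^{-r-1})$. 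Here $\|g^{(r+1)}\|_\infty$ is the supremum over $[0,1]$ and is finite because the derivative is continuous on a compact set.

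For (c) and (d), the global sequence satisfies only $\sum_j d_j=0$ as a constraint that is usable for arbitrary designs. For (c), a constant $g\equiv c$ gives $D_i^g=c\sum_j d_j=0$. For (d), write $D_i^g=\sum_j d_j\{g(X_{i+j})-g(X_i)\}$ and bound each difference by $L\Delta_{i,j}$, where $L$ is a Lipschitz constant; a continuously differentiable $g$ on $[0,1]$ is Lipschitz with $L=\sup|g'|$ by the mean value theorem. This gives $|D_i^g|\le (m+1)Lm\max_k|X_{k+1}-X_k|=O_p(n^{-1})$. The higher constraints (\ref{diff_poly_cond}) are not used, because the $j^h$ do not match $\Delta_{i,j}^h$ under a random design.

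The only point that needs some care is uniformity: the $O_p$ should hold uniformly in $i$, and we need finiteness of the derivative or Lipschitz constant on the covariate range. Both follow from the bound on the maximal spacing and from compactness of $[0,1]$.
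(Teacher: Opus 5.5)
Your proposal is correct and follows essentially the same route as the paper: expand about $X_i$, let the sum-to-zero constraint and (\ref{diff_poly_random_cond}) annihilate the terms of order $0,\ldots,r$, and use a first-order or Lipschitz bound for the global sequence in (d). The differences are minor.
\begin{itemize}
\item You use a Lagrange remainder with a sup-norm bound on $[0,1]$, whereas the paper uses a Peano-type remainder and keeps the explicit leading term $\frac{g^{(r+1)}(X_i)}{(r+1)!}\sum_j d_{i,j}(X_{i+j}-X_i)^{r+1}$, which it reuses for $B(m,r)$ in Theorem~\ref{thm_diff_bias}.
\item You fold the $C^1$ case of (d) into the Lipschitz case.
\item You make the reliance on Assumption~\ref{assumption:basic} and the uniformity in $i$ explicit; the paper leaves both implicit.
\end{itemize}
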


{The bias-correcting parameter $r$ determines how close $D_i^g$ is to zero.
The magnitude of $D_i^g$ will then control the
biases of $\hat{\sigma}^2_{*}(m,r)$ and $\hat{\sigma}^2(m,r)$.
However, as stated in Proposition \ref{prop:poly_cancelling},
the bias correcting effect requires
that $g(\cdot)$ is sufficiently smooth.
In particular,
if the regression function $g(\cdot)$ is a $r$th order polynomial,
then differencing can perfectly cancel $g(\cdot)$.
If $g(\cdot)$ is only $r$th order differentiable,
then $g(\cdot)$ is canceled up to $r$th order.
If $g(\cdot)$ is continuously differentiable or Lipschitz continuous,
then $g(\cdot)$ is asymptotically negligible.
The proof of Proposition \ref{prop:poly_cancelling} relies on
approximating all $g(X_{i+j})$ by $g(X_{i})$ under the respective condition.}
The results of Proposition \ref{prop:poly_cancelling} (a) and (b) remain true even
under Assumption \ref{ass:typeDiffSeq}\ref{assumption:regular_con}
instead of Assumption \ref{ass:typeDiffSeq}\ref{assumption:design_adapt_con}
provided that $(X_i)_{i=1}^n$ is
equidistant.
We refer to those $d_{(i)}$ as
high-order bias correcting variance-optimal difference sequences.
We will show in \S~\ref{section_theory} that
the property of high-order accuracy leads to a high-order
corrected bias for variance estimator.

\section{Differencing for repeated data \label{sec:repeated}}
When repeated measurement exists,
applying $\hat{\sigma}^2_{*}(m,r)$
on (\ref{eqt:stackY})--(\ref{eqt:stackX})
is sub-optimal.
This is because
it does not utilize the property that $g(X_{1,i})=\cdots=g(X_{s(i),i})$ and
that repeated values in $(X_{i'})_{i\leq i'\leq i+m}$
reduce degrees of freedom for differencing;
see Example \ref{eg:diff_long} for an illustration.

\begin{example}\label{eg:diff_long}
{Let $(X_1,X_2,X_3,X_4)=(1,1,2,2)/n$ and $m=3$.
Note that $s(1)=2$ and $s(2)=2$.
The constraint (\ref{diff_poly_random_cond}) becomes
\begin{align*}
    \sum^3_{j=0}d_{1,j}(X_{1+j}-X_1)^h
    &=(d_{1,0}+d_{1,1})\left(\frac{0}{n}\right)^h+(d_{1,2}+d_{1,3})\left(\frac{1}{n}\right)^h
    =\sum^1_{j=0}d^\star_{1,j}(X^\star_{1+j}-X^\star_1)^h,
\end{align*}
where $(X^\star_1,X^\star_2)=(1,2)/n$,
$(d^\star_{1,0},d^\star_{1,1})=(d_{1,0}+d_{1,1},d_{1,2}+d_{1,3})$.
So, the two repeated values in $(X_1,X_2,X_3,X_4)$ reduce the effective differencing order from $m=3$ to $m=1$.
Since $r<m$,
this implies that
the bias correcting order $r$ must not be $1$ or $2$.
Consequently,
a larger $m$ is necessary for achieving a higher-order ($r\geq 1$) bias correcting effect
as in Proposition \ref{prop:poly_cancelling}.}
\end{example}

{
To address the problem in Example \ref{eg:diff_long},
we consider two cases to be defined below.
We handle them by (i) intra-group differencing and (ii) inter-group differencing.
Figure \ref{fig:sec:repeat:1} visualizes the concepts.

\begin{figure}[t]
        \centering
        \includegraphics[width=0.6\linewidth]{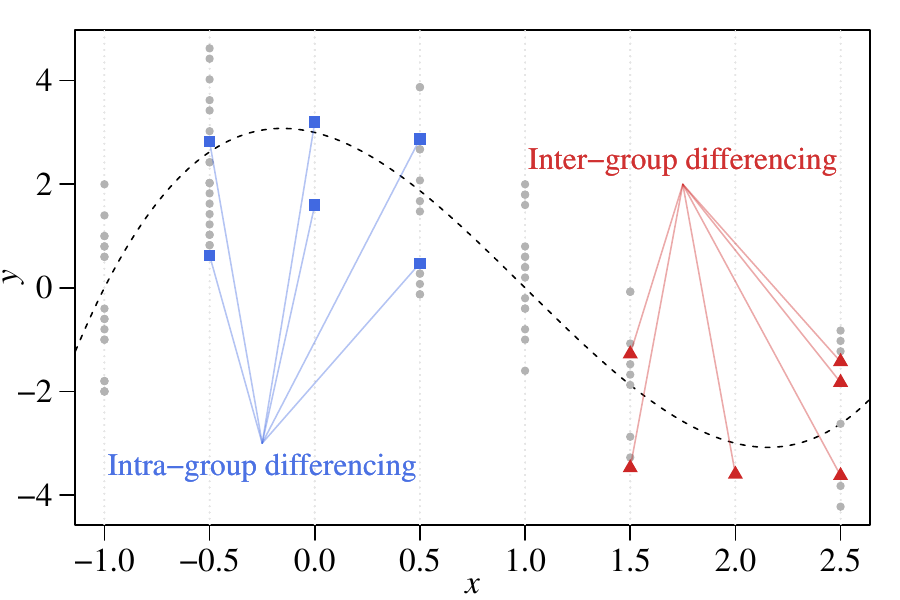}
        \caption{The nodes represent data $\{(X_i,Y_i)\}_{i=1}^{n}$, while the dashed curve denotes the regression function $y=g(x)$.
        The squares and triangles constitute
        two particular difference statistics
        corresponding to cases (i) and (ii), respectively.
        For case (i), intra-group differencing is applied to $D_i$
        by independently applying differencing to the three columns of squares.
        It cancels out the regression function $g(\cdot)$ exactly.
        In contrast,
        for case (ii),
        intra-group differencing is not feasible for the triangles because
        there is only one observation at $x=2$. Thus, inter-group differencing is needed.
        This involves averaging the observations in each column of triangles independently.
        The differencing method in \S~\ref{section_gds} can then be applied to
        the averages.
        }

        \label{fig:sec:repeat:1}
\end{figure}

First, suppose each distinct element appears at least twice in $\{X_{i},\ldots,X_{i+m}\}$.
We can perform intra-group differencing for each repeated measurement,
and then sum them up across distinct measurements.
To see it,
let $\mathcal{X}^\star_i=\{ X_{i,1}^\star,\ldots,X_{i,m(i)+1}^\star \}$\linelabel{def:Xstari}
be the set of distinct elements of $\mathcal{X}_i = \{X_{i},\ldots,X_{i+m}\}$\linelabel{def:X_i},
where $|\mathcal{X}^\star_i|=m(i)+1$ is the number of distinct elements in $\mathcal{X}^\star_i$ and $|\cdot|$ is the cardinality of a finite set.
Suppose that
each $X^\star_{i,k}$ appears $\Gamma_k-\Gamma_{k-1}$ times,
so we can partition $\{X_{i},\ldots,X_{i+m}\}$ as
\begin{align*}
        \underbracket{X_{i+\Gamma_0} , \ldots, X_{i+\Gamma_1-1}}_{=X^\star_{i,1}} , \quad
        \underbracket{X_{i+\Gamma_1} ,\ldots, X_{i+\Gamma_2-1}}_{=X^\star_{i,2}} , \quad\ldots, \quad
        \underbracket{X_{i+\Gamma_{m(i)}} ,\cdots, X_{i+\Gamma_{m(i)+1}-1}}_{=X^\star_{i,m(i)+1}} ,
\end{align*}
where $\Gamma_0=0$ and $\Gamma_{m(i)+1}=m+1$.
Performing intra-group differencing ensures that
\begin{align*}
        D^{ g}_i
        &=\underbracket{\sum_{j=0}^{m(i)}
        \Bigg\{ \underbracket{ g(X^\star_{i,j+1}) \sum_{k=\Gamma_j}^{\Gamma_{j+1}-1} d_{i,k}}_{\text{differencing}} \Bigg\}}_{\text{averaging}} =0,
\end{align*}
provided that $d_{i,0}, \ldots, d_{i,m}$ satisfy
\begin{align}
	\sum_{j=\Gamma_0}^{\Gamma_1-1}d_{i,j}
	= \sum_{j=\Gamma_1}^{\Gamma_2-1}d_{i,j}
	= \cdots
	= \sum_{j=\Gamma_{m(i)}}^{\Gamma_{m(i)+1}-1} d_{i,j}
	= 0 .\label{eqt:intra_diff}
\end{align}
In other words, the difference sequence $\{d_{i,0}, \ldots, d_{i,m}\}$
can be partitioned into $m(i)+1$ sub-sequences
$\{d_{i,\Gamma_0}, \ldots, d_{i,\Gamma_1-1}\},
\{d_{i,\Gamma_1}, \ldots, d_{i,\Gamma_2-1}\},\ldots,
\{d_{i,\Gamma_{m(i)}}, \ldots, d_{i,\Gamma_{m(i)+1}-1}\}$,
each of which is also a difference sequence.
Hence, it is called intra-group differencing.

Second,
suppose one of the elements in $\{X_{i},\ldots,X_{i+m}\}$ only appears once.
We perform weighted averaging for each repeated measurement and then perform differencing across them over the distinct design points.
Performing inter-group differencing ensures that
\begin{align}\label{eq:inter_group_diff}
        D^{ g}_i
        &=\frac{1}{c_i} \underbracket{\sum_{j=0}^{m(i)}
        \Bigg\{ \underbracket{ g(X^\star_{i,j+1})  \sum_{k=\Gamma_j}^{\Gamma_{j+1}-1} (c_id_{i,k})
          }_{\text{averaging}}
        \Bigg\}}_{\text{differencing}}
        =O_p(n^{-r-1}) ,
\end{align}
provided that $d_{i,0}, \ldots, d_{i,m}$ satisfy
\begin{align}
	\sum_{j=\Gamma_0}^{\Gamma_1-1}d_{i,j} = d_{i,0}^{\star}/c_i, \qquad
	\sum_{j=\Gamma_1}^{\Gamma_2-1}d_{i,j} = d_{i,1}^{\star}/c_i, \qquad \cdots , \qquad
	\sum_{j=\Gamma_{m(i)}}^{\Gamma_{m(i)+1}-1} d_{i,j} = d_{i,m(i)}^{\star}/c_i, \label{eqt:inter_diff}
\end{align}
where $(d_{i,0}^\star,\ldots,d_{i,m(i)}^\star)$ is a $(m(i),r)$th order difference sequence\linelabel{def:m(i)} satisfying Assumption \ref{ass:typeDiffSeq}, and
$c_i$ is a normalizing constant so that $\sum_{j=1}^md_{i,j}^2=1$.
The right-most equality in (\ref{eq:inter_group_diff}) follows from Proposition \ref{prop:poly_cancelling}.
So, the difference sequence $\{d_{i,0}, \ldots, d_{i,m}\}$
can be partitioned into $m(i)+1$ sub-sequences
$\{d_{i,\Gamma_0}, \ldots, d_{i,\Gamma_1-1}\},
\{d_{i,\Gamma_1}, \ldots, d_{i,\Gamma_2-1}\},\ldots,
\{d_{i,\Gamma_{m(i)}}, \ldots, d_{i,\Gamma_{m(i)+1}-1}\}$
so that the sub-sequence sums form a difference sequence after being normalized by $c_i$.

The $D_i$ whose difference sequence satisfies the following assumption
is said to be repetitional-bias-corrected.
\begin{assumption}[Repetitional adaptive differencing]\label{assumption:intra_inter_diff}
    Let $m\in\mathbb{N}$.
    \begin{enumerate}
        \item When $\min_{i\in\{1,\ldots,m(i)+1\}}(\Gamma_i-\Gamma_{i-1})>1$,
        the difference sequence $d_{(i)}$ satisfies (\ref{eqt:intra_diff}).
        \item When $\min_{i\in\{1,\ldots,m(i)+1\}}(\Gamma_i-\Gamma_{i-1})=1$,
        the difference sequence $d_{(i)}$ satisfies (\ref{eqt:inter_diff}).
    \end{enumerate}
\end{assumption}
Then our proposed repetitional adaptive difference-based estimator is
\begin{align}\label{eq:rep_diff_est}
    \hat{\sigma}^2_{\Long}(m,r)
    	=\frac{1}{N-m}\sum_{i=1}^{N-m}\left(\sum_{j=0}^md_{i,j}Y_{i+j}\right)^2, \quad\text{where $d_{i,j}$'s satisfy Assumption \ref{assumption:intra_inter_diff}.}
\end{align}

There are
$\prod_{i=1}^n \{s(i)!\}$
arrangements of $(Y_i,X_i)_{i=1}^N$
that define
identically distributed versions of $\hat{\sigma}^2_{\Long}(m,r)$
by the exchangeability of $Y_{1,i},\ldots,Y_{s(i),i}$ {for each $i$}.
Let $\Psi$\linelabel{def:Psi} be all such permutations of $\{Y_1,\ldots,Y_N\}$,
where only permutations within each repeated set
$\{Y_{1,i},\ldots,Y_{s(i),i}\}$ are allowed.
Denote $\psi(\cdot,h)\in\Psi$\linelabel{def:psi(dot,h)} as the $h$th permutation function.
So, $|\Psi|=\prod_{i=1}^n \{s(i)!\}$ and
$\{Y_{\psi(1,h)},\ldots,Y_{\psi(N,h)}\}$ is the $h$th permuted dataset.
We propose an efficiency-boosted estimator $\hat{\sigma}^2_{\VB}(m,r)$
by averaging all $|\Psi|$ versions of $\hat{\sigma}^2_{\Long}(m,r)$, where
\begin{align}\label{eq:repeated_vb}
    \hat{\sigma}^2_{\VB}(m,r)
    =\frac{1}{|\Psi|}\sum_{h=1}^{|\Psi|}
    \left\{\frac{1}{N-m}\sum_{i=1}^{N-m}\left(\sum_{j=0}^md_{i,j}Y_{\psi(i+j,h)}\right)^2
    \right\}.
\end{align}
Although $\hat{\sigma}^2_{\VB}(m,r)$ has the same bias properties as $\hat{\sigma}^2_{\Long}(m,r)$,
Theorem \ref{thm:repeat_varboost} in the next section suggests that
$\hat{\sigma}^2_{\VB}(m,r)$ is more precise than $\hat{\sigma}^2_{\Long}(m,r)$; see
Figure \ref{fig:sim_longitudinal} for simulation evidence.
The procedures for computing $\hat{\sigma}^2_{\Long}(m,r)$ and $\hat{\sigma}^2_{\VB}(m,r)$ are deferred to \S~\ref{sec:compute_repeat}.
}

\section{Theory \label{section_theory}}
In this section, all results are conditional upon
the design points
$X$.
Denote
\[
    B(m,r) = \frac{1}{n-m}\sum^{n-m}_{i=1}\Bigg\{\frac{g^{(r+1)}(X_{i})}{(r+1)!}\sum^m_{j=0}d_{i,j}(X_{i+j}-X_i)^{r+1}\Bigg\}^2+o_p(n^{-2r-2}) .
\]
The following theorem concerns standard designs, i.e., all design points $X_i$'s are distinct.
Two cases, random and equidistant $X_i$'s, are studied.

\begin{theorem}
    \label{thm_diff_bias}
    Let
    $m\in\mathbb{N}$ and $r\in\mathbb{N}_0$ such that $r+1\leq m$.
    If $(X_i)_{i=1}^n$ satisfies Assumption \ref{assumption:basic}
    and $g^{(r+1)}$ exists and is continuous,
    then $B(m,r) = O_p(n^{-2r-2})$ and the following results hold:
    Under a random standard design $(X_i)_{i=1}^n$,
                \[
                \Bias\left\{\hat{\sigma}^2_{*}(m,r)\right\}=B(m,r)
                \quad \text{and} \quad
                \Bias\left\{\hat{\sigma}^2(m,r)\right\}=B(m,0).
                \]
                If $g$ is a
                polynomial of degree $k\leq r$, then
                $\Bias\left\{\hat{\sigma}^2_{*}(m,r)\right\}=0$.
                If $g$ is a constant function, then
                $\Bias\left\{\hat{\sigma}^2(m,r)\right\}=0
                .$
    Under an equidistant standard design $(X_i)_{i=1}^n$,
                \[
                \Bias\left\{\hat{\sigma}^2_{*}(m,r)\right\}=B(m,r)
                \quad \text{and} \quad
                \Bias\left\{\hat{\sigma}^2(m,r)\right\}=B(m,r).
                \]
                If $g$ is a
                polynomial of degree $k\leq r$, then
                $\Bias\left\{\hat{\sigma}^2_{*}(m,r)\right\}=\Bias\left\{\hat{\sigma}^2(m,r)\right\}=0.$
\end{theorem}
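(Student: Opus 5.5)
The proof rests on an exact bias decomposition, followed by a Taylor expansion of the deterministic part. Conditional on $X$, we have $D_i = D_i^g + D_i^\epsilon$. Here $D_i^g$ is nonrandom, $\E(D_i^\epsilon)=0$, and by (\ref{diff_poly_cond0}) together with independence of the noises, $\Var(D_i^\epsilon)=\sigma^2\sum_j d_{i,j}^2=\sigma^2$. Hence $\E(D_i^2)=\sigma^2+(D_i^g)^2$. Averaging over $i$ gives the exact identity
\[
\Bias\{\hat\sigma^2_*(m,r)\}=\frac{1}{n-m}\sum_{i=1}^{n-m}(D_i^g)^2 ,
\]
and the same identity holds for $\hat\sigma^2(m,r)$ with $d_{i,j}$ replaced by $d_j$. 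This already settles the exact-zero claims. If $g$ is a polynomial of degree $k\le r$, Proposition \ref{prop:poly_cancelling}(a) gives $D_i^g=0$ for every $i$ under Assumption \ref{ass:typeDiffSeq}\ref{assumption:design_adapt_con}. If $g$ is constant, part (c) gives the same conclusion for the global sequence. In the equidistant case, Assumptions \ref{ass:typeDiffSeq}\ref{assumption:design_adapt_con} and \ref{ass:typeDiffSeq}\ref{assumption:regular_con} coincide, because $(X_{i+j}-X_i)^h=(j/n)^h$. So (\ref{diff_poly_cond}) implies (\ref{diff_poly_random_cond}), and the global estimator inherits all results for $\hat\sigma^2_*$, including $B(m,r)$ and zero bias for polynomials of degree at most $r$.

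Next I will identify the leading term. Expand $g(X_{i+j})$ by Taylor's theorem about $X_i$ to order $r+1$ with Peano remainder, as in (\ref{eq:TE_Di}):
\[
D_i^g=\sum_{h=0}^{r+1}\frac{g^{(h)}(X_i)}{h!}\sum_j d_{i,j}(X_{i+j}-X_i)^h+R_i .
\]
The constraints (\ref{diff_poly_cond0}) and (\ref{diff_poly_random_cond}) kill the terms $h=0,\ldots,r$. What remains is the leading term $L_i=\frac{g^{(r+1)}(X_i)}{(r+1)!}\sum_j d_{i,j}(X_{i+j}-X_i)^{r+1}$ plus the remainder $R_i$. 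The remainder satisfies $|R_i|\le \omega_i\,(X_{i+m}-X_i)^{r+1}\sum_j|d_{i,j}|/(r+1)!$, where $\omega_i$ is the oscillation of $g^{(r+1)}$ on $[X_i,X_{i+m}]$.

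Assumption \ref{assumption:basic} gives $\max_i(X_{i+m}-X_i)\le m\max_i|X_{i+1}-X_i|=O_p(n^{-1})$. We also have $|d_{i,j}|\le1$. Since $g^{(r+1)}$ is continuous on the compact set $[0,1]$, it is bounded and uniformly continuous. So $\max_i|L_i|=O_p(n^{-r-1})$ and $\max_i|R_i|=o_p(n^{-r-1})$, uniformly in $i$.

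Squaring, $(D_i^g)^2=L_i^2+2L_iR_i+R_i^2=L_i^2+o_p(n^{-2r-2})$ uniformly in $i$. Averaging over $i$ yields $\Bias=B(m,r)$ and $B(m,r)=O_p(n^{-2r-2})$.

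For the global sequence under a random design, only $h=0$ is killed. Repeating the expansion with $r$ replaced by $0$ gives a leading term of first order. However, the bias formula then involves the global $d_j$ in place of $d_{i,j}$, giving $B(m,0)$. Here I read $B(m,0)$ with $d_{i,j}\equiv d_j$, and this requires only $g'$ continuous, which follows from the existence and continuity of $g^{(r+1)}$.

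I expect the main obstacle to be the uniformity of the remainder control. The error must be $o_p(n^{-2r-2})$ after averaging, so I need the bound on $\max_i R_i$ over all $i$ simultaneously, not just for each fixed $i$. Uniform continuity of $g^{(r+1)}$ on $[0,1]$, combined with the max-gap bound in Assumption \ref{assumption:basic}, provides this. I will state it via the modulus of continuity evaluated at $m\max_i|X_{i+1}-X_i|$, which tends to $0$ in probability.
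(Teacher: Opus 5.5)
Your proposal is correct and follows essentially the same route as the paper. The paper uses the exact identity $\E(D_i^2)=\sigma^2+(D_i^g)^2$, which it cites from \citet{SGW1993} rather than deriving, and cancels the low-order Taylor terms via Proposition~\ref{prop:poly_cancelling}; it uses a first-order expansion for the global sequence under a random design, and the equivalence of Assumptions~\ref{ass:typeDiffSeq}\ref{assumption:design_adapt_con} and \ref{ass:typeDiffSeq}\ref{assumption:regular_con} under equidistance. Your uniform-in-$i$ control of the remainder, via the modulus of continuity of $g^{(r+1)}$ on $[0,1]$ and the max-gap bound, makes rigorous a step the paper only asserts.
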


Tables \ref{table:bias1}
summarizes the results of Theorem \ref{thm_diff_bias}.
Theorem \ref{thm_diff_bias} reduces to two existing results when taking
$r=0$ (i.e., variance optimal-estimator) and $r=m-1$ (i.e., bias-optimal estimator).
Corollary \ref{coro:equiv_diff_est}
shows the equivalence between
$\hat{\sigma}_*^2(m,r)$ and two traditional difference-based estimators.
Denote $\hat{\sigma}_\textsc{h}^2(m)$ and $\hat{\sigma}_\textsc{g}^2$ as the estimators
proposed by \citet{HKT1990} and \citet{GSJ1986}, respectively.
\begin{corollary}\label{coro:equiv_diff_est}
    Suppose Assumption \ref{ass:typeDiffSeq}\ref{assumption:design_adapt_con} holds, then we have (i) $\hat{\sigma}_*^2(m,0)\equiv\hat{\sigma}^2(m,0)\equiv\hat{\sigma}_\textsc{h}(m)$; and (ii) $\hat{\sigma}_*^2(2,1)\equiv\hat{\sigma}_\textsc{g}^2$.
\end{corollary}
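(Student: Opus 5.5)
Both parts reduce to showing that, for the stated $(m,r)$, the constraints (\ref{diff_poly_cond0}) and (\ref{diff_poly_random_cond}) pin down the difference sequences used by the existing estimators. Since each $D_i$ enters $\hat{\sigma}^2_*(m,r)$ only through $D_i^2$, each $d_{(i)}$ needs to be determined only up to a global sign. I read ``$\equiv$'' as identity of the estimators once the free degrees of freedom are spent on variance optimality, as in the construction of $\hat{\sigma}^2_*(m,r)$.

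\textbf{Part (i).} With $r=0$, the family of constraints (\ref{diff_poly_random_cond}) is empty. Assumption \ref{ass:typeDiffSeq}\ref{assumption:design_adapt_con} therefore imposes only (\ref{diff_poly_cond0}), and these two conditions do not involve the design points $X_{i},\ldots,X_{i+m}$. The admissible set for each $d_{(i)}$ is then the same for every $i$, and it coincides with the admissible set for the global sequence $d$ under Assumption \ref{ass:typeDiffSeq}\ref{assumption:regular_con} with $r=0$.

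Next I will write the variance of the quadratic form $\sum_i D_i^2/(n-m)$ in terms of the autocorrelations $\sum_j d_{i,j}d_{i,j+k}$ and the fourth moment $\lambda_4$, using independence of the $\epsilon_i$. I expect this criterion to depend on $d_{(i)}$ only through those autocorrelations, and to be identical in form for every $i$. Minimising it over the common admissible set is then exactly the optimisation problem of \citet{HKT1990}. Its minimiser is unique up to sign and reversal, and neither operation changes $D_i^2$ summed over $i$. Hence one may take $d_{(1)}=\cdots=d_{(n-m)}=d$, equal to the Hall--Kay--Titterington sequence, which gives $\hat{\sigma}_*^2(m,0)\equiv\hat{\sigma}^2(m,0)\equiv\hat{\sigma}_\textsc{h}(m)$.

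\textbf{Part (ii).} With $m=2$ and $r=1$, write $a_i=X_{i+1}-X_i$ and $b_i=X_{i+2}-X_{i+1}$. The linear constraints are
\[
d_{i,0}+d_{i,1}+d_{i,2}=0,\qquad d_{i,1}a_i+d_{i,2}(a_i+b_i)=0 .
\]
Under a standard design $a_i,b_i>0$, so these constraints have rank $2$ in $\mathbb{R}^3$ and their null space is spanned by the vector $(b_i,\,-(a_i+b_i),\,a_i)$. A direct cross-product check confirms this. The normalisation in (\ref{diff_poly_cond0}) then forces
\[
d_{(i)}=\pm\,(b_i,-(a_i+b_i),a_i)\big/\{a_i^2+b_i^2+(a_i+b_i)^2\}^{1/2}.
\]
Dividing numerator and denominator by $a_i+b_i$ recovers exactly the \citet{GSJ1986} pseudo-residual: its weights are $b_i/(a_i+b_i)$ and $a_i/(a_i+b_i)$, and it is scaled by $c_i^2=\{\alpha_i^2+\beta_i^2+1\}^{-1}$. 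Summing the squares and dividing by $n-2$ gives $\hat{\sigma}_*^2(2,1)\equiv\hat{\sigma}_\textsc{g}^2$. There are no free parameters left here, so no variance optimisation is needed.

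\textbf{Main obstacle.} The delicate step is in (i): I need to confirm that the variance criterion used to pick $d_{(i)}$ within the admissible set coincides with Hall et al.'s criterion. If it does, the minimiser is design-free and common to all $i$. I would settle this by citing their characterisation of the optimal sequence as the minimiser of $\sum_{k\ge1}\big(\sum_j d_jd_{j+k}\big)^2$ subject to (\ref{diff_poly_cond0}).
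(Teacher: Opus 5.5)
\textbf{Part (ii).} Your argument is correct, and it takes a more elementary route than the paper. The paper row-reduces $\mathcal{L}_i$ and feeds $L'_{i,1,3}$ and $L'_{i,2,3}$ into the arctan recursion of Lemma~\ref{lemma:diff_sol} to get $\phi_{i,2}$ and $\phi_{i,1}$. It then simplifies $\cos\phi_{i,1}$, $\sin\phi_{i,1}\cos\phi_{i,2}$ and $\sin\phi_{i,1}\sin\phi_{i,2}$ back to the \citet{GSJ1986} weights. You instead observe that the $2\times 3$ constraint matrix has a one-dimensional kernel spanned by $(b_i,-(a_i+b_i),a_i)$.

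Your route has three advantages:
\begin{itemize}
\item It is shorter and avoids angle and branch bookkeeping. In the paper's computation the angles come out negative, outside the range $[0,\pi]$ declared in the lemma.
\item It shows explicitly that \emph{every} sequence satisfying Assumption~\ref{ass:typeDiffSeq}\ref{assumption:design_adapt_con} with $(m,r)=(2,1)$ equals the GSJ weights up to sign.
\item It extends verbatim to any $m=r+1$, where the kernel is spanned by divided-difference weights.
\end{itemize}
What the paper's route buys is a consistency check of its angular parametrisation. Your $c_i^2$ is the reciprocal of the paper's, following GSJ's original convention; this is harmless.

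\textbf{Part (i).} Your argument ultimately rests on the same reasoning as the paper's proof. With $r=0$ the admissible set is design-free, and the criterion is $\delta(d)$, as in \citet{HKT1990}. Several of your supporting claims are false, however, and should be dropped rather than relied on.
\begin{itemize}
\item For local sequences the finite-sample criterion does not split into per-$i$ autocorrelations $\sum_j d_{i,j}d_{i,j+k}$. The off-diagonal part of $\tr(A^2)$ involves $\sum_j d_{i,j+\ell}d_{i+\ell,j}=\Cov(D^\epsilon_i,D^\epsilon_{i+\ell})/\sigma^2$, which couples $d_{(i)}$ with $d_{(i+\ell)}$. So nothing in your argument shows that the joint optimum uses a common sequence. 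The reduction to $\delta(d)$ comes from taking the $d_{(i)}$ equal, which is what the paper does implicitly in (\ref{simple_asy_var}).
\item Order reversal does change $\sum_i D_i^2$ numerically: the result is the same estimator applied to the time-reversed sample.
\item For larger $m$ (already at $m=4$), the minimisers of $\delta$ are not unique up to sign and reversal. Distinct spectral factors of the optimal autocorrelation $\rho_k=-1/(2m)$, $k=1,\ldots,m$, give genuinely different sequences.
\end{itemize}
So $\hat{\sigma}^2(m,0)\equiv\hat{\sigma}_\textsc{h}(m)$ holds only once the same minimiser as Hall et al.'s tabulated sequence is selected. The paper makes this choice tacitly, and your uniqueness claim cannot replace it.
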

When $r=0$, the estimator $\hat{\sigma}^2(m,0)$ is
the variance-optimal difference-based estimator proposed in \cite{HKT1990},
where $\Bias\{\hat{\sigma}^2(m,0)\} = O_p(n^{-2})$.
When $r=m-1$, the estimator $\hat{\sigma}^2_*(m,m-1)$
is the bias-optimal difference-based estimator in \citet{DMW1998},
where $\Bias\{\hat{\sigma}_{*}^2(m,m-1)\} = O_p(n^{-2m})$.

    \begin{table}[t]
    \setlength{\tabcolsep}{2pt}
    \caption{\label{table:bias1}The value of $\Bias\left\{\hat{\sigma}^2_{*}(m,r)\right\}$
    stated in Theorem \ref{thm_diff_bias}
    under various $g$, designs, and differencing methods.}
    \begin{adjustbox}{width=\textwidth}
        \begin{tabular}{ccccc}
            \toprule
             & \multicolumn{2}{c} {local differencing} & \multicolumn{2}{c} {global differencing}\\
            regression function $g$ & random design& equidistant design& random design& equidistant design\\
            \hline
            $r+1$ times continuously differentiable
            & $B(m,r)$ & $B(m,r)$ & $B(m,0)$ & $B(m,r)$\\
            polynomial of degree $r$ & $0$ & $0$ & $B(m,0)$ & $0$\\
            constant & $0$ & $0$ & $0$ & $0$\\
            \bottomrule
        \end{tabular}
    \end{adjustbox}

    \end{table}

We contribute for the cases where $0<r<m-1$.
Our proposed estimator $\hat{\sigma}_{*}^2(m,r)$ controls
the bias at order $O_p(n^{-2r-2})$, which is substantially smaller than $O_p(n^{-2})$.
So, our proposal performs better than the variance-optimal estimator $\hat{\sigma}^2(m,0)$ in finite samples,
where the bias contributed from the regression function has not yet diminished.
Meanwhile, our proposal reduces the cost of performing high-order bias correcting differencing
as we shall see in Theorem \ref{thm_diff_var} that the variance of our proposal
can be optimized and smaller than that of the bias-optimal estimator $\hat{\sigma}^2(m,m-1)$.
Theorem \ref{thm_diff_bias} also illustrates the difference between local and global differencing.
Global differencing only mitigates a part of the trend and the resulting bias remains the order of $O_p(n^{-2})$, irrespective of $r$.
Therefore, using a larger value of $r$ provides no advantage.
It is recommended to use $\hat{\sigma}_*^2(m,r)$ over $\hat{\sigma}^2(m,r)$ under a random design.

Then, we derive the variance of our estimators under standard designs.
Denote $\Diag(A)$ as the diagonal matrix with the same diagonal elements as matrix $A$
and $\lambda_h=\E\{(\epsilon_i/\sigma)^h\}$.
Let $B_1=\frac{1}{n-m}\sum_{i=1}^{n-m}D^g_i$,
$B_2=\frac{1}{n-m}\sum_{i=1}^{n-m}(D^g_i)^2$ and
$B_3=\frac{1}{n-m}\sum_{\ell=1}^m\sum_{i=1}^{n-m-\ell}D^g_iD^g_{i+\ell}$.

\begin{assumption}\label{assumption:Dgi_mean}
    For some $\kappa>0$,
    $B_1^2+B_2 + |B_3|=O_p(n^{-2\kappa})$.
\end{assumption}

The terms $B_1$ and $B_2$ are the first and second sample moments of the differenced regression function $D^g_i$,
whereas the term $B_3$ is the sum of the lagged cross moments of
$D^g_i D^g_{i+\ell}$ over the lags $\ell = 1, \ldots, m$.
They control the order of the cross terms in $\Var\{\hat{\sigma}_*^2(m,r)\}$
and $\Var\{\hat{\sigma}^2(m,r)\}$.
Assumption \ref{assumption:Dgi_mean} ensures that
$B_1$, $B_2$ and $B_3$ asymptotically contribute only $O_p(n^{-\kappa-1})$
to $\Var\{\hat{\sigma}_*^2(m,r)\}$ and $\Var\{\hat{\sigma}^2(m,r)\}$
in Theorem \ref{thm_diff_var}.
While the smoothness conditions mentioned in Theorem \ref{thm_diff_bias} are sufficient for Assumption \ref{assumption:Dgi_mean} to hold,
Assumption \ref{assumption:Dgi_mean} still holds under some weaker conditions, for example, under the existence of finitely many change points.
However,
Assumption \ref{assumption:Dgi_mean} may not hold when differencing fails to cancel a non-negligible portion of the regression function, for example, when there are infinitely many change points.
Further discussions on change points can be found in \ifnum\isXr=1{\S~\ref{section:cp}}\else{\S~B.3}\fi.

\begin{theorem}
    \label{thm_diff_var}
    Suppose $(X_i)_{i=1}^n$ satisfies Assumptions \ref{assumption:basic} and \ref{assumption:Dgi_mean}.
    Let
    $m\in\mathbb{N}$ and $r\in\mathbb{N}_0$ such that $r+1\leq m$.
    For $\tilde{\sigma}^2 \in \{ \hat{\sigma}_*^2(m,r), \hat{\sigma}^2(m,r)
    \}$,
    \begin{equation*}
        \Var(\tilde{\sigma}^2)
        =\frac{\sigma^4}{(n-m)^2}\theta(A,\lambda_4)+O_p(n^{-\kappa-1}),
    \end{equation*}
    {where
    $\theta(A,\lambda_4)=2\tr(A^2)+(\lambda_4-3)\tr\{A\Diag(A)\}$, and
    the matrix $A$ is defined in (\ref{eq:Wmat_diffseq}),
    whose value depends on the specfic form of $\tilde{\sigma}^2$.}
\end{theorem}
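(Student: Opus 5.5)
We write $\tilde{\sigma}^2 = Y^\T A Y/(n-m)$, where $A=\DiffSeqMat^\T\DiffSeqMat$ and $\tr(A)=n-m$ by the normalization in (\ref{diff_poly_cond0}). We then decompose $Y=G+\epsilon$, with $G=(g(X_1),\ldots,g(X_n))^\T$ treated as fixed because everything is conditional on $X$. This gives
\[
(n-m)\tilde{\sigma}^2 = G^\T A G + 2G^\T A\epsilon + \epsilon^\T A\epsilon .
\]
The first term is deterministic given $X$, so
\[
(n-m)^2\Var(\tilde{\sigma}^2)=\Var(\epsilon^\T A\epsilon)+4\Var(G^\T A\epsilon)+4\Cov(\epsilon^\T A\epsilon,G^\T A\epsilon).
\]

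\textbf{Step 1: the pure-noise term.} We use the standard formula for quadratic forms in i.i.d.\ centred variables:
\[
\Var(\epsilon^\T A\epsilon)=\sigma^4\{2\tr(A^2)+(\lambda_4-3)\tr(A\Diag(A))\}=\sigma^4\theta(A,\lambda_4).
\]
To derive it, expand $\E(\epsilon^\T A\epsilon)^2=\sum A_{ij}A_{kl}\E(\epsilon_i\epsilon_j\epsilon_k\epsilon_l)$ and split into the cases $i=j=k=l$ and the three pairings. Here $\lambda_4$ is read in its standardized form, as the paper redefines it before this theorem. This term yields the leading term exactly.

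\textbf{Step 2: the cross terms.} We have $G^\T A\epsilon = D^{g\T}\DiffSeqMat\epsilon$, where $D^g=(D^g_1,\ldots,D^g_{n-m})^\T$. Hence
\[
\Var(G^\T A\epsilon)=\sigma^2 D^{g\T}\DiffSeqMat\DiffSeqMat^\T D^g .
\]
The matrix $\DiffSeqMat\DiffSeqMat^\T$ is banded with bandwidth $m$, has unit diagonal, and has off-diagonal entries $\sum_j d_{i,j}d_{i+\ell,j-\ell}$ bounded by $1$ via Cauchy--Schwarz. Therefore this quadratic form is bounded by $\sigma^2(n-m)(B_2+2|B_3|)$, up to a careful treatment of the lagged sums. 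Strictly, $|B_3|$ controls only the signed sum, so I would bound each lag using $|D^g_iD^g_{i+\ell}|\le\{(D^g_i)^2+(D^g_{i+\ell})^2\}/2$, which reduces everything to $O((n-m)B_2)$.

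The covariance term equals $\sum_{i,j,k}A_{ij}(AG)_k\E(\epsilon_i\epsilon_j\epsilon_k)$, so only $i=j=k$ survives. It equals $\lambda_3\sigma^3\sum_k A_{kk}(\DiffSeqMat^\T D^g)_k$, which requires a finite third moment; this follows from $\lambda_4<\infty$. This is a linear combination of the $D^g_i$ with bounded weights. I would bound it either by Cauchy--Schwarz, giving $O\{(n-m)B_2^{1/2}\}$, or, better, by writing it through $B_1$: the weights $\sum_k A_{kk}\DiffSeqMat_{ik}$ are bounded, and for the global sequence they equal the constant $\sum_j d_j\sum_{j'}d_{j'}^2$ in the interior up to $O(m)$ boundary terms.

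\textbf{Step 3: assembly.} Dividing by $(n-m)^2$, the remainder is $O_p\{n^{-1}(B_2+|B_3|+|B_1|+B_2^{1/2})\}$. Assumption \ref{assumption:Dgi_mean} gives $|B_1|,B_2^{1/2}=O_p(n^{-\kappa})$, and $B_2$ and $|B_3|$ are even smaller. The remainder is therefore $O_p(n^{-\kappa-1})$, as claimed. Assumption \ref{assumption:basic} enters only to make the boundedness of the $D^g_i$ and the $O_p$ statements meaningful.

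\textbf{Main obstacle.} I expect the hardest step to be the third-moment covariance term with local differencing, where the weights $\sum_k A_{kk}\DiffSeqMat_{ik}$ vary with $i$. Reducing this term to $B_1$ needs an argument. The fallback is the Cauchy--Schwarz bound via $B_2^{1/2}$, which already gives $O_p(n^{-\kappa-1})$ under the assumption.
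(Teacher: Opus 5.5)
Your proof is correct and follows the paper's route. The decomposition $Y=G+\epsilon$ re-derives exactly the quadratic-form variance formula of \citet{SGW1993} that the paper cites, with the noise term $\sigma^4\theta(A,\lambda_4)$, the term $4\sigma^2G^\T A^2G$, and the third-moment term $4\lambda_3\sigma^3G^\T A\Diag(A)1_n$, and the mean-dependent terms are then bounded via Assumption~\ref{assumption:Dgi_mean}. Your bounding is actually tighter than the paper's: the paper appeals to the unweighted sums $B_1$ and $B_3$ even though the actual sums carry $i$-dependent weights $\sum_j d_{i,j+\ell}d_{i+\ell,j}$ and $\sum_j d_{i,j}a_{i+j,i+j}$, whereas your $|D^g_iD^g_{i+\ell}|\le\{(D^g_i)^2+(D^g_{i+\ell})^2\}/2$ and Cauchy--Schwarz fallback need only $B_2=O_p(n^{-2\kappa})$, which is equivalent to the assumption anyway because $B_1^2\le B_2$ and $|B_3|\le mB_2$.
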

Theorem \ref{thm_diff_var} is an application of the variance formula shown in \citet{SGW1993}.
It can be easily seen that $\Var\left\{\hat{\sigma}_*^2(m,r)\right\}=O_p(n^{-1})$.
Assumption \ref{assumption:Dgi_mean} states the minimal order of $B_1$, $B_2$ and $B_3$ so that $\sigma^4\theta(A,\lambda_4)/(n-m)^2$ is the sole leading term.
If $g(\cdot)$ is differentiable or piece-wise-differentiable, i.e., $g(\cdot)$ satisfies
\ifnum\isXr=1{(\ref{cp_trend}) in \S~\ref{section:cp}}\else{(B.3) in \S~B.3}\fi,
Assumption \ref{assumption:Dgi_mean} and thus Theorem \ref{thm_diff_var} holds.

Next, we study the properties of
$\hat{\sigma}^2_{\Long}(m,r)$
for repeated data.
For each $r$, denote
\begin{align*}
    \ell_1&=\left\{i\in [1,n-r+1]\cap\mathbb{Z}
    :
    \min_{i'=i,\ldots,i+r-1}s(i')>1
    \right\}
    ,\\
    \ell_2&=\left\{
    i\in[1,n-r]\cap\mathbb{Z}
    :
    \min_{i'=i,\ldots,i+r-1}s(i')=1
    \right\},\\
    \mathcal{S}_1&=\left\{S(i,i+r-1)\mathbb{1}(r>0)+1:i\in \ell_1\right\}, \\
    \mathcal{S}_2&=\left\{S(i,i+r)\mathbb{1}(r>0)+\mathbb{1}(r=0):i\in \ell_2\right\}.
\end{align*}
Let $S_n=\sum_{i=1}^n\mathbb{1}\left\{s(i)=1\right\}$\linelabel{def:Sn} and
$m_{\min}=\max(\mathcal{S}_1\cup\mathcal{S}_2).$\linelabel{def:m_min}
By definition, $m_{\min}\geq r+1$.
The elements in $\mathcal{S}_1$ and $\mathcal{S}_{2}$ are
the minimal $m$ such that $D_{\ell(i)}$ satisfies
the first result of Proposition \ref{prop:poly_cancelling} $(i\in\ell_1,\ell_2)$, respectively,
i.e. $D^{ g}_{\ell(i)}=O(n^{-r-1})$.
$\ell(i)$ is a matching index such that $Y_{\ell(i)}=Y_{1,i}$ $(i=1,\ldots,n)$.

\begin{theorem}\label{thm:long_diff_bias}
    {Assume that $\sup_{i=1,\ldots,N-m}m(i)<\infty$}
    and suppose that all assumptions in Theorem \ref{thm_diff_bias} hold.
    Let $(X_i)_{i=1}^N$ be
    either balanced or unbalanced repetitional designs
    with any $s(i)\geq 1$ ($i=1, \ldots, n$).
    If $m\geq m_{\min}$,
    then $D_i$ is repetitional-bias-corrected for $i=1,\ldots,N-m$
    and
    \begin{align}
        \Bias\big\{\hat{\sigma}^2_{\Long}(m,r)\big\}
        	&= O_p\left\{(n^{-2r-1}+n^{-2r-2}S_n m)/(N-m)\right\}, \label{eq:thm2:bias}\\
        \Var\left\{\hat{\sigma}_\Long^2(m,r)\right\}
        &=\frac{\sigma^4}{(N-m)^2}\theta(A,\lambda_4)+O_p\left\{\frac{n^{-r}m+n^{-r-1}S_nm^2}{(N-m)^2}\right\},\label{eq:thm2:var}
    \end{align}
    {where $A$ is defined in (\ref{eq:Wmat_diffseq})
    and its value depends on the difference sequence $\{d_{(i)}\}_{i=1}^{N-m}$.}
\end{theorem}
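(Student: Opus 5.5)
The plan has three parts: show that every window is repetitional-bias-corrected once $m\geq m_{\min}$, then bound the bias by splitting windows into two types, and finally reuse the quadratic-form variance computation behind Theorem \ref{thm_diff_var}.

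\textbf{Step 1: every window admits a corrected sequence.} Fix $m\geq m_{\min}$ and a window $\mathcal{X}_i=\{X_i,\ldots,X_{i+m}\}$, which consists of consecutive blocks of repeated design points.

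\emph{All blocks repeated.} If every distinct point in $\mathcal{X}_i$ appears at least twice, (\ref{eqt:intra_diff}) can be met by choosing a nonzero contrast inside each block. This needs only $\Gamma_k-\Gamma_{k-1}\geq 2$, and the sum of squares can be normalised to $1$. Then $D_i^g=0$ exactly.

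\emph{Some singleton present.} We need a $(m(i),r)$ difference sequence on the distinct points, which by the count in \S~\ref{section_gds} requires $m(i)\geq r+1$, i.e.\ at least $r+2$ distinct points in the window. By construction, the elements of $\mathcal{S}_2$ are the number of observations in $r+1$ consecutive groups (any window containing a singleton and reaching across $r+1$ distinct design points). Hence $m\geq\max\mathcal{S}_2$ forces $m(i)\geq r+1$. Similarly, $\mathcal{S}_1$ covers the transitional windows whose distinct points are all repeated except possibly a partially covered boundary block.

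This step is the main obstacle. It is a combinatorial check that the definitions of $\ell_1,\ell_2,\mathcal{S}_1,\mathcal{S}_2$ capture the worst window, including windows that cut a block at the boundary and thus create a ``singleton'' artificially. I would handle the boundary by treating a truncated block as a singleton group and using the $+1$ in $\mathcal{S}_1$.

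\textbf{Step 2: bias.} Conditional on $X$, we have $\E\{\hat{\sigma}^2_{\Long}(m,r)\}-\sigma^2=(N-m)^{-1}\sum_i (D_i^g)^2$. This holds because $\sum_j d_{i,j}^2=1$ and $\E(D_i^\epsilon)^2=\sigma^2$, exactly as in the proof of Theorem \ref{thm_diff_bias}. Split the indices $i$ into two types:
\begin{itemize}
\item[(a)] Windows of intra-group type contribute $0$.
\item[(b)] Windows of inter-group type satisfy, by (\ref{eq:inter_group_diff}) and Proposition \ref{prop:poly_cancelling}(b) applied to the distinct points (using Assumption \ref{assumption:basic} and $\sup_i m(i)<\infty$), $D_i^g=O_p(n^{-r-1})$. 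The remainder constant is bounded because $c_i$ is bounded above and below when $m(i)$ is bounded.
\end{itemize}
Each singleton design point lies in at most $m+1$ windows, so the windows containing a singleton number at most $O(S_n m)$. Together they contribute $O_p(n^{-2r-2}S_nm)$.

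The remaining inter-group windows are those cut at a block boundary without any true singleton. There are $O(n)$ such boundaries, each contributing $O_p(n^{-2r-2})$, which gives the $n^{-2r-1}$ term. Dividing by $N-m$ yields (\ref{eq:thm2:bias}).

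\textbf{Step 3: variance.} Write $\hat{\sigma}^2_{\Long}=Y^{\T}AY/(N-m)$ with $A=\DiffSeqMat^{\T}\DiffSeqMat$ built from the local sequences, and $Y=G+\epsilon$, where $G$ is the vector of regression values $g(X_i)$. Apply the quadratic-form variance formula of \citet{SGW1993}:
\begin{align*}
\Var(Y^{\T}AY)=\sigma^4\theta(A,\lambda_4)+4\sigma^2 G^{\T}A^2G+4\lambda_3\sigma^3 G^{\T}A\,\mathrm{diag}(A),
\end{align*}
where $\mathrm{diag}(A)$ denotes the vector of diagonal entries of $A$. The leading term gives the main part of (\ref{eq:thm2:var}).

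For the cross terms, note that $\DiffSeqMat G=(D_1^g,\ldots,D_{N-m}^g)^{\T}$ and $\DiffSeqMat$ is banded with bandwidth $m$. Then
\begin{itemize}
\item $G^{\T}A^2G=\|\DiffSeqMat^{\T}(D^g_i)_i\|^2\lesssim m\sum_i (D_i^g)^2$;
\item $|G^{\T}A\,\mathrm{diag}(A)|\lesssim m\sum_i|D_i^g|$.
\end{itemize}
Only the nonzero $D_i^g$ from Step 2 enter these sums. I would bound $\sum_i|D_i^g|=O_p(n\cdot n^{-r-1}+S_nm\,n^{-r-1})$, with the boundary windows counted as in Step 2. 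The first-order term dominates the squared one, so the cross terms are $O_p(n^{-r}m+n^{-r-1}S_nm^2)$. Dividing by $(N-m)^2$ gives (\ref{eq:thm2:var}).
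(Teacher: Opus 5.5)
Your proposal is correct and follows essentially the same route as the paper. Step 1 is the same window-coverage check: the paper organizes it through the ``leading'' windows $D_{\ell(i)}$ and then interpolates, and the truncated-first-block windows you flag are exactly those handled by the bound $m\geq S(i+1,i+r)+1$ that comes from index $i+1$. Steps 2 and 3 reproduce the paper's count $\mathcal{D}\leq 2(n-1)+S_n(m-1)$ of nonzero $D_i^g$ and its use of the quadratic-form variance formula, with the cross terms bounded by $O(m)\sum_i (D_i^g)^2$ and $O(m)\sum_i |D_i^g|$ (your operator-norm bound $\|V V^{\top}\|\sum_i (D_i^g)^2$ is only a slightly tidier way to get the paper's estimate of $g^{\top}A^2g$).
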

Theorem \ref{thm:long_diff_bias} implies that
$\Bias\big\{\hat{\sigma}^2_{\Long}(m,r)\big\}$ is negligible even when $n<\infty$, provided that $N\rightarrow\infty$ and {$S_nm/N\rightarrow0$}.
Moreover, (\ref{eq:thm2:var}) shows that the leading order terms of
the asymptotic variances of
$\hat{\sigma}^2_{\Long}(m,r)$,
$\hat{\sigma}^2_{*}(m,r)$ and $\hat{\sigma}^2(m,r)$ are equal.
If $s(i)>1$ for all $i$, the order
in (\ref{eq:thm2:bias})
becomes $O_p\left\{n^{-2r-1}/(N-m)\right\}$.
The presence of $D_i$ with $s(i)=1$
deteriorates $\hat{\sigma}^2_{\Long}(m,r)$ since their existence increases the number of non-zero $D^{ g}_i$ and leads to a
larger bias.
However, the simulations in Figure \ref{fig:sim_longitudinal} show that the increase in bias is insignificant.
We emphasize that our estimator and results apply also to equidistant and random designs and are free of tuning parameters.

Then, we derive the variance of $\hat{\sigma}^2_{\VB}(m,r)$.
Denote $\mathcal{Q}^\star_i=\{t:X_t=X_i^\star\}$.
Let
\begin{gather*}
    \mathcal{A}_n=
    \sum_{i=1}^n \frac{1}{s(i)} \mathop{\sum\sum}_{h,k\in\mathcal{Q}^\star_i} \frac{(v_k^\T v_k- v_h^\T v_h)^2}{2},\\
    \mathcal{B}_n=
    \mathop{\sum\sum}_{i,j=1}^n \frac{1}{s(i)\left\{s(j)-\mathbb{1}(i=j)\right\}}
    \sum_{k\in\mathcal{Q}^\star_i}\sum_{h\in\mathcal{Q}^\star_j:k\neq h}
    \sum_{p\in\mathcal{Q}^\star_i}\sum_{q\in\mathcal{Q}^\star_j:q\neq p}
    \frac{(v_k^\T v_h -v_p^\T v_q)^2}{2},
\end{gather*}
where $v_k$ is the $k$th column vector of the matrix $\DiffSeqMat$ from $\hat{\sigma}^2_{\Long}(m,r)=Y^\T V^\T V Y/ \tr(V^\T V)$.

\begin{theorem}\label{thm:repeat_varboost}
    Suppose that all assumptions in Theorem \ref{thm_diff_bias} hold.
    Let $(X_i)_{i=1}^N$ be
    either balanced or unbalanced repetitional designs
    with any $s(i)\geq 1$ ($i=1, \ldots, n$).
    (i) If $m\geq m_{\min}$, then
    \begin{align*}
        \Var\left\{\hat{\sigma}_\VB^2(m,r)\right\}
        &=\Var\left\{\hat{\sigma}_\Long^2(m,r)\right\}
        -\frac{\sigma^4\left\{\mathcal{A}_n(\lambda_4-1)+2\mathcal{B}_n\right\}}{(N-m)^2}
        +O_p\left\{\frac{n^{-r}m+n^{-r-1}S_nm^2}{(N-m)^2}\right\}  .
    \end{align*}
    (ii) If $n=1$ and $0<m\leq N-1$, then
    $$
        \hat{\sigma}_\VB^2(m,r)\equiv\frac{1}{N-1} \sum_{i=1}^N( Y_i-\bar{Y} )^2,
        \qquad \text{where} \qquad
        \bar{Y} = \sum_{j=1}^NY_{j}/N.
    $$
\end{theorem}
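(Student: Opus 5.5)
The plan is to write $\hat{\sigma}^2_{\VB}(m,r)$ as a single quadratic form. Each permuted estimator in (\ref{eq:repeated_vb}) equals $Y^\T P_h^\T V^\T V P_h Y/\tr(V^\T V)$, where $P_h$ is the permutation matrix of $\psi(\cdot,h)$. Hence $\hat{\sigma}^2_{\VB}(m,r)=Y^\T \bar A Y/(N-m)$ with $\bar A=|\Psi|^{-1}\sum_h P_h^\T A P_h$ and $A=V^\T V$. Because $\Psi$ is the direct product of the symmetric groups acting on the index blocks $\mathcal{Q}^\star_i$, averaging over $\Psi$ replaces each entry of $A$ by its block average. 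Concretely:
\begin{itemize}
\item a diagonal entry with $k\in\mathcal{Q}^\star_i$ becomes $a_i=s(i)^{-1}\sum_{k\in\mathcal{Q}^\star_i}v_k^\T v_k$;
\item an off-diagonal entry $(k,h)$ with $k\in\mathcal{Q}^\star_i$, $h\in\mathcal{Q}^\star_j$, $k\neq h$, becomes the average of $v_p^\T v_q$ over $p\in\mathcal{Q}^\star_i$, $q\in\mathcal{Q}^\star_j$, $p\neq q$.
\end{itemize}
There are $s(i)\{s(j)-\mathbb{1}(i=j)\}$ such pairs, which is exactly the normaliser appearing in $\mathcal{B}_n$. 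The trace is preserved, so the normalisation $N-m$ is unchanged.

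For part (i), the bias terms behave exactly as in Theorem \ref{thm:long_diff_bias}, because each permuted version is repetitional-bias-corrected. Indeed, $\psi$ permutes only within groups, so the $g$-part $g(X)$ is invariant. The cross terms involving $D^g$ therefore give the same remainder $O_p\{(n^{-r}m+n^{-r-1}S_nm^2)/(N-m)^2\}$ after averaging, by convexity and Cauchy--Schwarz. The leading term comes from Theorem \ref{thm_diff_var} (the \citet{SGW1993} formula) applied to $\bar A$, giving $\sigma^4\theta(\bar A,\lambda_4)/(N-m)^2$. It remains to show
\[
\theta(A,\lambda_4)-\theta(\bar A,\lambda_4)=(\lambda_4-1)\mathcal{A}_n+2\mathcal{B}_n .
\]
Split $\tr(A^2)=\sum_k A_{kk}^2+\sum_{k\neq h}A_{kh}^2$ and $\tr\{A\Diag(A)\}=\sum_kA_{kk}^2$. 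Then
\[
\theta(A,\lambda_4)=(\lambda_4-1)\sum_kA_{kk}^2+2\sum_{k\ne h}A_{kh}^2 .
\]
Within each averaging class, the sum of squares minus the sum of squared averages equals the within-class variance, $\sum x_k^2-n\bar x^2=(2n)^{-1}\sum_{k,h}(x_k-x_h)^2$. Applied to the diagonal classes this yields $\mathcal{A}_n$. Applied to the off-diagonal classes indexed by the pair $(i,j)$ it yields $\mathcal{B}_n$, after tracking that the double sum over $(k,h)$ ranges over $s(i)\{s(j)-\mathbb{1}(i=j)\}$ elements. The main obstacle is getting these combinatorial normalisations right, especially for $i=j$ with the exclusion $p\ne q$. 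I would check it on a small example before finalising.

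For part (ii), with $n=1$ all $N$ indices form one group, so $\Psi$ is the full symmetric group. Then $\bar A=\alpha I+\beta J$ with $J=\mathbf{1}\mathbf{1}^\T$. Every difference sequence satisfies $\sum_jd_{i,j}=0$, so $V\mathbf{1}=0$ and hence $\bar A\mathbf{1}=0$, which gives $\alpha=-N\beta$. Therefore $\bar A\propto I-J/N$. Since $\tr(\bar A)=\tr(A)=N-m$, we get
\[
Y^\T\bar AY/\tr(\bar A)=Y^\T(I-J/N)Y/(N-1),
\]
which is the sample variance. Note that with $n=1$ every group is repeated, so intra-group differencing (\ref{eqt:intra_diff}) applies for any $0<m\le N-1$ and the condition $\sum_j d_{i,j}=0$ indeed holds.
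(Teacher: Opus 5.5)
Your part (i) is essentially the paper's proof. You identify $\bar A$ as the block average of $A$ (this is (\ref{eq:vb_Amat})) and apply the \citet{SGW1993} formula to $\bar A$ with the unchanged trace $N-m$. You then use $\sum_k x_k^2-s\bar x^2=(2s)^{-1}\sum_{k,h}(x_k-x_h)^2$ to get $\tr(\bar A^2)=\tr(A^2)-\mathcal{A}_n-\mathcal{B}_n$ and $\tr\{\bar A\Diag(\bar A)\}=\tr\{A\Diag(A)\}-\mathcal{A}_n$. The $g$-terms are controlled through the $\Psi$-invariance of $g$ and Cauchy--Schwarz. Your normaliser $s(i)\{s(j)-\mathbb{1}(i=j)\}$ is correct. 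For the third-moment term you also need $\bar a_{kk}\le m+1$, which follows immediately from $a_{kk}\le m+1$.

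Part (ii) is where you genuinely differ. The paper works directly from (\ref{eq:repeated_vb}). It expands each squared difference statistic and exchanges the sum over $\Psi$ with a sum over ordered tuples of distinct indices, each tuple being hit by $(N-m-1)!$ permutations. It then collapses the result using $\sum_j d_{i,j}^2=1$ and $\sum_{u<t}d_{i,u}d_{i,t}=-1/2$. You instead observe that $\bar A$ is invariant under conjugation by every permutation matrix, so it lies in $\mathrm{span}\{I,J\}$. Then $V\mathbf{1}=0$ and $\tr(\bar A)=N-m$ pin it down as $\{(N-m)/(N-1)\}(I-J/N)$.

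Your route is shorter, avoids the factorial bookkeeping, and reuses the quadratic-form representation from part (i). It also makes clear that only the two normalisations in (\ref{diff_poly_cond0}) matter. The paper's computation is more elementary and shows explicitly that each single window $D_i$, averaged over $\Psi$, already reproduces the sample variance. Your argument gives this too if you apply it to one row of $V$ at a time.
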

Theorem \ref{thm:repeat_varboost} implies that
$\Var\{\hat{\sigma}_\VB^2(m,r)\}\leq\Var\{\hat{\sigma}_\Long^2(m,r)\}$
asymptotically because $\lambda_4\geq 1$ by Jensen's inequality, $\mathcal{A}_n\geq 0$, and $\mathcal{B}_n\geq 0$.
The amount of efficiency boosted by $\hat{\sigma}_\VB^2(m,r)$ depends on $\mathcal{A}_n$ and $\mathcal{B}_n$.
The expressions $\mathcal{A}_n$ and $\mathcal{B}_n$ can be
interpreted as
weighted sum of sample variances in the form of V-statistics:
\begin{gather*}
    \mathcal{A}_n=
    \sum_{i=1}^ns(i)\frac{1}{s(i)^2}\sum_{k\in\mathcal{Q}^\star_i}\sum_{h\in\mathcal{Q}^\star_i}\frac{(v_k^\T v_k- v_h^\T v_h)^2}{2},\\
    \mathcal{B}_n=
    \sum_{i=1}^n\sum_{j=1}^n
    \frac{s(i)\left\{s(j)-\mathbb{1}(i=j)\right\}}{s(i)^2\left\{s(j)-\mathbb{1}(i=j)\right\}^2}\sum_{k\in\mathcal{Q}^\star_i}\sum_{h\in\mathcal{Q}^\star_j:k\neq h}\sum_{p\in\mathcal{Q}^\star_i}\sum_{q\in\mathcal{Q}^\star_j:p\neq q}\frac{(v_k^\T v_h -v_p^\T v_q)^2}{2}.
\end{gather*}
From the above expressions, $\mathcal{A}_n$ and $\mathcal{B}_n$
are the weighted sum of sample variances of $(v_k^\T v_k)_{k\in\mathcal{Q}^\star_i}$ and $(v_k^\T v_h)_{k\in\mathcal{Q}^\star_i,h\in\mathcal{Q}^\star_j:k\neq h}$ for $i,j=1,\ldots,n$, respectively.
So, the more dispersed the $v_k$'s are
the more precise the $\hat{\sigma}_\VB^2(m,r)$ is.
When $n=1$, i.e., $X_1=\cdots=X_N$,
$\hat{\sigma}_\VB^2(m,r)$ does not depend on $r$ since intra-group differencing is applied to all difference statistics.
Also, $\hat{\sigma}_\VB^2(m,r)$ does not depend on $m$ since it reduces to the sample variance.

Besides, when the repetitional design is extremely imbalanced, for example, $s(i)=5$ for all $i$ except $s(1)=100$, the lower bound $m_{\min}$ stated in Theorem \ref{thm:long_diff_bias} may be too large and lead to possibly increased bias.
To address such issue, we propose to exclude the $s(i)$ that are larger than a threshold $s_{\max}$,
i.e.,
computing $m_{\min}$ based on
$\{s'(i)\}_{i=1}^{n-|\mathcal{T}_s|} \equiv \{s(i)\}_{i\in [1,\ldots,n]\cap\mathbb{Z}\setminus \mathcal{T}_s},$
where ${\mathcal{T}_s}=\{i\in[1,\ldots,n]\cap\mathbb{Z}:s(i)>s_{\max}\}.$
Denote the adjusted lower bound for $m$ as $m'_{\min}$.
After that, set $D_i=0$ if $D_i$ is no longer
repetitional-bias-corrected with $m\geq m'_{\min}$.
In other words, set $(d_{i,0},\ldots,d_{i,m})^\T=(0,\ldots,0)^\T$ if $m(i)<r+1$.
More explicitly,
define
\begin{align*}
    \ell'_1&=\left\{i\in\{1,\ldots,n-|\mathcal{T}_s|-r+1\}
    :
    \min_{i'=i,\ldots,i+r-1}s'(i')>1
    \right\}
    ,\\
    \ell'_2&=\left\{
    i\in\{1,\ldots,n-|\mathcal{T}_s|-r\}
    :
    \min_{i'=i,\ldots,i+r-1}s'(i')=1
    \right\},\\
    \mathcal{S}'_1&=\left\{S'(i,i+r-1)\mathbb{1}(r>0)+1:i\in \ell'_1\right\},\\
    \mathcal{S}'_2&=\left\{S'(i,i+r)\mathbb{1}(r>0)+\mathbb{1}(r=0):i\in \ell'_2\right\},
\end{align*}
where $S'(i,i+a)=\sum_{k=i}^{i+a}s'(k)$.
Then $m'_{\min}=\max(\mathcal{S}'_1\cup\mathcal{S}'_2).$
\linelabel{def:m_min'}
The proposed estimator becomes
\begin{align}\label{eq:repeated_m_prime}
    \hat{\sigma}^2_{\Long}(m'_{\min},r)=\frac{1}{|\mathcal{I}|}\sum_{i\in\mathcal{I}}D_i^2,
\end{align}
where the set $\mathcal{I}$ consists of indices $i$ such that $D_i$ is repetitional-bias-corrected.
Our suggested rule of thumb is to set $s_{\max}$ as
$s_{0.95}$,
the $95\%$ upper quantile of $\{s(i)\}_{i=1}^n$.
Corollary \ref{coro:long_adj_bias}
states that the asymptotic properties are the same as in Theorem \ref{thm:long_diff_bias}.
\begin{corollary}\label{coro:long_adj_bias}
    Suppose that all assumptions in Theorem \ref{thm:long_diff_bias} holds.
    Assume further that
    $S'_n/N\rightarrow 0$,
    where
    \begin{align*}
        S'_n=\sum_{i\in\mathcal{T}_s}\sum_{a=1}^rs(i-a)^{b(i,-,a)}+s(i+a)^{b(i,+,a)}
        \quad \text{and}\quad
        b(i,\pm,a)=\mathbb{1}\left\{\min_{k=i\pm 1,\ldots,i\pm a}s(k)=1\right\}.
    \end{align*}
    If $m\geq m'_{\min}$,
    then the bias of $\hat{\sigma}^2_\Long(m'_{\min},r)$ is of the order stated in (\ref{eq:thm2:bias}) and its variance is asymptotically equal to the right hand side of (\ref{eq:thm2:var}).
\end{corollary}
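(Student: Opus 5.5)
The plan is to reduce Corollary \ref{coro:long_adj_bias} to the bias and variance calculations behind Theorem \ref{thm:long_diff_bias}. The trimmed estimator $\hat{\sigma}^2_{\Long}(m'_{\min},r)=|\mathcal{I}|^{-1}\sum_{i\in\mathcal{I}}D_i^2$ is again a quadratic form $Y^\T A' Y/\tr(A')$, where $A'=V'^\T V'$ and $V'$ is the matrix $V$ with the rows $i\notin\mathcal{I}$ set to zero. By construction every retained $D_i$ is repetitional-bias-corrected, so Assumption \ref{assumption:intra_inter_diff} holds on $\mathcal{I}$.

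\textbf{Step 1 (bias).} For the bias I will decompose $\E(D_i^2)=\sigma^2+(D_i^g)^2$ for $i\in\mathcal{I}$, using $\E(D^\epsilon_i)=0$ and $\Var(D^\epsilon_i)=\sigma^2$. Hence
\[
\Bias=|\mathcal{I}|^{-1}\sum_{i\in\mathcal{I}}(D_i^g)^2 .
\]
Exactly as in the proof of Theorem \ref{thm:long_diff_bias}, I split $\mathcal{I}$ into two types of index:
\begin{itemize}
\item[(a)] intra-group differenced $D_i$, for which $D_i^g=0$ by (\ref{eqt:intra_diff});
\item[(b)] inter-group differenced $D_i$, for which $D_i^g=O_p(n^{-r-1})$ by (\ref{eq:inter_group_diff}) and Proposition \ref{prop:poly_cancelling}.
\end{itemize}
The windows of type (b) that involve a single-measurement design point number $O(S_n m)$, which again gives the term $n^{-2r-2}S_nm$. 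The remaining nonzero contributions are boundary effects near large groups, giving the $n^{-2r-1}$ term as before. The only new ingredient is the normalisation $|\mathcal{I}|$ in place of $N-m$. So I will show $N-m-|\mathcal{I}|=O(S'_n)$: a window is discarded only when it straddles a trimmed group $i\in\mathcal{T}_s$ together with up to $r$ neighbouring groups on each side. Neighbours reached through singletons contribute $1$, and otherwise they contribute their size, which is exactly what the exponent $b(i,\pm,a)$ in $S'_n$ counts. With $S'_n/N\to0$ this gives $|\mathcal{I}|/(N-m)\to1$, so (\ref{eq:thm2:bias}) is preserved.

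\textbf{Step 2 (variance).} I will apply the quadratic-form variance formula of \citet{SGW1993}, used for Theorem \ref{thm_diff_var}, to $A'$. The leading term is $\sigma^4\theta(A',\lambda_4)/|\mathcal{I}|^2$. The cross terms involve $B_1,B_2,B_3$ computed over $\mathcal{I}$, and they are bounded just as in (\ref{eq:thm2:var}) because each retained $D_i^g$ satisfies the same bounds as in Step 1. The remaining task is to compare $\theta(A',\lambda_4)/|\mathcal{I}|^2$ with $\theta(A,\lambda_4)/(N-m)^2$. Zeroing a row of $V$ changes $\tr(A^2)$ and $\tr\{A\Diag(A)\}$ by $O(m)$ per removed row, since each row interacts with at most $2m+1$ others through banded overlaps with bounded entries. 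The total change is therefore $O(mS'_n)=o(N)$, while both traces are of exact order $N$. Together with $|\mathcal{I}|\sim N-m$, this shows that the ratio tends to $1$, which is the claimed asymptotic equality.

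\textbf{Main obstacle.} The hardest part will be the combinatorial count in Step 1: showing that the number of non-repetitional-bias-corrected windows induced by the exclusion is bounded by a constant multiple of $S'_n$, uniformly in $m\ge m'_{\min}$. I would first attempt a direct case analysis on whether the path from $i\in\mathcal{T}_s$ to $i\pm a$ passes through a singleton group. I would fall back on the crude bound $O\{(2r+1)\max s\cdot|\mathcal{T}_s|\}$ if the refined count proves awkward.
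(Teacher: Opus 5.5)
Your plan follows the paper's proof. You write the bias as $|\mathcal{I}|^{-1}\sum_{i\in\mathcal{I}\setminus\mathcal{I}_m}(D_i^g)^2$, count the nonzero terms as in Theorem~\ref{thm:long_diff_bias}, and reduce everything to bounding the number $\mathcal{D}=N-m-|\mathcal{I}|$ of discarded windows. The bound is $\sum_{i'\in\mathcal{T}_s}|\mathcal{U}_{i'}|\le S'_n$, where $\mathcal{U}_{i'}$ is the set of windows spoiled by trimming group $i'$. You then rebound the \citet{SGW1993} terms with sums over $\mathcal{I}$.

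\textbf{The gap.} In the step you call the crux, you have the exponent in $S'_n$ backwards. Since $b(i',\pm,a)=1$ exactly when one of the groups $i'\pm1,\ldots,i'\pm a$ is a singleton, $s(i'\pm a)^{b(i',\pm,a)}$ is the full size $s(i'\pm a)$ when the path to the trimmed group passes through a singleton, and $1$ otherwise. The mechanism is the reverse of yours.
\begin{itemize}
\item With no singleton in between, only one window starting in group $i'-a$ can be spoiled: the one whose leading observation is $Y_{s(i'-a),i'-a}$ (symmetrically, the one ending at $Y_{1,i'+a}$). Any other window starting in that group and containing at least two observations of group $i'$ sees every design point at least twice, so it is intra-group differenced. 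This gives $|\mathcal{U}_{i'}|\le 2r$.
\item With a singleton in between, every window whose leading observation lies in group $i'-a$ contains that singleton. Intra-group differencing is then unavailable for all $s(i'-a)$ of these windows. Once they run into the large group $i'$ they can have $m(i)<r+1$, so all of them may be dropped.
\end{itemize}
A case analysis built on your rule would claim one lost window exactly where up to $s(i'\pm a)$ are lost, so it cannot close. With the rule corrected, it gives the paper's $\mathcal{D}\le S'_n$, and the bias claim then follows as you describe.

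Your fallback $O\{(2r+1)\max s\,|\mathcal{T}_s|\}$ does not rescue this in general. It is $o(N)$ under $S'_n/N\to0$ only if the relevant group sizes stay bounded, which is not assumed.

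\textbf{The variance step.} Here the paper does less than you plan. It re-derives the \citet{SGW1993} expansion for the trimmed matrix and states the leading term as $\sigma^4\theta(A,\lambda_4)/(N-m)^2$, with $A$ built from the retained rows. It reads ``the right-hand side of (\ref{eq:thm2:var})'' with $A$ determined by the difference sequence actually used, and it silently uses $|\mathcal{I}|/(N-m)\to1$. Your extra comparison with the untrimmed $\theta$ is a stronger statement and is fine. However, it needs the total perturbation $O(m\mathcal{D})$ to be $o(N)$, i.e.\ $mS'_n=o(N)$. That follows from $S'_n/N\to0$ only when $m$ is bounded.
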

The expression $S'_n$ is
the maximum number of $D_i$ that are dropped.
For example, if $r=1$ and $s_{\max}=s_{0.95}$,
then $S_n' = 2(0.05n)$.
We suggest picking $(m,r)=(m'_{\min},1)$ which yields decent results according to our empirical experience.
See Figure \ref{fig:long_supp_R} for simulation evidence.

\section{Implementation \label{section_algorithm}}
\subsection{Angular representation}
In this section, we present a method to compute $\{d_{(i)}\}_{i=1}^{N-m}$ for $\hat{\sigma}^2_{*}(m,r)$ as well as $d$ for $\hat{\sigma}^2(m,r)$.
First, we introduce a representation of our proposed high-order accurate difference sequences.
Denote $\mathcal{L}_i'$\linelabel{def:Lrref} as the reduced row echelon form of
any matrix $\mathcal{L}_i$ \citep[Chapter~2]{M2023}.
For simplicity, we write the $(p,q)$th elements of $\mathcal{L}_i$ and $\mathcal{L}_i'$
as $L_{i,p,q}$ and $L_{i,p,q}'$, respectively.
Also denote $L'_{i,p}=(L'_{i,p,r+2},\ldots,L'_{i,p,m+1})^{\T}$.
The first constraints in (\ref{diff_poly_cond0}) and (\ref{diff_poly_random_cond}) can be represented in a matrix form:
    $\mathcal{L}_id_{(i)}
    =
     (0,\ldots,0)^\T
    $
where $L_{i,p,q}=(X_{i+q-1}-X_{i})^{p-1}$ for $p=1,\ldots,r+1$ and $q=1,\ldots,m+1$.
Note that the submatrix $\mathcal{M}$ formed by the first $r+1$ columns of $\mathcal{L}$ is a square Vandermonde matrix \citep[Example 4.3.4]{M2023}.
Since $0\leq X_1< \cdots<X_n\leq 1$ for standard design,
$\mathcal{M}$ is non-singular and hence $\mathcal{L}_i$ is always full row rank.
The system of equation $\mathcal{L}_i'd_{(i)}=(0,\ldots,0)^\T$, which admits the form:
\begin{align*}
    \begin{pmatrix}
        1 & 0  & \cdots & 0 & L'_{i,1,r+2} & \cdots & L'_{i,1,m+1}\\
        0 & 1 & \cdots  & 0 & L'_{i,2,r+2} & \cdots & L'_{i,2,m+1}\\
        \vdots & \vdots & \ddots & \vdots  & \vdots & \ddots & \vdots \\
        0 & 0 & \cdots & 1 & L'_{i,r+1,r+2} & \cdots & L'_{i,r+1,m+1}
    \end{pmatrix}
    \begin{pmatrix}
        d_{i,0}\\
        \vdots\\
        d_{i,m}
    \end{pmatrix}
    =\begin{pmatrix}
        0\\
        \vdots\\
        0
    \end{pmatrix},
\end{align*}
where the submatrix of the first $r+1$ columns is an identity matrix.
This form facilitate us to solve the systems of equation $\mathcal{L}_id_{(i)}=(0,\ldots,0)^\T$.
See  {\ifnum\isXr=1{\S~\ref{miscell:RREF}}\else{\S~B.6}\fi} for
a more detailed review of reduced row echelon form.
We can express $d_{i,0},\ldots,d_{i,r}$ as a function of the free parameters.
Lemma \ref{lemma:diff_sol} represents
each of $\{d_{(i)}\}_{i=1}^{n-m}$
in a $m$-dimensional angular coordinate system.
\begin{lemma}\label{lemma:diff_sol}
For $j\in\{ 0, r+1\}$, define the function $u_j :[0,\pi]^{m-j} \rightarrow \mathbb{R}^{m-j+1}$ as
\begin{equation}\label{patternvector}
    {u}_{j}(\phi_{j+1},\ldots,\phi_{m}) =
    \begin{pmatrix}
        \cos\phi_{j+1}  \\
        \sin\phi_{j+1}\cos\phi_{j+2} \\
        \vdots \\
        \sin\phi_{j+1}\sin\phi_{j+2}\cdots \cos\phi_{m}\\
        \sin\phi_{j+1}\sin\phi_{j+2}\cdots \sin\phi_{m}\\
    \end{pmatrix}.
\end{equation}
The solution to (\ref{diff_poly_cond0}) and (\ref{diff_poly_random_cond})
is given by
\[
    (d_{i,0}, \ldots, d_{i,m})^{\T}
    = u_{0}(\phi_{i,1},\ldots,\phi_{i,m}),
\]
where
the parameters $\phi_{i,m}, \ldots, \phi_{i,r+2} \in [0, \pi]$
are free to vary if $m\geq r+2$,
while
the remaining parameters $\phi_{i,r+1}, \ldots, \phi_{i,1}$ are
determined iteratively for $j=r+1,\ldots,1$ as follows:
\begin{align}\label{independent_phi_formula}
    \varphi_{i,j} &= \prod_{b=\min(r+1,j+1)}^{r+1}(\sin\phi_{i,b})^{\mathbb{1}(j<r+1)}, \nonumber \\
    \phi_{i,j} &= \left\{
    \begin{array}{ll}
    \arctan\left[ \left\{ -{{u}_{r+1}(\phi_{i,r+2},\ldots,\phi_{i,m})^\T L'_{i,j} \varphi_{i,j} } \right\}^{-1}\right]
    & \text{if $m>r+1$}; \\[2ex]
    \arctan\left\{ \left( -{L'_{i,j,m+1}\varphi_{i,j} }\right)^{-1} \right\}
    & \text{if $m=r+1$}.
    \end{array}
    \right.
\end{align}

\end{lemma}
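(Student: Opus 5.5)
We prove Lemma \ref{lemma:diff_sol} by splitting the constraints into a norm condition and a linear system, and solving the linear system in triangular form using the reduced row echelon form $\mathcal{L}_i'$.

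\emph{Step 1 (parametrise the sphere).} The normalisation $\sum_j d_{i,j}^2=1$ in (\ref{diff_poly_cond0}) means $d_{(i)}$ lies on the unit sphere $\mathbb{S}^m$. The standard hyperspherical map $u_0$ is onto $\mathbb{S}^m$, with angles taken in $[0,\pi]$ up to the usual sign convention on the last coordinate. So every feasible $d_{(i)}$ can be written as $u_0(\phi_{i,1},\ldots,\phi_{i,m})$.

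It then remains to impose the $r+1$ linear constraints. Because $\mathcal{L}_i$ has full row rank (the Vandermonde argument above), these are equivalent to $\mathcal{L}_i'd_{(i)}=0$. Row $j$ of this system reads
\[
d_{i,j-1}+L_{i,j}'^{\T}(d_{i,r+1},\ldots,d_{i,m})^{\T}=0,\qquad j=1,\ldots,r+1 .
\]

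\emph{Step 2 (factor the tail).} From the definition of $u_0$, the tail of the vector factors as
\[
(d_{i,r+1},\ldots,d_{i,m})^{\T}=\Bigl(\prod_{b=1}^{r+1}\sin\phi_{i,b}\Bigr)\,u_{r+1}(\phi_{i,r+2},\ldots,\phi_{i,m}).
\]
For $j\le r+1$ the coordinates satisfy $d_{i,j-1}=\bigl(\prod_{b<j}\sin\phi_{i,b}\bigr)\cos\phi_{i,j}$.

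Divide row $j$ by the common factor $\prod_{b<j}\sin\phi_{i,b}$. This is legitimate whenever that factor is nonzero; the degenerate case is handled separately in Step 4. The division leaves
\[
\cos\phi_{i,j}+\Bigl(\prod_{b=j}^{r+1}\sin\phi_{i,b}\Bigr)\,u_{r+1}^{\T}L'_{i,j}=0 .
\]
Pull out $\sin\phi_{i,j}$ and write $\varphi_{i,j}=\prod_{b=j+1}^{r+1}\sin\phi_{i,b}$, with $\varphi_{i,r+1}=1$ by the indicator convention in (\ref{independent_phi_formula}). The equation becomes
\[
\cot\phi_{i,j}=-u_{r+1}^{\T}L'_{i,j}\varphi_{i,j},
\quad\text{that is,}\quad
\phi_{i,j}=\arctan\bigl[\{-u_{r+1}^{\T}L'_{i,j}\varphi_{i,j}\}^{-1}\bigr].
\]

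\emph{Step 3 (triangular recursion).} The key structural point is that $\varphi_{i,j}$ involves only $\phi_{i,b}$ with $b>j$. So once the free angles $\phi_{i,r+2},\ldots,\phi_{i,m}$ are fixed, we solve downward for $j=r+1,r,\ldots,1$. Each step uses only previously determined angles, which gives exactly the stated iteration. In the case $m=r+1$ there are no free angles: $u_{r+1}$ is the scalar $1$ and $L'_{i,j}$ reduces to $L'_{i,j,m+1}$, giving the second line of (\ref{independent_phi_formula}).

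Conversely, retracing the algebra shows that any angles produced by this recursion satisfy every row of $\mathcal{L}_i'u_0=0$. Hence the parametrisation describes exactly the solution set.

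\emph{Step 4 (expected obstacle).} The main difficulty is the branch and degeneracy bookkeeping.
\begin{itemize}
\item The $\arctan$ branch must be chosen so that $\phi_{i,j}\in[0,\pi]$. Since $\cot$ is a bijection from $(0,\pi)$ to $\mathbb{R}$, we interpret $\arctan(1/x)$ as the unique $\phi\in(0,\pi)$ with $\cot\phi=x$, adding $\pi$ to the principal value when $x<0$.
\item The boundary values $x=\pm\infty$ do not arise. The case $\sin\phi_{i,b}=0$ for some $b\le r+1$ would force the tail $(d_{i,r+1},\ldots,d_{i,m})$ to vanish. The linear system would then force $d_{i,0}=\cdots=d_{i,r}=0$, contradicting the unit norm.
\end{itemize}
This also justifies dividing by $\prod_{b<j}\sin\phi_{i,b}$ in Step 2, so every $\phi_{i,j}$ with $j\le r+1$ lies in $(0,\pi)$. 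The proof concludes with this nondegeneracy argument.
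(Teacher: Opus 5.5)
Your proposal is correct and follows the same route as the paper's proof. Both substitute $d_{(i)}=u_0(\phi_{i,1},\ldots,\phi_{i,m})$ into the rows of $\mathcal{L}_i'd_{(i)}=0$, divide row $j$ by $\prod_{b\le j}\sin\phi_{i,b}$ to obtain $\cot\phi_{i,j}=-u_{r+1}^{\T}L'_{i,j}\varphi_{i,j}$, and solve downward from $j=r+1$ to $j=1$. Your extra bookkeeping makes explicit what the paper leaves implicit: the nonvanishing of $\sin\phi_{i,b}$ for $b\le r+1$, which justifies the divisions; the choice of the $(0,\pi)$ branch of $\arctan$ (the paper's own computation in the proof of Corollary~\ref{coro:equiv_diff_est} uses the principal branch and yields negative angles); and the remark that angles in $[0,\pi]$ reach solutions only up to the harmless sign flip $d_{(i)}\mapsto -d_{(i)}$.
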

The second constraint in (\ref{diff_poly_cond0}) is automatically satisfied by the representation (\ref{patternvector}).
The polar angles in (\ref{independent_phi_formula}) ensures $d_{(i)}$ satisfies the first constraints in (\ref{diff_poly_cond0}) and all constraints in (\ref{diff_poly_cond}).
Lemma \ref{lemma:diff_sol} provides a computationally feasible way to solve $d_{(i)}$ iteratively:
$\phi_{i,j}$ can be solved from $j=r+1$ to $j=1$ and then $d_{(i)}=u_{i,0}$ gives our target difference sequence.
The following example illustrate the application of Lemma \ref{lemma:diff_sol}.
\begin{example}\label{eg:rref}
    Suppose $i=1$, $m=3$, $r=1$ and $(X_1,X_2,X_3,X_4)=(0.1,0.3,0.4,0.7)$.
    Then the bias-correcting constraints in (\ref{diff_poly_cond0}) and (\ref{diff_poly_random_cond}) can be written in
    matrix form as
    \begin{align*}
        \begin{pmatrix}
            1 & 1 & 1 & 1 \\
            0 & 0.2 & 0.3 & 0.6
        \end{pmatrix}
        \begin{pmatrix}
            d_{1,0}\\
            d_{1,1}\\
            d_{1,2}\\
            d_{1,3}\\
        \end{pmatrix}
        &=
        \begin{pmatrix}
            0\\
            0
        \end{pmatrix}
    \qquad \text{or} \qquad
        \begin{pmatrix}
            1 & 0 & -0.5 & -2 \\
            0 & 1 & 1.5 & 3 \\
        \end{pmatrix}
        \begin{pmatrix}
            d_{1,0}\\
            d_{1,1}\\
            d_{1,2}\\
            d_{1,3}\\
        \end{pmatrix}
        =
        \begin{pmatrix}
            0\\
            0
        \end{pmatrix},
    \end{align*}
    where the second matrix equation is
    the reduced row echelon form of the first matrix equation.
    By (\ref{independent_phi_formula}),
    $\phi_{1,3}$ is a free parameter, whereas
    $\phi_{1,1}$ and $\phi_{1,2}$ are computed as
    \begin{align*}
        \phi_{1,2}&=\arctan\left[\left\{
        -\begin{pmatrix}
            \cos \phi_{1,3} & \sin \phi_{1,3}
        \end{pmatrix}
        \begin{pmatrix}
            1.5\\
            3
        \end{pmatrix}
        \right\}^{-1}\right], \\
        \phi_{1,1}&=\arctan\left[\left\{
        -\begin{pmatrix}
            \cos \phi_{1,3} & \sin \phi_{1,3}
        \end{pmatrix}
        \begin{pmatrix}
            -0.5\\
            -   2
        \end{pmatrix}
        \sin\phi_{1,2}
        \right\}^{-1}
        \right].
    \end{align*}
    Finally, using (\ref{patternvector}),
    the difference sequence can be represented as follows:
    \[
    d_{(1)}
    =\begin{pmatrix}
        \cos \phi_{1,1}\\
        \sin \phi_{1,1}\cos \phi_{1,2}\\
        \sin \phi_{1,1}\sin \phi_{1,2}\cos \phi_{1,3}\\
        \sin \phi_{1,1}\sin \phi_{1,2}\sin \phi_{1,3}
    \end{pmatrix}.
    \]

\end{example}

\begin{table}[t]
    \def~{\hphantom{0}}
    \caption{\label{table:diff_seq}Optimal $m$th order global difference sequences of $r$th order smoothness.
    The variance-optimal global differencing \citep{HKT1990} is achieved when $r=0$.
    The bias-optimal global differencing is achieved when $r=m-1$.
    The estimators are sorted in ascending
    order of the asymptotic value of $\Var\left\{\hat{\sigma}^2(m,r)\right\}$, where $\lambda_4' = \lambda_4-3$ and $\lambda_4=\E\{(\epsilon_i/\sigma)^4\}$.
    }
    \begin{adjustbox}{width=\textwidth}
        \begin{tabular}{llllll}
            \toprule
            $m$ & $r$ & $(d_0,\ldots, d_m)$ & Variance & Bias  & Reference\\[4pt]
            \midrule
             $4$ & $0$ & $(0.271,-0.014,0.691,-0.486,-0.462)$ & $\sim \sigma^4(2.250+\lambda_4')/n$ & $O(n^{-2})$ & \citet{HKT1990}\\
             $3$ & $0$ & $(0.194,0.281,0.383,-0.858)$ & $\sim \sigma^4(2.333+\lambda_4')/n$ & $O(n^{-2})$ &  \citet{HKT1990}\\
             $2$ & $0$ & $(-0.809,0.500,0.309)$ & $\sim \sigma^4(2.500+\lambda_4')/n$ & $O(n^{-2})$ &  \citet{HKT1990}\\
             \multirow{2}{*}{$11$}
             & \multirow{2}{*}{$3$} & $(0.470, -0.795, -0.052, 0.223, 0.228, 0.117,$ & \multirow{2}{*}{$\sim \sigma^4(2.537+\lambda_4')/n$} & \multirow{2}{*}{$O(n^{-8})$} &  \multirow{2}{*}{Proposal}\\
              & & $-0.007, -0.087, -0.104, -0.068, 0, 0.076)$ & & &\\
             $4$ & $1$ & $(0.616,-0.732,-0.185,0.103,0.198)$ & $\sim \sigma^4(2.685+\lambda_4')/n$ & $O(n^{-4})$ &  Proposal\\
             $1$ & $0$ & $(-0.707,0.707)$& $\sim \sigma^4(2.999+\lambda_4')/n$ & $O(n^{-2})$ &  \citet{R1984} \\
             $3$ & $1$ & $(0.535,-0.802,0.000,0.267)$ & $\sim \sigma^4(3.001+\lambda_4')/n$ & $O(n^{-4})$ &  Proposal (Default choice)\\
             $4$ & $2$ & $(0.326,-0.793,0.424,0.227,-0.184)$ & $\sim \sigma^4(3.433+\lambda_4')/n$ & $O(n^{-6})$ &  Proposal\\
             $5$ & $3$ & $(0.190, -0.644, 0.676, -0.065, -0.273, 0.116)$ & $\sim \sigma^4(3.816+\lambda_4')/n$ & $O(n^{-8})$ & Proposal\\
             $2$ & $1$ & $(-0.408,0.816,-0.408)$ &  $\sim \sigma^4(3.884+\lambda_4')/n$ & $O(n^{-2})$ &  \citet{GSJ1986} \\
             $3$ & $2$ & $(0.224,-0.671,0.671,-0.224)$ & $\sim \sigma^4(4.627+\lambda_4')/n$ & $O(n^{-6})$ &  \citet{DMW1998}\\
             $4$ & $3$ & $(0.120,-0.478,0.717,-0.478,0.120)$ & $\sim \sigma^4(5.256+\lambda_4')/n$ & $O(n^{-8})$ &  \citet{DMW1998}\\
            \bottomrule
        \end{tabular}
    \end{adjustbox}

\end{table}

\subsection{Intra-group differencing, inter-group differencing and variance boosting \label{sec:compute_repeat}}
{First, we outline the implementation details for intra-group differencing and inter-group differencing.
The difference statistic $D_i$ based on the stacked dataset (\ref{eqt:stackY})
may correspond to duplicated design points
$\mathcal{X}_i = \{ X_{i}, \ldots, X_{i+m} \}$.
Recall that
$\mathcal{X}^\star_i=\{ X_{i,1}^\star,\ldots,X_{i,m(i)+1}^\star \}$
is defined as the set of distinct elements of $\mathcal{X}_i$,
where $|\mathcal{X}^\star_i| = m(i)+1$ is the number of distinct elements and $m(i)$ can be interpreted as the actual differencing order of $d_{(i)}$.
Let $t(i,j)$\linelabel{def:t(i,j)} be a matching index such that $X_{i+j}=X^\star_{i,t(i,j)}$.
Also let
\begin{align}
    \Lambda_j(i)&=\left\{
    j'\in[0,m]\cap\mathbb{Z}:
    X_{i+j'}=X^\star_{i,1+j}
    \right\}
    \quad(j=0,\ldots,m(i)), \label{def:Lambdaj(i)}\\
    \mathcal{I}_m&=\left\{
    i\in[1,N-m]\cap\mathbb{Z} :
    \min_{j=0, \ldots, m(i)} |\Lambda_j(i)|> 1
    \right\}. \label{def:Im}
\end{align}
So,
$\{i,\ldots,i+m\}$ can be partitioned as $\{i,\ldots,i+m\}=\bigcup_{j=0}^{m(i)}\Lambda_j(i)$.
The $(j+1)$th distinct element in $\mathcal{X}^\star_i$ appears $|\Lambda_j(i)|$ times.
The set $\mathcal{I}_m$ contains all $i$ such that
each design point in $\mathcal{X}_i$ is duplicated.
According to (\ref{eqt:intra_diff}) and (\ref{eqt:inter_diff})},
our proposed repetitional difference sequence $d_{(i)} = (d_{i,0}, \ldots, d_{i,m})^{\T}$
is defined
in two cases, $i\in\mathcal{I}_m$ (intra-group differencing) and $i\not\in\mathcal{I}_m$ (inter-group differencing):
\begin{align}\label{eqt:long_ds}
    d_{i,j}=d^\circ_{i,j}/c_i,
    \qquad \text{where} \qquad
    d^{\circ}_{i,j}
        =\left\{
        \begin{array}{ll}
        w_{i,j} & \quad\text{if $i\in\mathcal{I}_m$};\\
        w_{i,j}d^\star_{i,t(i,j)} & \quad\text{if $i\not\in\mathcal{I}_m$},
        \end{array}
        \right.
\end{align}
$c_i = \{ {\sum_{j=1}^m(d^{\circ}_{i,j})^2} \}^{1/2}$ is a normalizing constant so that
$\sum_{j=1}^md_{i,j}^2=1$, and
$(d^\star_{i,0},\ldots,d^\star_{i,m(i)})$ is either a local or global difference sequence defined in Assumption \ref{ass:typeDiffSeq}.
The parameters $w_{i,j}$ in (\ref{eqt:long_ds})
are free parameters to be optimized and satisfy the constraint
\begin{align}
    \sum_{t\in\Lambda_j(i)}w_{i,t}
    =\left\{
        \begin{array}{ll}
        0 & \quad\text{if $i\in\mathcal{I}_m$};\\
        1 & \quad\text{if $i\not\in\mathcal{I}_m$},
        \end{array}
        \right.
    \quad
    (j=0,\ldots,m(i),i=1,\ldots,N-m).\label{eq:repeated_weight_constraints}
\end{align}

Then we present the details for computing $\hat{\sigma}_{\VB}^2(m,r)$.
The estimator $\hat{\sigma}^2_{\VB}(m,r)$ can be rewritten in the quadratic form (\ref{quadratic_form}):
$\hat{\sigma}_{\VB}^2 (m,r)=Y^{\T}\Bar{A} Y / \tr(\Bar{A})$.
Denote $\Bar{a}_{i,j}$ as the $(i,j)$th elements of $\Bar{A}$,
$\mathcal{Q}_i=\{t:X_t=X_i\}$\linelabel{def:Qi} as the set of indices of covariates that are equal to $X_i$
and $\alpha(i)$\linelabel{def:alpha(i)} as the matching index such that $X_{\alpha(i)}^\star=X_i$.
For $i\neq j$,
\begin{gather}\label{eq:vb_Amat}
\begin{aligned}
    \Bar{a}_{i,i}&=
        \frac{1}{s\left\{\alpha(i)\right\}}\sum_{k\in \mathcal{Q}_i}v_k^\T v_k, \\
        \Bar{a}_{i,j} &=\frac{1}{
        s(\alpha(i))
        \left[s(\alpha(j))-\mathbb{1}\left\{\alpha(i)=\alpha(j)\right\}\right]
        }\sum_{k\in\mathcal{Q}_i}\sum_{\ell\in\mathcal{Q}_j:\ell\neq k}v_k^\T v_\ell,
\end{aligned}
\end{gather}
where $v_k$ is the $k$th column vector of matrix $\DiffSeqMat$ from $\hat{\sigma}^2_{\Long}(m,r)=Y^\T V^\T V Y/ \tr(V^\T V)$.

\subsection{Optimization}
It remains to find the best set of free parameters for $\hat{\sigma}^2_{*}(m,r)$, $\hat{\sigma}^2(m,r)$ and $\hat{\sigma}^2_{\Long}(m,r)$.
Define
\begin{align}
    \Phi_*&=\left\{\phi_{i,j}:i=1,\ldots,n-m,j=r+2,\ldots,m\right\}, \label{eqt:freePara1}\\
    \Phi&=\left\{\phi_{\cdot,j}:j=r+2,\ldots,m\right\},
    \label{eqt:freePara2}
    \\
    \Phi_{\Long}&= \left\{(\phi_{i,r+2},\ldots,\phi_{i,m(i)})^\T\in[0,\pi]^{m(i)-r-1}:i\notin\mathcal{I}_m,
     r< m(i)-1
    \right\},
    \label{eqt:freePara3}\\
     \mathcal{W} &= \left\{(w_{i,0},\ldots,w_{i,m})^\T\in\mathbb{R}^{m+1}:i=1,\ldots,N-m\right\}
    \label{eqt:freePara4},
\end{align}
The index $i$ is omitted in $\phi_{\cdot,j}$ since it is redundant for global differencing.
The constrained parameters in (\ref{independent_phi_formula})
are needed for bias correction,
whereas the free parameters in (\ref{eqt:freePara1})--(\ref{eqt:freePara4}) help reduce the variance.
In the case where all of $\{d_{(i)}\}_{i=1}^{N-m}$ satisfies the same set of constraints,
i.e., when $r=0$ or under an equidistant standard design,
we have as $n\rightarrow\infty$ that
\begin{equation*}
    \frac{1}{n-m}\tr\{A\Diag(A)\}\rightarrow 1
    \qquad \text{and}\qquad
    \frac{1}{n-m}\tr(A^2) \rightarrow 1+2\delta(d),
\end{equation*}
which imply
\begin{equation}\label{simple_asy_var}
    \Var\{\hat{\sigma}_*^2(m,0)\}
    =\Var\{\hat{\sigma}^2(m,0)\}
		= \frac{\sigma^4}{n-m}\left\{4\delta(d)+\lambda_4-1\right\}+O_p(n^{-2}),
\end{equation}
where $\delta(d)=\sum^m_{k=1}(\sum^{m-k}_{j=0}d_{j}d_{j+k} )^2$.
Therefore, the variance-optimal difference sequences in \citet{HKT1990} minimize $\delta(d)$, the asymptotic variance.
However,
(\ref{simple_asy_var}) does not hold in general
and there exist no simple asymptotic forms of $\Var\{\hat{\sigma}_*^2(m,r)\}$ and $\Var\{\hat{\sigma}_\Long^2(m,r)\}$.
We propose to optimize $\theta(A,\lambda_4)$ in $\Var\{\hat{\sigma}_*^2(m,r)\}$ and $\Var\{\hat{\sigma}_{\Long}^2(m,r)\}$,
the dominant term in Theorem \ref{thm_diff_var} and Theorem \ref{thm:long_diff_bias}.
It can be regarded as the finite-sample variance
in the sense that $A$ depends on the sample size $N$.
It is still challenging to minimize $\theta(A,\lambda_4)$ due to the nuisance parameter $\lambda_4$.
We suggest two approaches:
{approach (i) performs minimization over an upper bound of $\theta(A,\lambda_4)$, while
approach (ii) performs minimization over $\theta(A,\hat{\lambda}_4)$, where $\hat{\lambda}_4$ is a consistent estimator of $\lambda_4$.}

In approach (i),
it can be easily seen that $\tr(A^2)\geq\tr\{A\Diag(A)\}$.
Thus, it is equivalent to minimizing $\tr(A^2)$,
which is a simpler objective function as it does not depend on the data through the unknown $\lambda_4$.
In particular, the upper bound is sharp if $\lambda_4=3$.
Approach (ii) minimizes an estimate of $\theta(A,\lambda_4)$. We propose to estimate $\lambda_4$ and $\sigma^2$ by the following pilot estimators:
\begin{eqnarray*}
    \hat{\lambda}^\dag_4&=&\frac{1}{N-m^\dag}\sum^{N-m^\dag}_{i=1}\left\{\left(\sum^{m^\dag}_{j=0}d_{i,j}^*Y_{i+j}\right)^{4}
    -6\sum^{m^\dag-1}_{j=0}\sum^{m^\dag}_{h=j+1}d_{i,j}^{\dag 2}d_{i,h}^{\dag 2}(\hat{\sigma}^{\dag 2})^2\right\}
    /\left(\hat{\sigma}^{\dag 2}\right)^2,\\
    \hat{\sigma}^{\dag 2}&=&\frac{1}{N-m^\dag}\sum^{N-m^\dag}_{i=1}\left(\sum^{m^\dag}_{j=0}d^\dag_{i,j}Y_{i+j}\right)^{2},
\end{eqnarray*}
for some difference sequences $\{d^\dag_{(i)}\}^{N-m^\dag}_{i=1}$ of order $m^\dag$ and $d_{i,j}^*=d^\dag_{i,j}/\left(\sum^{m^\dag}_{j=0}d_{i,j}^{\dag 4}\right)^{1/4}$.
The following theorem states the consistency of $\hat{\lambda}^\dag_4$ and $\hat{\sigma}^{\dag 2}$.
\begin{theorem}\label{thm:lamda4_est}
   Suppose
    Assumption \ref{assumption:basic} holds.
    Assume
    $g(\cdot)$ is continuously differentiable and
    $\E(\epsilon_i^8/\sigma^8)<\infty$.
    Then, $\hat{\lambda}^\dag_4\rightarrow\lambda_4$ and $\hat{\sigma}^{\dag 2}\rightarrow\sigma^2$
    in probability.
\end{theorem}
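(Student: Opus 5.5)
First I show $\hat{\sigma}^{\dag 2}\to\sigma^2$, then feed this into the $\hat{\lambda}^\dag_4$ expression. Throughout I condition on $X$ and assume the pilot sequences $d^\dag_{(i)}$ have fixed order $m^\dag$, satisfy (\ref{diff_poly_cond0}), and have coefficients bounded uniformly in $i$.

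\textbf{Consistency of $\hat{\sigma}^{\dag 2}$.} Write $D_i=D^g_i+D^\epsilon_i$ as in (\ref{eqt:def_D}). Then
\[
\hat{\sigma}^{\dag2}=\frac{1}{N-m^\dag}\sum_i \bigl\{(D^g_i)^2+2D^g_iD^\epsilon_i+(D^\epsilon_i)^2\bigr\}.
\]
\emph{Deterministic part.} Since $g$ is continuously differentiable, Proposition \ref{prop:poly_cancelling} together with Assumption \ref{assumption:basic} gives $\max_i|D^g_i|=O_p(n^{-1})$. This requires the $O_p(n^{-1})$ bound to hold uniformly in $i$; it does, because Assumption \ref{assumption:basic} controls the maximal spacing and $g'$ is bounded on $[0,1]$. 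Hence the average of $(D^g_i)^2$ is $O_p(n^{-2})$.
\emph{Cross term.} It has mean zero and variance $O(n^{-2}\cdot n^{-1})$, so it is negligible.
\emph{Noise part.} The term $(D^\epsilon_i)^2$ has mean $\sigma^2$. The sequence is $m^\dag$-dependent, so $\Var\{\sum_i (D^\epsilon_i)^2\}/(N-m^\dag)^2=O(\lambda_4/n)$, computed as in Theorem \ref{thm_diff_var}. Chebyshev's inequality then gives $\hat{\sigma}^{\dag2}\to\sigma^2$ in probability.

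\textbf{Consistency of $\hat{\lambda}_4^\dag$.} Let $c_i=(\sum_j d^{\dag4}_{i,j})^{1/4}$, so that $d^*_{i,j}=d^\dag_{i,j}/c_i$. The key computation is the fourth moment of the pure-noise difference: expanding the multinomial with independent mean-zero $\epsilon$,
\[
\E\Bigl\{\Bigl(\sum_j d^*_{i,j}\epsilon_{i+j}\Bigr)^4\Bigr\}=\sum_j d^{*4}_{i,j}\,\E\epsilon^4+6\sum_{j<h}d^{*2}_{i,j}d^{*2}_{i,h}\sigma^4 .
\]
By the normalisation, $\sum_j d^{*4}_{i,j}=1$, so the first piece equals $\E(\epsilon^4)$. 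Two points need care in matching this to the estimator:
\begin{itemize}
\item The correction term in $\hat{\lambda}_4^\dag$ is written with $d^{\dag}$; it must be read with $d^*$, i.e.\ divided by $c_i^4$, to match this identity (I will interpret it that way).
\item The paper's $\lambda_4$ is the standardized moment. Dividing by $(\hat{\sigma}^{\dag2})^2$ therefore converts $\E\epsilon^4$ to $\lambda_4$.
\end{itemize}

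Next, replace $D_i$ by $D^\epsilon_i$ inside the fourth power. Expand $(D^g_i/c_i+D^\epsilon_i/c_i)^4$. Each cross term contains at least one factor $D^g_i=O_p(n^{-1})$ uniformly, multiplied by an averaged noise moment of order at most 3, which is $O_p(1)$ because $\E\epsilon^4<\infty$. So the cross terms vanish.

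The average of $(D^\epsilon_i/c_i)^4$ has mean $\E\epsilon^4+6\sigma^4\bar{S}$, where $\bar{S}$ is the average over $i$ of $\sum_{j<h}d^{*2}_{i,j}d^{*2}_{i,h}$. Its variance is $O(\E\epsilon^8/n)$ by $m^\dag$-dependence. This is exactly where the assumption $\E(\epsilon^8/\sigma^8)<\infty$ enters, and it yields a weak law.

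Finally, the subtracted term is $6\bar{S}(\hat{\sigma}^{\dag2})^2\to6\bar{S}\sigma^4$, since $\bar{S}$ is bounded (because $c_i$ is bounded below by $(m^\dag+1)^{-1/4}$). The continuous mapping theorem then gives $\hat{\lambda}_4^\dag\to(\E\epsilon^4+6\sigma^4\bar{S}-6\sigma^4\bar{S})/\sigma^4=\lambda_4$.

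The main obstacle is this variance bound for the averaged fourth powers under dependence and local (non-identical) sequences. I would handle it by splitting the index set into $m^\dag+1$ interleaved subsequences of independent terms and applying the uniform bound $\E\{(D^\epsilon_i/c_i)^8\}\le C\,\E\epsilon^8$ on each.
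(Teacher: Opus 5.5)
Your proposal is correct and follows essentially the same route as the paper. You decompose $D_i^*=D_i^{g*}+D_i^{\epsilon*}$ and compute $\E\{(D_i^{\epsilon*})^4\}=\lambda_4\sigma^4+6\sigma^4\sum_{j<h}d^{\dag 2}_{i,j}d^{\dag 2}_{i,h}/\sum_j d^{\dag 4}_{i,j}$, show the $g$-contributions are negligible, bound the variance of the averaged fourth powers by $O(N^{-1})$ via $m^\dag$-dependence and the eighth moment, and conclude with the consistency of $\hat{\sigma}^{\dag 2}$ and the continuous mapping theorem.

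The differences are only cosmetic:
\begin{itemize}
\item you prove $\hat{\sigma}^{\dag 2}\to\sigma^2$ directly rather than citing Theorems \ref{thm_diff_bias} and \ref{thm_diff_var};
\item you control the cross terms pathwise through a uniform $O_p(n^{-1})$ bound on $D_i^g$, whereas the paper works with their expectations;
\item your reading of the correction term with $d^*$, i.e.\ divided by $\sum_j d^{\dag 4}_{i,j}$, is exactly what the paper's proof implicitly uses.
\end{itemize}
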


In practice, the time cost for optimization of both approaches are comparable in one-off implementations.
However, approach (i) is recommended
when it is used repeatedly as an intermediate step in other procedures.
Because it only requires a single optimization before the experiment,
while approach (ii) requires tuning whenever a new dataset is observed.
We suggest to adopt the coordinate descent method, i.e., iteratively optimize each free parameter until convergence.

\section{Extensions and discussion \label{section_extension}}
First, our framework applies to multivariate data.
One may construct the difference statistics using $m$ nearest neighbors, i.e.,
$$
    \tilde{\sigma}^2_{\text{mNN}}=\frac{1}{n}\sum_{i=1}^n
    \left\{ \sum_{j=0}^md_jY_{\ell(i,j)} \right\}^2,
    \quad
    \ell(i,j)=
    \underset{1\leq p\leq n,p\neq\ell(i,1),\ldots,\ell(i,j-1)}{\argmin}
    \lVert X_{i}-X_p\rVert,
$$
{where
$Y_{\ell(i,0)}\equiv Y_i$ and $Y_{\ell(i,j)}$ is the $j$th nearest neighbour of the observe vector $Y_i$.}
See \citet{LCL2010} for more details.

Second, our framework applies to
regression models with time-series errors:
$
    Y_i=g(X_i)+\varepsilon_i
$
$(i=1,\ldots,n)$,
where $(\varepsilon_i)_{i\in\mathbb{Z}}$ is a zero-mean stationary time series noise sequence.
To estimate the long-run variance $v=\lim_{n\rightarrow\infty} n\Var(\sum^n_{i=1}\varepsilon_i/n)$, \citet{C2022} and \citet{wangchan2026tight} proposed
$$\hat{v}=\sum_{|k|<\ell}K(k/{\ell})
    \frac{1}{n}\sum^{n-mh-|k|}_{i=1}
     \left(\sum_{j=0}^md_jY_{i+jh} \right)
     \left(\sum_{j=0}^md_jY_{i+|k|+jh}\right),$$
where $K$ is a kernel,
$\ell$ is a bandwidth parameter, and $h$ is a lag parameter.

Third, the new difference sequences are also applicable to any methods using difference sequences to remove the nuisance trend;
see \citet{TW2005} and \citet{DTZ2017} for least square estimators which apply linear regression to
$\{\hat{\sigma}^2(i,r)\}_{i=1}^{m_n}$ for some bandwidth $m_n$ and fixed $r$.
Other variance estimation problems that fit into our framework
include, e.g., heteroscedasticity \citep{BL2007,MS1987},
complex surveys \citep{KG2002}, multivariate data \citep{MBWF2005},
trend variance test \citep{TC2023}, and variance change point test \citep{leungchan2026}.

Fourth,
in \ifnum\isXr=1{\S~\ref{miscell:SignOrder}}\else{\S~B.1}\fi,
we construct the difference sequences by
using their sign-inverted or order-reversal versions.
In
\ifnum\isXr=1{\S~\ref{miscell:DiffResid}}\else{\S~B.4}\fi,
we state a differenced-residual-based estimator
that has a higher bias-correcting power.
In
\ifnum\isXr=1{\S~\ref{miscell:PolyRegFit}}\else{\S~B.5}\fi,
we discuss variance estimators
with a parametric pre-fitting.

\section{Empirical Studies\label{section_experiment}}
\subsection{
Simulation experiments \label{sec:non_equ_design_exp}
}
We perform simulation experiments to compare the performance of
various variance estimators, including
(i) the proposed differencing method $\hat{\sigma}^2_*(3,1)$,
(ii) the variance-optimal differencing method $\hat{\sigma}^2(3,0)$,
(iii) the bias-optimal differencing method $\hat{\sigma}^2_*(3,2)$,
(iv) the least squares estimator $\hat{\sigma}^2_{\textsc{ls}}$ proposed in \citet{DTZ2017},
with parameters tuned by their suggested plateau method, and
(v) the nonparametric estimator based on local-linear fitting
$\hat{\sigma}^2_{\textsc{ll}}$, defined in the example in \S~\ref{section_review_diff}.

Let $\epsilon_i\sim\Normal(0,\sigma^2)$ independently,
where $\sigma\in \{0.2,0.6,1.0, \ldots,2.2\}$.
The design points $(X_i)_{i=1}^n$ are generated from
the beta distribution with parameters $0.5$ and $0.5$.
We consider
$n=\lfloor 75\times 1.5^v \rfloor$ for $v=0,\ldots,6$, and
set $s(1)=\cdots=s(n)=1$.
The following
regression functions
are studied:
(T1) monotonic function: $g_1(x)=13\exp(20x -10)^2/\left\{1+\exp(20x-10)\right\}^2$,
(T2) oscillating function: $g_2(x)=2\exp(1-x)\sin(4\pi x)$, and
(T3) oscillating function superimposed with an increasing trend: $g_3(x)= 5\exp(x)+\sin(ex)+\cos(8\pi x)$.

\begin{figure}[t]
  \begin{center}
  \includegraphics[width=.95\textwidth]{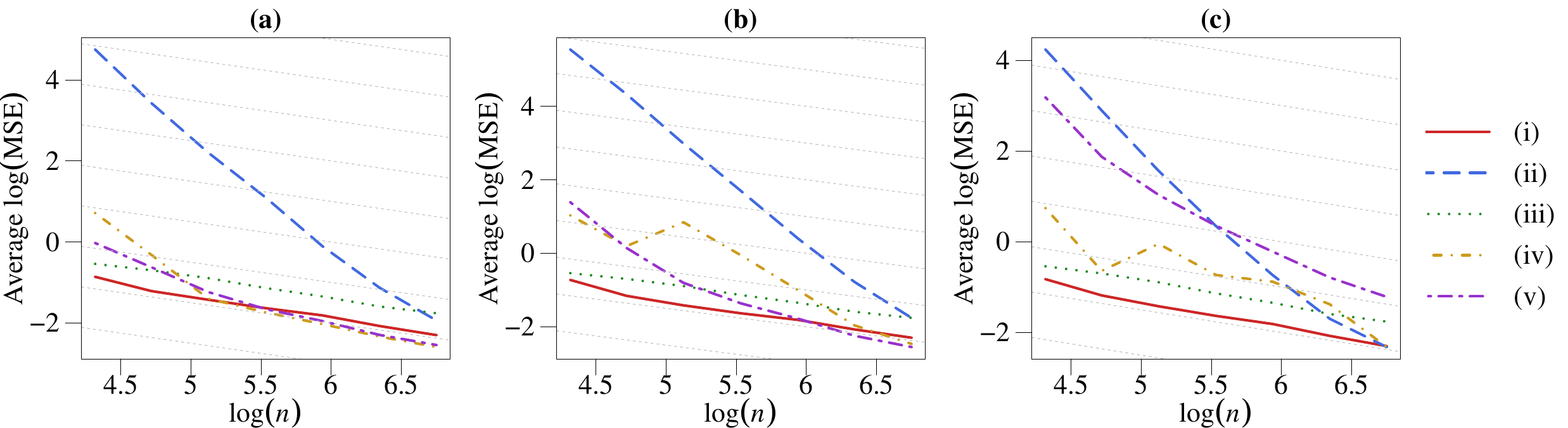}
    \vspace{-0.2cm}
  \captionsetup{font=small}
  \caption{
  The value of
  $\log \{ n^{1/2}\E(\cdot-\sigma^2)^2/\sigma^4 \}$
  averaged over $\sigma\in\{0.2,\ldots,2.2\}$ is computed for
  (i) proposed $\hat{\sigma}^2_*(3,1)$,
  (ii) $\hat{\sigma}^2(3,0)$,
  (iii) $\hat{\sigma}^2_*(3,2)$,
  (iv) $\hat{\sigma}^2_{\textsc{ls}}$,
  (v) $\hat{\sigma}^2_{\textsc{ll}}$.
  Plots (a)--(c) correspond to regression functions (T1)--(T3), respectively.
  The gray dotted lines are of slope $-1$
  corresponding to root-$n$ consistent estimators.
  }
  \label{fig:sim1}
  \end{center}
\end{figure}

The results are shown in Figure \ref{fig:sim1}.
The simulation results demonstrate the necessity of striking the balance between bias and variance:
variance-optimal differencing
lacks robustness
while
bias-optimal differencing lacks statistical efficiency.
In contrast, our proposed estimator $\hat{\sigma}^2_*(3,1)$,
which balances robustness and efficiency,
generally works well in all cases.
The least-squares estimator and the nonparametric estimator fail to tackle
complicated regression functions.

We repeat the experiments under two repetitional designs.
We compare our proposals $\hat{\sigma}^2_{\VB}(m_{\min},1)$
 and $\hat{\sigma}^2_{\Long}(m_{\min},1)$
to the sample variance estimator $\hat{\sigma}^2_{\textsc{sam}}$
and the sequencing estimator $\hat{\sigma}^2_{\textsc{seq}}$ proposed in \citet{DMTZ2015},
where $\hat{\sigma}^2_{\textsc{sam}}$ is defined differently as
the conventional sample variance.
We reuse the same setup as above
except that
$s(i)$'s are no longer equal to $1$.
We consider two cases:
\begin{itemize}
    \item[(L1)] let $s(1), \ldots, s(n) \sim \textsc{Unif}\{1,2,\ldots,5\}$
independently for $n=\lfloor 75\times 1.5^v \rfloor$;
    \item[(L2)] let $s(1), \ldots, s(n) \sim \textsc{Unif}\{v+1,\ldots,2(v+1)\}$ independently,
    and $n=200$,
\end{itemize}
for $v=0,\ldots,6$.
Figure \ref{fig:sim_longitudinal} shows the results.
The proposal $\hat{\sigma}^2_{\Long}(m_{\min},1)$
performs much better
than $\hat{\sigma}^2_{\textsc{seq}}$ in all cases.
Although $\hat{\sigma}^2_{\textsc{sam}}$ performs more comparably to
our proposals, it is not stable across models.
The MSE of $\hat{\sigma}^2_{\textsc{sam}}$ can be twice as large as that of $\hat{\sigma}^2_{\Long}(m_{\min},1)$.
The efficiency-boosted proposal $\hat{\sigma}^2_{\VB}(m_{\min},1)$ slightly improves the efficiency
at a price of
computational cost.

\begin{figure}
  \begin{center}
  \includegraphics[width=.9\textwidth]{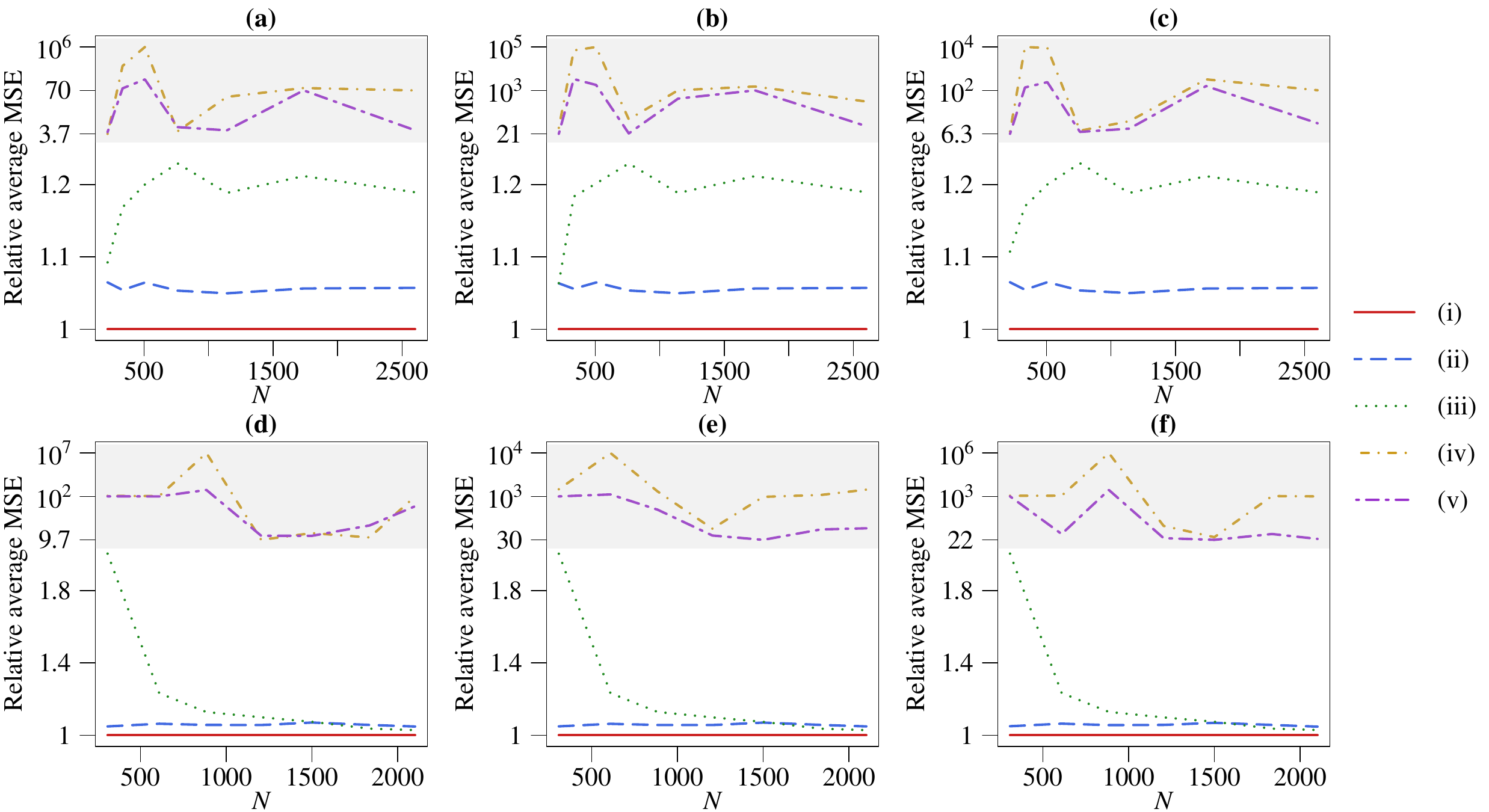}
    \vspace{-0.2cm}
  \captionsetup{font=small}
  \caption{
  The MSE
  $\E(\cdot-\sigma^2)^2$ averaged over $\sigma\in\{0.2,\ldots,2.2\}$ is computed for
  (i) proposed $\hat{\sigma}^2_\VB(m_{\min},1)$
  (ii) proposed $\hat{\sigma}^2_\Long(m_{\min},1)$,
  (iii) $\hat{\sigma}^2_{\textsc{sam}}$,
  (iv--v) $\hat{\sigma}^2_{\textsc{seq}}$ with bandwidths $\lfloor n^{1/2} \rfloor$ and $\lfloor n^{1/3} \rfloor$.
  The plots show the averaged MSE relative to estimator (i).
  Plots (a)--(c) represent regression functions (T1)--(T3) under (L1), respectively, while plots (d)--(f) correspond to the same functions under (L2).
  The minimum, median, and maximum relative errors among (iv) and (v) are labeled on the vertical axis of the gray regions in a nonlinear scale.
  }
  \label{fig:sim_longitudinal}
  \end{center}
\end{figure}

Lastly, we repeat the experiment with the same setup as above under two extremely imbalanced designs:
\begin{itemize}
    \item[(R1)]
    Let $q=\lceil 0.01n \rceil$ indices $i_1, \ldots, i_{q}$ are randomly selected from $\{1,\ldots,n\}$ without replacement.
    Then we set $s(i_1) = \ldots = s(i_q) = 100$ and
    generate $s(i)$ for $i\in\{1,\ldots,n\}\setminus\{i_1, \ldots, i_q\}$ independently
    from
    $\textsc{Unif}\{1,2,\ldots,5\}$
    for $n=\lfloor 75\times 1.5^v \rfloor$ for $v=0,\ldots,6$,
    as in (L1).
    \item[(R2)] The same setting is used as in (R1) except that the generating mechanism (L1) is replaced by (L2),
    i.e., $\textsc{Unif}\{v+1,\ldots,2(v+1)\}$ independently for $v=0,\ldots,6$ and $n=200$.
\end{itemize}

We compare our proposal $\hat{\sigma}^2_{\VB}(m'_{\min},1)$, which adjusts for the imbalanced design,
with $\hat{\sigma}^2_{\VB}(m_{\min},1)$, which does not adjust for the imbalanced design, and the sample variance estimator $\hat{\sigma}^2_{\textsc{sam}}$ in \citet{DMTZ2015}
under (R1) and (R2).
The threshold $s_{\max}$ is set to be $s_{0.95}$ for all scenarios.
Figure \ref{fig:long_supp_R} shows that
the proposed estimator $\hat{\sigma}^2_\VB(m'_{\min},1)$ is robust against the extremely imbalanced design points
while $\hat{\sigma}^2_\VB(m_{\min},1)$ has unstable performance.
Yet $\hat{\sigma}^2_\VB(m'_{\min},1)$ has better performance than
$\hat{\sigma}^2_{\textsc{sam}}$ in general.

\begin{figure}
  \begin{center}
  \includegraphics[width=0.9\textwidth]{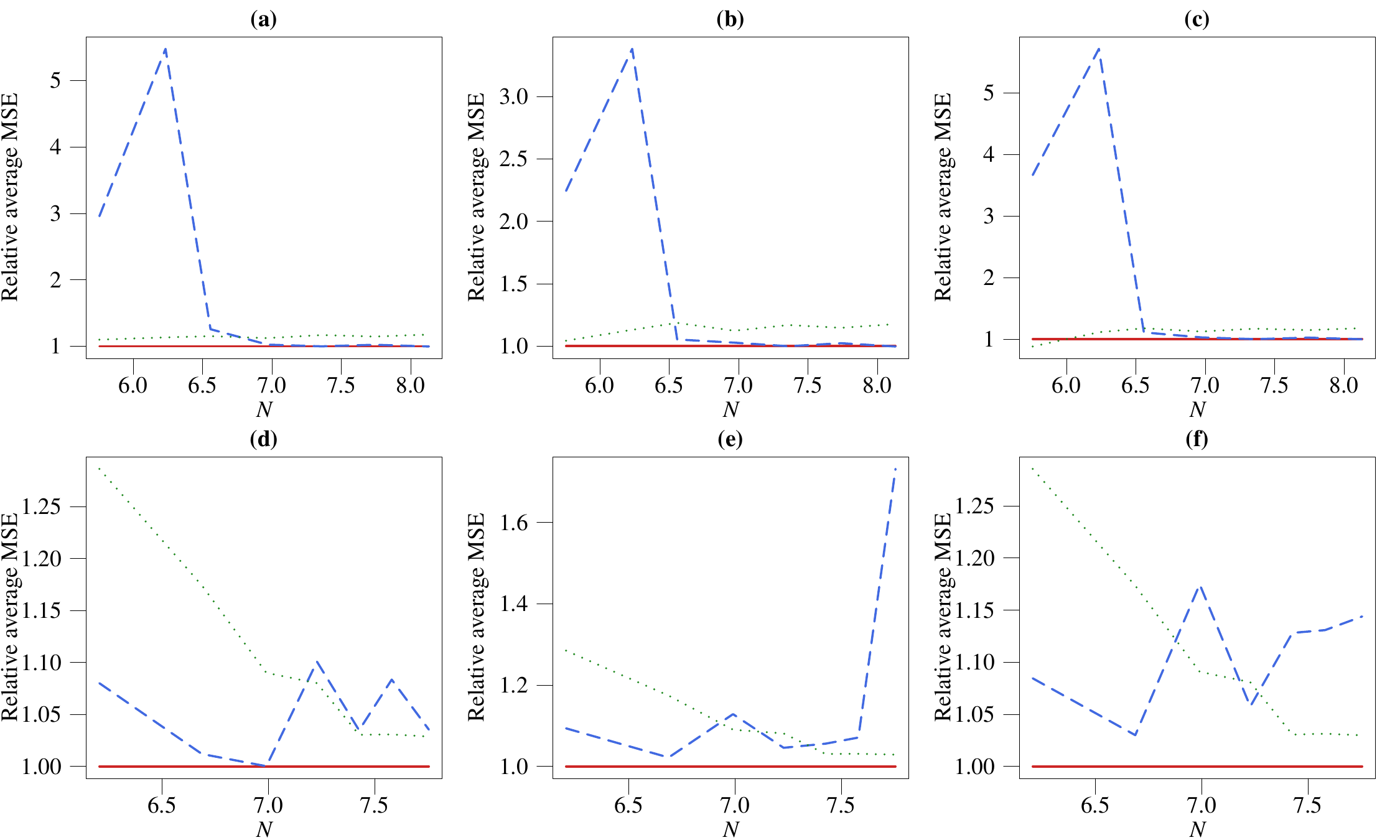}
  \captionsetup{font=small}
  \caption{The MSE
  $\E(\cdot-\sigma^2)^2$ averaged over $\sigma\in\{0.2,\ldots,2.2\}$ is computed for
  (i) proposed $\hat{\sigma}^2_\VB(m'_{\min},1)$ -- red solid line,
  (ii) the proposed $\hat{\sigma}^2_\VB(m_{\min},1)$ -- blue long dash line,
  (iii) $\hat{\sigma}^2_{\textsc{sam}}$ -- green dotted line,
  The plots show the value of the averaged MSE relative to estimator (i).
 Plots (a)--(c) represent regression functions (T1)--(T3) under (R1), respectively, while plots (d)--(f) correspond to the same functions under (R2).
  }
  \label{fig:long_supp_R}
  \end{center}
\end{figure}

\subsection{Application \label{sec:application}}
Consider a regression model with
potentially heteroscedastic noises:
$Y_i = g(X_i) + V^{1/2}(X_i) \xi_i$, $i=1,\ldots,n$,
where
$V(\cdot)$ is a variance function
and $(\xi_i)_{i=1}^n$ are independent noises of zero mean and unit variance.
\citet{Z2009} considered testing
\begin{equation}\label{eqt:H0H1_application}
   H_0:
   V(\cdot)\text{ is a constant}
   \quad\text{
   against}
   \quad
   H_1: V(\cdot)\text{ is non-constant}.
\end{equation}
So, $H_0$ assumes homoscedasticity.
The original test uses a residual-based variance estimator
\begin{eqnarray}\label{eqt:DTvarEst}
    \tilde{\sigma}^2_{\textsc{dt}}=\frac{1}{n}\sum_{i=1}^n\big\{Y_i-\hat{g}(X_i)\big\}^2,\quad \text{where}\quad\hat{g}(x)=\sum_{i=1}^n\frac{K\big\{(x-X_i)/h_n\big\}}{\sum_{j=1}^nK\big\{(x-X_j)/h_n\big\}}Y_i,
\end{eqnarray}
where $K$ is a kernel and $h_n$ is a bandwidth.
Let $\hat{u}_i=\big\{Y_i-\hat{g}(x)\big\}^2-\tilde{\sigma}^2_{\textsc{dt}}$, and
$K_{ij} = K\{ (x_i-x_j)/h_n \}$.
The test
statistic $T_n=nh_n^{1/2}W_n/\hat{s}$
is asymptotically normal under $H_0$, where
\begin{align*}
    W_n=
    \frac{1}{n(n-1)}
    \mathop{\sum\sum}_{1\leq i,j\leq n; i\neq j}\frac{1}{h_n}K_{ij} \hat{u}_i\hat{u}_j
    \qquad  \text{and} \qquad
    \hat{s}^2=\frac{2}{n(n-1)}\mathop{\sum\sum}_{1\leq i,j\leq n; i\neq j}
    \frac{1}{h_n} ( K_{ij}\hat{u}_i\hat{u}_j)^2.
\end{align*}
{However, the normal approximation is inaccurate in
finite samples and
leads to an under-size problem.
The author suggested that wild bootstrap can be used to mitigate the size inaccuracy.
Alternatively,
we propose to replace $\tilde{\sigma}^2_{\textsc{dt}}$ by our
proposal difference-based variance estimator.
We conjecture that it refines the test
due to its robustness.
We consider four regression functions (F1)--(F4) from \S~\ref{sec:non_equ_design_exp}.
(F1) $g(x)=\cos(4\pi x)\exp(x)$,
(F2) $g(x)=P^{(1,4)}_2(x)$,
(F3) $g(x)=\sin(2\pi x)+x\exp(1+x)-\mathbb{1}(x<0.3)+\mathbb{1}(x<0.6)$, and
(F4) $g(x)=\cos\{4\cos^{-1}(t)\}$,
where $P^{(\alpha,\beta)}_k(\cdot)$ is defined in Example \ref{eg:compare}.
We consider two variance functions:
(V1) $V(x)=1+\lambda\mathbb{1}\{t\in[0.25,0.5)\}/2-\lambda\mathbb{1}\{t\in[0.5,0.75)\}/2$ and
(V2) $V(x)=\{1+\lambda L(x,3)^2\}^2$
for some $\lambda\in\mathbb{R}$,
where $L(\cdot,p)$ is a Legendre polynomial of order $p$.
The noises $\xi_i\sim\Normal(0,1)$ independently.
$K(\cdot)$ is chosen to be Epanechnikov kernel, and
$h_n$ is selected by the \texttt{lpbwselect} function in R.
The parameter $\lambda$ in the $V(\cdot)$ controls
the degree of heteroscedasticity.
So, $H_0$ in (\ref{eqt:H0H1_application}) means
$\lambda=0$.}

Denote the original test as $T$.
Our suggested tests admit the same form as $T$ but replacing
$\tilde{\sigma}^2_{\textsc{dt}}$ with $\hat{\sigma}^2_{*}(m,r)$,
denoted as $T^{*}(m,r)$.
When there exists repeated measurements,
we replace $\tilde{\sigma}^2_{\textsc{dt}}$ with $\hat{\sigma}^2_{\VB}(m,r)$,
denoted as $T^{\VB}(m,r)$.
In addition, We also apply the test by replacing $\tilde{\sigma}^2_{\textsc{dt}}$ with $\hat{\sigma}^2_{\textsc{sam}}$ to serve as a competitor,
denoted as $T^{\textsc{sam}}$.
All tests are performed at size $5\%$
under $2048$ replications.
Table \ref{table:het_test_size} presents the type-I error rates under $H_0$,
whereas Table \ref{table:het_test_V1V2} displays
the rejection rates under $H_1$ at $\lambda=0.4,0.8$.

\begin{table}[t]
    \def~{\hphantom{0}}
    \caption{\label{table:het_test_size}Type-I error rates (\%) under regression functions F1--F4.
    The first column shows the design setups,
    where the upper section corresponds to standard design and the lower section corresponds to repeated design.
    The last column shows the mean absolute deviation (\%) of the type-I error rates
    from the nominal type-I error rates, i.e., $5\%$, among the four regression functions.
    The variance-optimal difference-based estimator is proposed by \citet{HKT1990}.
    }
    \begin{adjustbox}{width=\textwidth}
        \begin{tabular}{cccccccccc}
        \toprule
            &&& \multicolumn{4}{c}{regression function} & \\
            \cline{4-7}
            design & test & reference & F1 & F2 & F3 & F4 & deviation\\[2pt]
            \midrule
            $n=150$ & $T$ & \citet{Z2009} & $3.17$ & $2.44$ & $2.98$ & $2.83$ & $2.14$ \\
            $X_i\sim\textsc{Beta}(0.5,0.5)$ & $T^{*}(3,0)$ & variance-optimal & $4.44$ & $2.88$ & $3.66$ & $3.66$ & $1.34$  \\
            $s(i)=1$ & $T^{*}(3,2)$ & bias-optimal & $9.42$ & $9.72$ & $9.96$ & $11.08$ & $5.04$ \\
            & $T^{*}(3,1)$ & proposal & $4.49$ & $4.05$ & $3.91$ & $4.88$ & $0.67$ \\[2pt]
            \midrule
            $n=100$ & $T$ & \citet{Z2009} & $2.64$ & $3.47$ & $3.56$ & $2.78$ & $1.89$ \\
            $X_i=i/n$ & $T^{\textsc{sam}}$ & \citet{DMTZ2015} & $6.1$ & $8.06$ & $7.76$ & $7.08$ & $2.25$ \\
            $s(i)\sim\Unif\{1,2,3\}$ & $T^{\VB}(m_{\min},1)$ & proposal & $3.71$ & $4.88$ & $4.39$ & $3.76$ & $0.82$ \\
            \bottomrule
        \end{tabular}
    \end{adjustbox}
\end{table}

\begin{table}[t]
    \setlength{\tabcolsep}{2pt}
    \def~{\hphantom{0}}
    \caption{\label{table:het_test_V1V2}Power under regression functions F1--F4, variance functions V1--V2,
    and heteroscedasticity level $\lambda$.
    The first column shows the design setups,
    where the upper section corresponds to standard design and the lower section corresponds to repeated design.
    The last column shows the power averaged over
    the eight cases.
    The nominal type-I error is 5\%.
    }
    \begin{adjustbox}{width=\textwidth}
        \begin{tabular}{cccccccccccc}
        \toprule
            &&      & \multicolumn{8}{c}{regression function \& variance function} &\\
            \cline{4-11}
            design & test & $\lambda$ & F1-V1 & F2-V1 & F3-V1 & F4-V1 & F1-V2 & F2-V2 & F3-V2 & F4-V2 & power\\[2pt]
            \midrule
            $n=150$ & $T$ & $0.4$ & $4.20$ & $3.66$ & $4.83$ & $3.56$ & $16.11$ & $17.19$ & $19.78$ & $20.17$ & $11.19$\\
            $X_i\sim\textsc{Beta}(0.5,0.5)$ & $T^{*}(3,0)$ &  & $5.37$ & $4.64$ & $5.37$ & $4.10$ & $16.36$ & $18.12$ & $20.65$ & $21.44$ & $12.01$\\
            $s(i)=1$ & $T^{*}(3,2)$ & & $12.30$ & $12.60$ & $12.94$ & $13.04$ & $25.2$ & $28.71$ & $28.47$ & $31.10$ & $20.55$\\
            & $T^{*}(3,1)$ & & $5.66$ & $5.42$ & $5.81$ & $5.18$ & $18.36$ & $20.07$ & $22.36$ & $23.19$ & $13.26$\\[2pt]
            & $T$ & $0.8$ & $6.69$ & $8.01$ & $8.30$ & $6.93$ & $52.00$ & $50.00$ & $53.66$ & $50.15$ & $29.47$\\
            & $T^{*}(3,0)$ & & $7.86$ & $9.03$ & $9.67$ & $7.81$ & $53.08$ & $51.17$ & $54.93$ & $52.10$ & $30.71$\\
            & $T^{*}(3,2)$ & & $13.92$ & $20.46$ & $18.21$ & $18.36$ & $60.79$ & $60.16$ & $62.01$ & $60.99$ & $39.36$\\
            & $T^{*}(3,1)$ & & $8.74$ & $11.08$ & $10.4$ & $9.91$ & $54.93$ & $52.88$ & $57.47$ & $53.52$ & $32.37$\\[2pt]
            \midrule
            $n=100$ & $T$ & $0.4$   & $6.45$ & $8.40$ & $8.30$ & $7.42$ & $7.62$ & $7.08$ & $8.45$ & $7.57$ & $7.66$\\
            $X_i=i/n$ & $T^{\textsc{sam}}$ & & $11.72$ & $14.60$ & $13.96$ & $12.40$ & $13.13$ & $15.33$ & $14.60$ & $13.23$ & $13.62$\\
            $s(i)\sim\Unif\{1,2,3\}$ & $T^{\VB}(m_{\min},1)$ & & $7.81$ & $10.25$ & $10.64$ & $8.94$ & $9.33$ & $9.47$ & $10.30$ & $9.23$ & $9.50$\\
            [2pt]
            & $T$ & $0.8$ & $25.59$ & $33.15$ & $28.76$ & $31.84$ & $20.75$ & $20.75$ & $23.24$ & $21.48$ & $25.7$\\
            & $T^{\textsc{sam}}$ & & $34.03$ & $42.04$ & $36.38$ & $41.75$ & $28.96$ & $29.98$ & $30.32$ & $31.3$ & $34.34$\\
            & $T^{\VB}(m_{\min},1)$ & & $28.56$ & $36.47$ & $32.13$ & $35.3$ & $23.73$ & $23.39$ & $25.78$ & $25.59$ & $28.87$\\
            \bottomrule
    \end{tabular}
    \end{adjustbox}
\end{table}

For the standard design,
while $T^{*}(3,0)$ is size-accurate but least powerful,
$T^{ *}(3,2)$ is the most powerful yet size-inaccurate.
Our proposed $T^{ *}(3,1)$ strikes a good balance
and is the most powerful and size-accurate test.
When there are repeated measurements,
the difference-based test $T^{\VB}(m_{\min},1)$ is also the most powerful and size-accurate test.
The test $T^{\textsc{sam}}$, despite being most powerful,
is over-sized.
The application of a difference-based estimator
mitigates the inaccurate normal approximation and
reduces computation time substantially as bootstrap is no longer needed; see the last column of Table \ref{table:het_test_size}.
We note that applying bootstrap can improve all the tests in terms of size and power at the expense of computation costs.

\subsubsection{Real-data application: global surface temperature}
The National Center for Environmental Information
collects the surface temperature of the Earth.
We aim at testing whether the surface temperature data is heteroscedastic.
The monthly average land-and-ocean-combined global surface temperature data from 1850 to 2023 are obtained
from
\href{https://www.ncei.noaa.gov/data/noaa-global-surface-temperature/v5.1/access/timeseries/aravg.mon.land_ocean.90S.90N.v5.1.0.202312.asc}{their database}.
{The dataset contains no repeated measurements.
We applied $T$ and $T^{*}(3,1)$ to various subsamples
of the dataset that correspond to the past $150$, $100$, and $50$ years of data,
where our proposal $\hat{\sigma}^2_*(3,1)$ is used for variance estimation in $T^{*}(3,1)$.}
A summary of results is shown in Table \ref{table:co2_global_daily}.

\begin{table}[t]
    \def~{\hphantom{0}}
    \caption{\label{table:co2_global_daily}The $p$-values of $T$ and $T^{*}(3,1)$ based on the whole, past 150-year, 100-year and 50-year data sets.
    }
    \centering
    \begin{adjustbox}{width=0.6\textwidth}
        \begin{tabular}{ccccc}
            \toprule
            Data & 1850--2023 & 1873--2023 & 1923--2023 & 1973--2023\\
            \midrule
            $T$ & $<10^{-5}$ & $0.0012$ & $0.078$ & $0.042$ \\
            $T^{ *}(3,1)$ & $<10^{-200}$ & $<10^{-200}$ & $<10^{-100}$ & $<10^{-20}$ \\
            \bottomrule
        \end{tabular}
    \end{adjustbox}
\end{table}

The proposed test $T^{ *}(3,1)$ consistently indicates
compelling evidence of heteroscedasticity, whereas
the original test $T$ only
reveal less convincing conclusions and may fail to reject the homoscedasticity hypothesis at 5\% level
when the dataset is subsampled.
In particular, the conclusion based on $T$ is ambiguous
based on the datasets during 1923--2023 and 1973--2023.

\begin{figure}[t]
  \begin{center}
  \includegraphics[width=\textwidth]{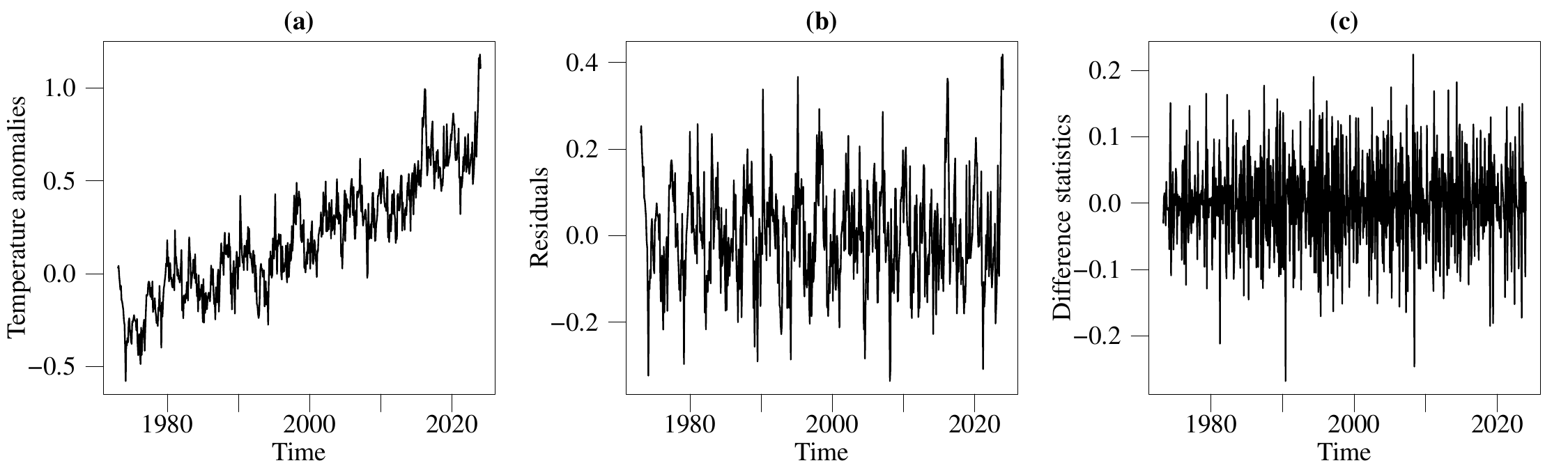}
  \captionsetup{font=small}
  \caption{Plot (a) shows the temperature anomalies relative to a 1971 to 2000 monthly climatology.
  Plot (b) shows the residuals $Y_i-\hat{g}(t)$ against time.
  Plot (c) shows the difference statistics $D_i$ against time.}
  \label{fig_temperature_residual}
  \end{center}
\end{figure}

Figure \ref{fig_temperature_residual} plots the residuals $\hat{\epsilon}_i = Y_i - \hat{g}(X_i)$
and the difference statistics $D_i$ based on the past 50 years of data.
The residuals $\hat{\epsilon}_i$ exhibit a more pronounced seasonal pattern,
indicating that they are less likely to be white noises,
whereas the difference statistics $D_i$ do not reveal this problem.
Furthermore, we observe that the variance estimates obtained
using $\tilde{\sigma}^2_{\textsc{dt}}$ based on datasets from
the past 150, 100, and 50 years are consistently larger than those obtained using $\hat{\sigma}^2_{*}(3,1)$.
We conjecture that this difference arises because $\tilde{\sigma}^2_{\textsc{dt}}$ is
more severely affected by nonstationarity
in the input dataset,
as a significant portion of the underlying trend $g(X)$ remains
in the residuals even after de-trending.
This phenomenon contributes to the stronger and more consistent
findings derived from $T^{ *}(3,1)$.
It highlights the contribution of the proposed variance estimator.

\subsubsection{Real-data application: precipitation in Alexandria}
The daily precipitation data collected from $5$ weather stations in Alexandria, Virginia, is downloaded from
the \href{https://www.ncei.noaa.gov/access/past-weather/}{Past Weather Tool} maintained by the National Center for Environmental Information.
The data contains $2$--$5$ repeated measurements in each day,
where the average number of repetitions is $4.71$.
We aggregate the daily precipitation of each station to compute the monthly total precipitation of each station.
We aim to test whether the monthly total rainfall data of Alexandria, Virginia, exhibits heteroscedasticity.
\begin{figure}[t]
  \begin{center}
  \includegraphics[width=0.7\textwidth]{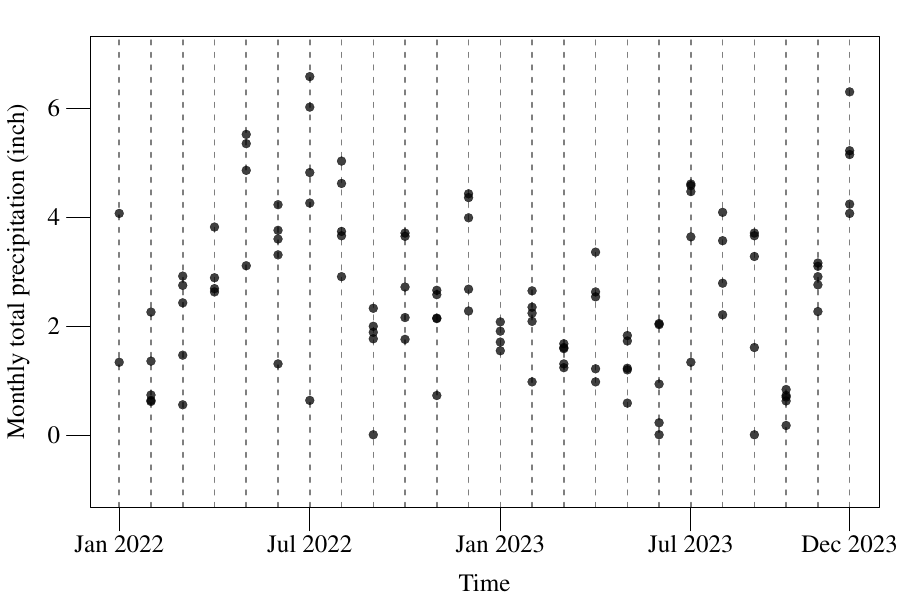}
  \captionsetup{font=small}
  \caption{Plot of monthly total precipitation against time.}
  \label{fig_prcp}
  \end{center}
\end{figure}

The $p$-value of $T$ is $0.101$ and the $p$-value of $T^{\VB}(3,1)$ is $0.00165$.
Figure \ref{fig_prcp} plots the monthly total precipitation data against time.
The data appears to exhibit some heteroscedasticity.
In particular,
the data has higher variance in May and June than other months.
However,
the original test $T$ fails to reject the null hypothesis even at $10\%$ level.
On the other hand,
$T_{\VB}(3,1)$ unveils convincing evidence of heteroscedasticity noise.
It demonstrates the value of the proposed variance estimator.

\begin{appendix}
\section{Proofs}
\subsection{Proof of Theorem {\ifnum\isXr=1{\ref{thm_diff_bias}}\else{4.1}\fi}}

(Local differencing)
By Proposition {\ifnum\isXr=1{\ref{prop:poly_cancelling}}\else{1}\fi},
we have $D^g_i=0$ for any $k$th degree polynomial trends $g(X)=\sum_{\ell=0}^ka_\ell X^\ell$,
where $0\leq k\leq r$,
hence $\Bias\big\{\hat{\sigma}^2_{*}(m,r)\big\}=0$.
Otherwise, $D_i^g=O_p(n^{-r-1})$.
By \citet{SGW1993},
\begin{align*}
    \Bias\big\{\hat{\sigma}^2_{*}(m,r)\big\}&=\frac{1}{n-m}\sum^{n-m}_{i=1}(D^g_i)^2\\
    &=\frac{1}{n-m}\sum^{n-m}_{i=1}\left\{\frac{g^{(r+1)}(X_i)}{(r+1)!}\sum_{j=0}^md_{i,j}(X_{i+j}-X_{i})^{r+1}\right\}^2+o_p(n^{-2r-2})\\
    &=O_p(n^{-2r-2}).
\end{align*}
The results remain the same when $X_i=i/n$ $(i=1,\ldots,n)$
since it satisfies Assumption {\ifnum\isXr=1{\ref{assumption:basic}}\else{2}\fi}.

(Global differencing)
By Proposition {\ifnum\isXr=1{\ref{prop:poly_cancelling}}\else{1}\fi}, we have $D^g_i=0$ in the case where $g(\cdot)$ is a constant function.
Hence $\Bias\left\{\hat{\sigma}^2(m,r)\right\}=0$.
Otherwise, $D_i^g=O_p(n^{-1})$ and
\begin{eqnarray*}
    \Bias\big\{\hat{\sigma}^2(m,r)\big\}&=&\frac{1}{n-m}\sum^{n-m}_{i=1}\left\{g^{(1)}(X_i)\sum_{j=0}^md_{j}(X_{i+j}-X_{i})\right\}^2+o_p(n^{-2})\\
    &=&O_p(n^{-2}).
\end{eqnarray*}
However, since Assumptions
{\ifnum\isXr=1{\ref{ass:typeDiffSeq}\ref{assumption:design_adapt_con} and \ref{ass:typeDiffSeq}\ref{assumption:regular_con}}\else{1(a) and 1(b)}\fi}
are equivalent when $X_i=i/n$ $(i=1,\ldots,n)$.
The results follow immediately from those results of local differencing.

\subsection{Proof of Theorem {\ifnum\isXr=1{\ref{thm_diff_var}}\else{4.2}\fi}}
Denote $1_n$ as a column vector of $1$ of length $n$,
$a_{i,j}$ as the $(i,j)$th element of matrix $A$
and $g=(g(X_1),\ldots,g(X_n))^\T$.
By \citet{SGW1993},
the variance of a difference-based estimator is
\begin{align}\label{eqt:var_SGW1993}
    \Var\big\{\hat{\sigma}_*^2(m,r)\big\}
    &=\frac{1}{\tr(A)^2}\Bigg(2\sigma^4\tr(A^2)+4\sigma^2g^\T A^2g \nonumber\\
    &\qquad+2\gamma_3\sigma^3\bigg[\tr\Big\{A\Diag(Ag{1}^\T_{n})\Big\}+g^\T A\Diag(A)1_n\bigg]  \\
    &\qquad +(\lambda_4-3)\sigma^4\tr\Big\{A\Diag(A)\Big\}\Bigg).\nonumber
\end{align}

Under either Assumption
{\ifnum\isXr=1{\ref{ass:typeDiffSeq}\ref{assumption:design_adapt_con} or \ref{ass:typeDiffSeq}\ref{assumption:regular_con}}\else{1(a) or 1(b)}\fi},
the second constraint in {\ifnum\isXr=1{(\ref{diff_poly_cond0})}\else{(7)}\fi} ensures that
\begin{align}\label{eqt:trA}
    \tr(A)&= \sum_{i=1}^{n-m}\sum_{j=0}^m d_{i,j}^2=n-m.
\end{align}

By Assumption {\ifnum\isXr=1{\ref{assumption:Dgi_mean}}\else{4}\fi}, we have
\begin{align*}
    \sum_{i=1}^{n-m}D^g_i/(n-m)&=O_p(n^{-\kappa}),\\
    \sum_{i=1}^{n-m}(D^g_i)^2/(n-m)&=O_p(n^{-2\kappa}),\\
    \sum_{\ell=1}^m\sum_{i=1}^{n-m-\ell}D^g_iD^g_{i+\ell}/(n-m)&=O_p(n^{-2\kappa}),
\end{align*}
which imply that
\begin{eqnarray*}
    \tr(A^2)&=&\sum^{n-m}_{i=1}\left(\sum^m_{j=0}d^2_{i,j}\right)^2+2\sum^m_{\ell=1}\sum^{n-m-\ell}_{i=1}\left(\sum^{m-\ell}_{j=0}d_{i,j+\ell}d_{i+\ell,j}\right)^2
    =O(n); \\
    g^\T A^2g&=&\sum^{n-m}_{i=1}\left(\sum^m_{j=0}d^2_{i,j}\right)(D^g_i)^2+2\sum^m_{\ell=1}\sum^{n-m-\ell}_{i=1}\left(\sum^{m-\ell}_{j=0}d_{i,j+\ell}d_{i+\ell,j}\right)D^g_iD^g_{i+\ell} \\
    &=&O_p(n^{-2\kappa+1}); \\
    \tr\left\{A\Diag(Ag{1}^\T_{n})\right\}&=&\sum^{n-m}_{i=1}\sum^m_{j=0}d^2_{i,j}\sum^{\min(i+j,n-m)}_{h=\max(1,i+j-m)}d_{h,i+j-h}D^g_h
    =O_p(n^{-\kappa+1}); \\
    g^\T A\Diag(A)1_n&=&\sum^{n-m}_{i=1}\left(\sum^m_{j=0}d_{i,j}a_{i+j,i+j}\right)D^g_i
    = O_p(n^{-\kappa+1}); \\
    \tr\left\{A\Diag(A)\right\}&=&\sum^{n-m}_{i=1}\sum^m_{j=0}d^2_{i,j}a_{i+j,i+j}
    =O(n),
\end{eqnarray*}
where we have used the fact that $a_{i,j}=O(m)$ and $m<\infty$
for all $i,j$.
Therefore, the variance of $\hat{\sigma}_*^2(m,r)$
can be written as
\begin{equation*}
    \Var\left\{\hat{\sigma}_*^2(m,r)\right\}=\frac{1}{(n-m)^2}\left[2\sigma^4\tr(A^2)+(\lambda_4-3)\sigma^4\tr\left\{A\Diag(A)\right\}\right]+O_p(n^{-\kappa-1}).
\end{equation*}

\subsection{Proof of Theorem {\ifnum\isXr=1{\ref{thm:long_diff_bias}}\else{4.3}\fi}}
We first drive the lower bound of $m$ such that
$D_i$ is repetitional-bias-corrected
for all $i$ and each $r$.
We call $D_{\ell(i)}$ $(i\in\ell_1\cup\ell_2)$ the $i$th leading difference statistics
because
\[
    D_{\ell(i)}=d_{i,0}Y_{1,i}+\sum_{j=i}^md_{i,j}Y_{i+j},
\]
where
the first component in $D_{\ell(i)}$ must be the
first observation $Y_{1,i}$ among $\{Y_{k,i}\}_{k=1}^{s(i)}$.
We call the remaining difference statistics $D_k$, where $k\neq \ell(i)$ for any $i\in\ell_1\cup\ell_2$, non-leading difference statistics.
Because their first component must not be the first observation from any distinct design points.

If $r=0$, then it can be easily seen that $m\geq 1$ is sufficient.
So, in this case, $m_{\min}=1$.
Now, we consider $r>0$.
Recall in \S~{\ifnum\isXr=1{\ref{sec:repeated}}\else{3}\fi},
intra-group differencing ensures
that $D^{ g}_i=0$
for $i\in\mathcal{I}_m$
because it can be easily seen that
\begin{align*}
    D^{ g}_i=\sum_{j=0}^md_{i,j}g(X_{i+j})
    =\sum_{j=0}^{m(i)}\sum_{t\in\Lambda_j(i)}d^\circ_{i,t}g(X^\star_{i+j})
    =\sum_{j=0}^{m(i)}g(X^\star_{i+j})\sum_{t\in\Lambda_j(i)}w_{i,t}/c_i=0
\end{align*}
Hence, it suffices to consider
inter-group differencing (i.e., $i\notin\mathcal{I}_m$).
The difference sequence $d_{(i)} = (d_{i,0}, \ldots, d_{i,m})^{\T}$
can only provide
$m(i)+1$ degrees of freedom for bias correction
even though it has $m+1$ components;
see Example {\ifnum\isXr=1{\ref{eg:diff_long}}\else{2}\fi} for an illustration.
The $m(i)+1$ degrees of freedom are used to ensure that
the $r+2$ constraints stated in
{\ifnum\isXr=1{(\ref{diff_poly_cond0})}\else{(7)}\fi} and {\ifnum\isXr=1{(\ref{diff_poly_random_cond})}\else{(11)}\fi}
are satisfied.
Thus,
provided $\sup_{i=1,\ldots,N-m}m(i)<\infty$,
for the results (a) and (b) in Proposition {\ifnum\isXr=1{\ref{prop:poly_cancelling}}\else{1}\fi} to hold for $i\notin\mathcal{I}_m$ under repetitional design, i.e.,
\begin{align*}
    D^{ g}_i=\sum_{j=0}^md_{i,j}g(X_{i+j})
    =\sum_{j=0}^{m(i)}g(X^\star_{i+j})\sum_{t'\in\Lambda_j(i)}w_{i,t}d_{i,t(i,t')}/c_i\\
    =\sum_{j=0}^{m(i)}d_{i,j}g(X^\star_{i+j})/c_i
    =O(n^{-r-1}),
\end{align*}
the order $m$ has to be sufficiently large such that
$m(i)\geq r+1$ for $i\notin\mathcal{I}_m$.

We move on to show that,
for the $i$th leading difference statistics $D_{\ell(i)}$ $(i\in\ell_1\cup\ell_2)$,
the value of $m$ must satisfy that
\begin{align}\label{eq:thm2_proof:1}
    m \geq M(i)\equiv \left\{
		\begin{array}{ll}
		M(1,i) \equiv S\left\{i,i+r-1\right\}\mathbb{1}(r>0)+1 & \quad \text{if } \min_{i'=i,\ldots,i+r-1}s(i')>1; \\
		M(2,i)\equiv S\left\{i,i+r\right\}\mathbb{1}(r>0)+\mathbb{1}(r=0) & \quad \text{if } \min_{i'=i,\ldots,i+r-1}s(i')=1 ,
		\end{array}
	\right.
\end{align}
in order to ensure that $D_{\ell(i)}$ is repetitional-bias-corrected,
where these two case in (\ref{eq:thm2_proof:1})
correspond to $\ell(i)\in\ell_1$ and $\ell(i)\in\ell_2$, respectively.
The expression $S(i,i+a)$ counts the number of observations
such that $(X_{\ell(i)},\ldots,X_{\ell(i)+S(i,i+a)-1})=(X_{1,i},\ldots,X_{s(i+a),i+a})$ consists of $a+1$ unique design points.
We then observe the following:
\begin{itemize}
    \item[(i)]
If $\min_{i'=i,\ldots,i+r-1}s(i')>1$,
the condition $m\geq M(1,i)$ implies
$m\left\{\ell(i)\right\}\geq r+1$
unless
$\ell(i)\in\mathcal{I}_m$.
It is because $m=M(1,i)$ implies that
$$(X_{\ell(i)},\ldots,X_{\ell(i)+M(1,i)})
=(X_{1,i},\ldots,X_{s(i+r-1),i+r-1},X_{\ell(i)+M(1,i)-1},X_{\ell(i)+M(1,i)}).$$
So, we have $m\left\{\ell(i)\right\}= r+1$
unless
$X_{\ell(i)+M(1,i)-1}=X_{\ell(i)+M(1,i)}$,
which means $\ell(i)\in\mathcal{I}_m$.
See the cartoon in Figure \ref{fig:thm2:1} for a graphical illustration.
\begin{figure}[t]
        \centering
        \begin{tikzpicture}
            \tikzset{
                vertex/.style={circle,draw, inner sep=0pt, outer sep=0pt
                , minimum width=0.4cm
                ,fill=black
                },
            }
            \draw[->] (-\tikzTextMinDist,0)--(6*\tikzTextDist,0) node[below=1mm]{};
            \node at (6*\tikzTextDist+\tikzTextMinDist,0) [] {$i$};
            \draw[->] (0,-\tikzTextMinDist)--(0,3*\tikzTextDist) node[left]{};
            \node at (0,3*\tikzTextDist+\tikzTextMinDist) [] {$h$};
            \foreach \i in {1}{
                \pgfmathsetmacro\index{\i-4}
                \draw (\i*\tikzTextDist,0) node[below] {$i$};
            }
            \foreach \i in {2,...,5}{
                \pgfmathsetmacro\index{\i-1}
                \draw (\i*\tikzTextDist,0) node[below] {$i+\pgfmathprintnumber{\index}$};
            }
            \foreach \h in {1,...,3}{
                \draw (0,\h*\tikzNodeDist) node[left] {$\h$};
            }
            \foreach \i in {1,...,3}{
                \foreach \h in {1,2,3}{
                    \node at (\i*\tikzTextDist,\h*\tikzNodeDist) [vertex] {};
                }
            }
            \foreach \i in {4}{
                \foreach \h in {1}{
                    \node at (\i*\tikzTextDist,\h*\tikzNodeDist) [vertex] {};
                }
            }
            \foreach \i in {5}{
                \foreach \h in {1,2,3}{
                    \node at (\i*\tikzTextDist,\h*\tikzNodeDist) [vertex] {};
                }
            }
            \draw[rounded corners,green,thick]
            (1*\tikzTextDist-\tikzTextMinDist,4*\tikzNodeDist-\tikzNodeMinDist)
            --(1*\tikzTextDist-\tikzTextMinDist,1*\tikzNodeDist-\tikzNodeMinDist)
            --(2*\tikzTextDist+\tikzTextMinDist,1*\tikzNodeDist-\tikzNodeMinDist)
            --(2*\tikzTextDist+\tikzTextMinDist,3*\tikzNodeDist-\tikzNodeMinDist)
            --(1*\tikzTextDist+\tikzTextMinDist,3*\tikzNodeDist-\tikzNodeMinDist)
            --(1*\tikzTextDist+\tikzTextMinDist,4*\tikzNodeDist-\tikzNodeMinDist)
            --cycle  {};
            \draw[rounded corners,orange,thick]
            (3*\tikzTextDist-\tikzTextMinDist,4*\tikzNodeDist-\tikzNodeMinDist)
            --(3*\tikzTextDist-\tikzTextMinDist,1*\tikzNodeDist-\tikzNodeMinDist)
            --(5*\tikzTextDist+\tikzTextMinDist,1*\tikzNodeDist-\tikzNodeMinDist)
            --(5*\tikzTextDist+\tikzTextMinDist,2*\tikzNodeDist-\tikzNodeMinDist)
            --(3*\tikzTextDist+\tikzTextMinDist,2*\tikzNodeDist-\tikzNodeMinDist)
            --(3*\tikzTextDist+\tikzTextMinDist,4*\tikzNodeDist-\tikzNodeMinDist)
            --cycle  {};
        \end{tikzpicture}
        \caption{Suppose $s(i)=3$ for all $i$ except $s(i+3)=1$.
        Set $(m,r)=(M(1,i),1)=(4,1)$.
        The difference statistics $D_{\ell(i)}$ and $D_{\ell(i+2)}$, which are the green and orange group respectively, represent the case $\ell(i)\in\mathcal{I}_m$ and the case $m\{\ell(i+2)\}=r+1$.
        }

        \label{fig:thm2:1}
\end{figure}

\item[(ii)]
If $\min_{i'=i,\ldots,i+r-1}s(i')=1$,
then $\ell(i)\notin\mathcal{I}_m$.
So, we require $m\left\{\ell(i)\right\}\geq r+1$
to ensure that (a) and (b) in Proposition {\ifnum\isXr=1{\ref{prop:poly_cancelling}}\else{1}\fi} hold.
It is trivial that $m\geq M(2,i)$ implies $m(i)\geq r+1$.
See Remark \ref{remark:thm2:1} for the details why $M(1,i)$ is insufficient.
\end{itemize}

Next, we proceed to consider non-leading
difference statistics.
If $m$ satisfies (\ref{eq:thm2_proof:1}) for $i=\ell(j),\ell(j+1)$ and some $j$,
then $m(j')\geq r+1$ must also hold for $j'\in\{\ell(j)+1,\ldots,\ell(j+1)-1\}$,
unless $j'\in\mathcal{I}_m$.
So, $D_{\ell(j)+1},\ldots,D_{\ell(j+1)-1}$ are repetitional-bias-corrected.
Hence, it is sufficient to ensure that $m\geq m_{\min}$.
See the cartoon in Figure \ref{fig:thm2:2} for a graphical illustration.
\begin{figure}[t]
        \centering
        \begin{tikzpicture}
            \tikzset{
                vertex/.style={circle,draw, inner sep=0pt, outer sep=0pt
                , minimum width=0.4cm
                ,fill=black
                },
            }
            \draw[->] (-\tikzTextMinDist,0)--(6*\tikzTextDist,0) node[below=1mm]{};
            \node at (6*\tikzTextDist+\tikzTextMinDist,0) [] {$i$};
            \draw[->] (0,-\tikzTextMinDist)--(0,3*\tikzTextDist) node[left]{};
            \node at (0,3*\tikzTextDist+\tikzTextMinDist) [] {$h$};
            \foreach \i in {1}{
                \pgfmathsetmacro\index{\i-4}
                \draw (\i*\tikzTextDist,0) node[below] {$i$};
            }
            \foreach \i in {2,...,5}{
                \pgfmathsetmacro\index{\i-1}
                \draw (\i*\tikzTextDist,0) node[below] {$i+\pgfmathprintnumber{\index}$};
            }
            \foreach \h in {1,...,3}{
                \draw (0,\h*\tikzNodeDist) node[left] {$\h$};
            }
            \foreach \i in {1,...,5}{
                \foreach \h in {1,2,3}{
                    \node at (\i*\tikzTextDist,\h*\tikzNodeDist) [vertex] {};
                }
            }
            \draw[rounded corners,purple,thick]
            (1*\tikzTextDist-\tikzTextMinDist,4*\tikzNodeDist-\tikzNodeMinDist)
            --(1*\tikzTextDist-\tikzTextMinDist,2*\tikzNodeDist-\tikzNodeMinDist)
            --(2*\tikzTextDist-\tikzTextMinDist,2*\tikzNodeDist-\tikzNodeMinDist)
            --(2*\tikzTextDist-\tikzTextMinDist,1*\tikzNodeDist-\tikzNodeMinDist)
            --(2*\tikzTextDist+\tikzTextMinDist,1*\tikzNodeDist-\tikzNodeMinDist)
            --(2*\tikzTextDist+\tikzTextMinDist,4*\tikzNodeDist-\tikzNodeMinDist)
            --cycle  {};
            \draw[rounded corners,blue,thick]
            (3*\tikzTextDist-\tikzTextMinDist,4*\tikzNodeDist-\tikzNodeMinDist)
            --(3*\tikzTextDist-\tikzTextMinDist,3*\tikzNodeDist-\tikzNodeMinDist)
            --(4*\tikzTextDist-\tikzTextMinDist,3*\tikzNodeDist-\tikzNodeMinDist)
            --(4*\tikzTextDist-\tikzTextMinDist,1*\tikzNodeDist-\tikzNodeMinDist)
            --(5*\tikzTextDist+\tikzTextMinDist,1*\tikzNodeDist-\tikzNodeMinDist)
            --(5*\tikzTextDist+\tikzTextMinDist,2*\tikzNodeDist-\tikzNodeMinDist)
            --(4*\tikzTextDist+\tikzTextMinDist,2*\tikzNodeDist-\tikzNodeMinDist)
            --(4*\tikzTextDist+\tikzTextMinDist,4*\tikzNodeDist-\tikzNodeMinDist)
            --cycle  {};
        \end{tikzpicture}
        \caption{Suppose $s(i)=3$ for all $i$.
        Set $(m,r)=(m_{\min},1)=(4,1)$.
        The difference statistics $D_{\ell(i)+1}$ and $D_{\ell(i+3)-1}$, which are the purple and blue group respectively, represent the two possible difference statistics between two leading difference statistics,
        which satisfy $\left\{\ell(i)+1\right\}\in\mathcal{I}_m$ and $m\{\ell(i+3)-1\}\geq r+1$ respectively.
        It shows that if $m$ is large enough so that all the leading difference statistics are repetitional-bias-corrected,
        then all difference statistics are also repetitional-bias-corrected,
        i.e. satisfy either $i\in\mathcal{I}_m$ or $m(i)\geq r+1$.
        }

        \label{fig:thm2:2}
\end{figure}
Recall that
$$
\mathcal{S}_1=\left\{
M(1,i)
:i\in \ell_1\right\},
\quad
\mathcal{S}_2=\left\{
M(2,i)
:i\in \ell_2\right\}.
$$
So, if $m\geq m_{\min} =\max(\mathcal{S}_1\cup\mathcal{S}_2)$, then $D_{\ell(i)}$ is repetitional-bias-corrected for all
$i\in\ell_1\cup\ell_2= \{1, \ldots, n-r+1\}$,
where $\{D_{\ell(1)},\ldots,D_{\ell(n-r+1)}\}$ is the set of all possible leading difference statistics.
It implies $D_i$ is repetitional-bias-corrected for all $i\in\{1,\ldots,N-m\}$.
See also Remark \ref{remark:thm2:2} for more details about the special cases $D_{\ell(n-r+1)}$ and $D_{\ell(n-r)}$, which are possibly ill-defined.

\begin{remark}\label{remark:thm2:1}
For (ii) above, it can be shown that $m=M(1,i)$ fails to guarantee sufficient degrees of freedom when $s(i')=1$ for some $i'\in\{i,\ldots,i-r+1\}$.
It is possible that
$m=M(1,i)$ leads to $m(i)<r+1$
while $i\notin\mathcal{I}_m$.
Because
$m=M(1,i)$ means $D_{\ell(i)}$ consists of $S(i,i+r-1)+2$ observations.
It can be seen that $m\{\ell(i)\}=r+\mathbb{1}(X_{\ell(i)+m-1}\neq X_{\ell(i)+m})$.
We have $m\{\ell(i)\}=r<r+1$ if
$X_{\ell(i)+m-1}=X_{\ell(i)+m}$.

\end{remark}

\begin{remark}\label{remark:thm2:2}
    Suppose $m=M_{1,\ell(n-r+1)}$ and $\min_{i=n-r+1,\ldots,n}s(i)>1$.
    It can be easily seen that $D_{\ell(n-r+1)}$ and $D_{\ell(n-r)}$ do not exist if $\min_{i=n-r+1,\ldots,n}s(i)>1$ and if $\min_{i=n-r+1,\ldots,n}s(i)=1$, respectively.
    The difference statistic
    \begin{align*}
        D_{\ell(n-r+1)}
        &=\sum_{j=0}^md_{\ell(n-r+1),j}Y_{\ell(n-r+1)+j}\\
        &=\sum_{j=0}^{m-1}d_{\ell(n-r+1),j}Y_{\ell(n-r+1)+j}
        +\sum_{j'=1}^2d_{\ell(n-r+1),m-2+j'}Y_{N+j'}
    \end{align*}
    is ill-defined
    since the observations $Y_{N+1}$ and $Y_{N+2}$ do not exist.
    However, the condition $m\geq m_{\min}\geq M_{1,\ell(n-r+1)}$ is needed
    so that the arguments above can be applied to $D_{i}$ for all $i$, i.e.,
    the value of $m$ satisfying (\ref{eq:thm2_proof:1}) for $i=\ell(n-r),\ell(n-r+1)$ implies $D_j$ is repetitional-bias-corrected for $j\in\{\ell(n-r),\ldots,N-m\}$.
    The same argument applies to $D_{\ell(n-r)}$ if $\min_{i=n-r+1,\ldots,n}s(i)=1$.
\end{remark}

We proceed to prove the convergence rate of $\Bias\{\hat{\sigma}^2_{\Long}(m,r)\}$,
given $m\geq m_{\min}$ and $r\geq 0$.
By \citet{SGW1993},
we have
\begin{align*}
    \Bias\left\{\hat{\sigma}^2_{\Long}(m,r)\right\}
    &=\frac{1}{N-m}\sum_{i=1}^{N-m}\left(D^{ g}_i\right)^2\\
    &= \mathcal{D}O_p(n^{-2r-2})/(N-m),
\end{align*}
where $\mathcal{D}$ be the number of $D_i$ that consists of
non-zero $D_{i}^{ g}$.
To derive the bias expression,
it suffices to derive the order of $\mathcal{D}$.
However, it is difficult to derive a general formula for $\mathcal{D}$.
Instead, we provide the following arguments to show when $\mathcal{D}=O(n+S_n m)$ holds.
Three cases are considered.

\begin{itemize}
    \item[(i)]
In this case, we suppose $s(1)=\cdots=s(n)=k\geq 2$.
Recall that the data is arranged in the following form:
\begin{equation*}
    \left(
    \begin{array}{ccccccccccc}
        Y_{1,1} & \cdots & Y_{k,1} & Y_{1,2} &\cdots & Y_{k,2} & \cdots & Y_{1,n} &\cdots & Y_{k,n}\\
        X_{1,1} & \cdots & X_{k,1} & X_{1,2} & \cdots & X_{k,2} & \cdots & X_{1,n} & \cdots & X_{k,n}
    \end{array}
    \right).
\end{equation*}
Observe that
$D^{ g}_i$ is non-zero only when
$Y_i=Y_{k,p}$
or
$Y_{i+m}=Y_{1,p}$
for some $p$.
See the cartoon in Figure \ref{fig:thm2:3} for a graphical illustration.
\begin{figure}[t]
        \centering
        \begin{tikzpicture}
            \tikzset{
                vertex/.style={circle,draw, inner sep=0pt, outer sep=0pt
                , minimum width=0.4cm
                ,fill=black
                },
            }
            \draw[->] (-\tikzTextMinDist,0)--(6*\tikzTextDist,0) node[below=1mm]{};
            \node at (6*\tikzTextDist+\tikzTextMinDist,0) [] {$i$};
            \draw[->] (0,-\tikzTextMinDist)--(0,3*\tikzTextDist) node[left]{};
            \node at (0,3*\tikzTextDist+\tikzTextMinDist) [] {$h$};
            \foreach \i in {1}{
                \pgfmathsetmacro\index{\i-4}
                \draw (\i*\tikzTextDist,0) node[below] {$i$};
            }
            \foreach \i in {2,...,5}{
                \pgfmathsetmacro\index{\i-1}
                \draw (\i*\tikzTextDist,0) node[below] {$i+\pgfmathprintnumber{\index}$};
            }
            \foreach \h in {1,...,3}{
                \draw (0,\h*\tikzNodeDist) node[left] {$\h$};
            }
            \foreach \i in {1,...,5}{
                \foreach \h in {1,2,3}{
                    \node at (\i*\tikzTextDist,\h*\tikzNodeDist) [vertex] {};
                }
            }
            \draw[rounded corners,red,thick]
            (1*\tikzTextDist-\tikzTextMinDist,4*\tikzNodeDist-\tikzNodeMinDist)
            --(1*\tikzTextDist-\tikzTextMinDist,3*\tikzNodeDist-\tikzNodeMinDist)
            --(2*\tikzTextDist-\tikzTextMinDist,3*\tikzNodeDist-\tikzNodeMinDist)
            --(2*\tikzTextDist-\tikzTextMinDist,1*\tikzNodeDist-\tikzNodeMinDist)
            --(2*\tikzTextDist+\tikzTextMinDist,1*\tikzNodeDist-\tikzNodeMinDist)
            --(2*\tikzTextDist+\tikzTextMinDist,4*\tikzNodeDist-\tikzNodeMinDist)
            --cycle  {};
            \draw[rounded corners,red,thick]
            (4*\tikzTextDist-\tikzTextMinDist,4*\tikzNodeDist-\tikzNodeMinDist)
            --(4*\tikzTextDist-\tikzTextMinDist,3*\tikzNodeDist-\tikzNodeMinDist)
            --(4*\tikzTextDist-\tikzTextMinDist,1*\tikzNodeDist-\tikzNodeMinDist)
            --(5*\tikzTextDist+\tikzTextMinDist,1*\tikzNodeDist-\tikzNodeMinDist)
            --(5*\tikzTextDist+\tikzTextMinDist,2*\tikzNodeDist-\tikzNodeMinDist)
            --(4*\tikzTextDist+\tikzTextMinDist,2*\tikzNodeDist-\tikzNodeMinDist)
            --(4*\tikzTextDist+\tikzTextMinDist,4*\tikzNodeDist-\tikzNodeMinDist)
            --cycle  {};
        \end{tikzpicture}
        \caption{Suppose $s(i)=3$ for all $i$ and $(m,r)=(3,0)$.
        The two difference statistics in red groups are examples of non-zero $D^{ g}_i$.
        }

        \label{fig:thm2:3}
\end{figure}
So there are at most $2(n-1)$ non-zero $D^{ g}_i$.
As a result, $\mathcal{D}=O(n)$ holds for balanced repetitional design.

\item[(ii)]
In this case, we suppose that
the data follows an imbalanced repetitional design and
$s(i)>1$ for all $i$.
Similarly, there are still at most $2(n-1)$ non-zero $D^{ g}_i$ following from the same arguments as in case (i),
i.e., $D^{ g}_i$ is non-zero only when $Y_i=Y_{k,p}$ or $Y_{i+m}=Y_{1,p}$ for some $p$.
Thus, $\mathcal{D}=O(n)$.

\item[(iii)]
In this case, we
suppose the data follows an imbalanced repetitional design and $s(i)\geq 1$ for all $i$.
Recall
$D^{ g}_i$ is non-zero
if there is at least one design point without duplication.
Observe that case (iii) is an extension of case (ii) where some non-duplicated design points are added.
Since one data point involves in at most $m+1$ difference statistics,
$(Y_{1,t},X_{1,t})$ adds at most $m-1$ to $\mathcal{D}$ if $s(t)=1$.
Note that $m+1-2=m-1$ is added
because $2$ non-zero $D^{ g}_i$, which satisfy $Y_i=Y_{1,t}$ or $Y_{i+m}=Y_{1,t}$,
have already been counted in $\mathcal{D}$, as argued in cases (i) and (ii).
So, $\mathcal{D} \leq 2(n-1)+S_n(m-1)$, where $S_n=\sum_{i=1}^n\mathbb{1}\left\{s(i)=1\right\}$.
\end{itemize}

In view of cases (i)--(iii) above,
we have $\mathcal{D}=O(n+S_nm)$.
Thus,

\begin{align*}
    \Bias\left\{\hat{\sigma}^2_{\Long}(m,r)\right\}
    &= O(n+S_nm)O_p(n^{-2r-2})/(N-m)\\
    &=O_p\left\{(n^{-2r-1}+n^{-2r-2}S_nm)/(N-m)\right\}.
\end{align*}

We continue to show the asymptotic variance of $\hat{\sigma}^2_{\Long}(m,r)$ remains the same as in Theorem {\ifnum\isXr=1{\ref{thm_diff_var}}\else{4.2}\fi}.
Recall that
$a_{i,j}$ is the $(i,j)$th element of the matrix $A$
and $g=(g(X_1),\ldots,g(X_N))^\T$.
Also recall that the maximum possible number of non-zero $D^{ g}_i$ is of order $O(n+S_nm)$ and $D^{ g}_i=O(n^{-r-1})\mathbb{1}(i\notin\mathcal{I}_m)$.
By similar arguments as in Theorem {\ifnum\isXr=1{\ref{thm_diff_var}}\else{4.2}\fi},
we show $\tr(A^2)$ and $\tr\left\{A\Diag(A)\right\}$ are the leading order terms.
First, we have
\begin{align}
    \tr(A^2)&=\sum_{i=1}^{N-m}\left(\sum^m_{j=0}d^{2}_{i,j}\right)^2+2\sum^m_{\ell=1}\sum_{i=1}^{N-m-\ell}\left(\sum^{m-\ell}_{j=0}d_{i,j+\ell}d_{i+\ell,j}\right)^2
    =O(Nm), \label{eq:thm2:2}
\end{align}
where (\ref{eq:thm2:2}) follows from the Cauchy--Schwarz
inequality that
\[
    \left(\sum^{m-\ell}_{j=0}d_{i,j+\ell}d_{i+\ell,j}\right)^2\leq \left(\sum^{m-\ell}_{j=0}d_{i,j+\ell}^{2}\right)\left(\sum^{m-\ell}_{j=0}d_{i+\ell,j}^{2}\right)\leq 1.
\]
For the term $ g^\T A^2g$, we have
\begin{align}
    g^\T A^2g&=\sum_{i=1}^{N-m}\left(\sum^m_{j=0}d^{2}_{i,j}\right)(D^{ g}_i)^2
    +2\sum^m_{\ell=1}\sum_{i=1}^{N-m-\ell}\left(\sum^{m-\ell}_{j=0}d_{i,j+\ell}d_{i+\ell,j}\right)D^{ g}_iD^{ g}_{i+\ell} \notag\\
    &=O(n+S_nm)O_p(n^{-2r-2}) + O(m)O(n+S_nm)O_p(n^{-2r-2}) \notag\\
    &=O(n^{-2r-1}m+n^{-2r-2}S_nm^2). \label{eq:thm2:5}
\end{align}
For the term $\tr\left\{A\Diag(Ag{1}^\T_{N})\right\}$, we have
\begin{align}
    \left|\tr\left\{A\Diag(Ag{1}^\T_{N})\right\}\right|
    &=\left|\sum_{i=1}^{N-m}\sum^m_{j=0}d^{2}_{i,j}\sum^{(i+j)\wedge (N-m)}_{h=1\vee (i+j-m)}d_{h,i+j-h}D^{ g}_h\right| \notag\\
    &\leq \sum_{i=1}^{N-m}\max_{0\leq j\leq m}\left|\sum^{(i+j)\wedge (N-m)}_{h=1\vee (i+j-m)}d_{h,i+j-h}D^{ g}_h\right|\sum^m_{j=0}d^2_{i,j} \notag\\
    &=O(n+S_nm)O(m)O_p(n^{-r-1}) \notag\\
    &=O_p(n^{-r}m+n^{-r-1}S_nm^2) .\notag
\end{align}
For the term $g^\T A\Diag(A)1_N$, we have
\begin{align}
    g^\T A\Diag(A)1_N&=\sum_{i=1}^{N-m}\left(\sum^m_{j=0}d_{i,j}a_{i+j,i+j}\right)D^{ g}_i \notag\\
    &=O(n+S_nm)O(m)O_p(n^{-r-1}) \label{eq:thm2:3}\\
    &=O_p(n^{-r}m+n^{-r-1}S_nm^{2}), \notag
\end{align}
where
(\ref{eq:thm2:3}) follows from the fact that $\sum^m_{j=0}d_{i,j}a_{i+j,i+j}=O(m)$ because
\begin{align*}
    \left\vert\sum^m_{j=0}d_{i,j}a_{i+j,i+j}  \right\vert
    &\leq \left(\sum^m_{j=0}d^{2}_{i,j} \right)^{1/2}\left(\sum^m_{j=0}a^2_{i+j,i+j}\right)^{1/2} \\
    &= \left(\sum^m_{j=0}a^2_{i+j,i+j}\right)^{1/2}
    = \left\{\sum^m_{j=0}\left(\sum_{h=0\vee(i-n+m)}^{m\wedge(i-1)}d_{i+j-h,h}^{2}\right)^2\right\}^{1/2}\\
    &\leq \left\{\left(\sum^m_{j=0}\sum_{h=0\vee(i-n+m)}^{m\wedge(i-1)}d_{i+j-h,h}^{2}\right)^2\right\}^{1/2}\\
    &= \sum^m_{j=0}\sum_{a=0\vee(i-n+m)-j}^{m\wedge(i-1)-j}d_{i-a,a+j}^{2} \quad(h-j=a)\\
    &\leq \sum_{a=0\vee(i-n+m)-m}^{m\wedge(i-1)}\sum^m_{j=0}d_{i-a,j}^{2}
    = O(m).
\end{align*}
For the term $\tr\left\{A\Diag(A)\right\}$, we have
\begin{align}
    \left|\tr\left\{A\Diag(A)\right\}\right|
    &=\left|\sum_{i=1}^{N-m}\sum^m_{j=0}d^{2}_{i,j}a_{i+j,i+j}\right| \notag
    \leq \sum_{i=1}^{N-m}\sum^m_{j=0}\left|1a_{i+j,i+j}\right| \notag\\
    &=O(N-m)O(m) \label{eq:thm2:4}\\
    &=O(Nm), \notag
\end{align}
where (\ref{eq:thm2:4}) follows from the fact that
\begin{align*}
    \sum^m_{j=0}\left|a_{i+j,i+j} \right|
    &= \sum^m_{j=0}\sum_{h=0\vee(i-n+m)}^{m\wedge(i-1)}d_{i+j-h,h}^{2}
    = \sum^m_{j=0}\sum_{a=0\vee(i-n+m)-j}^{m\wedge(i-1)-j}d_{i-a,a+j}^{2} \quad(h-j=a)\\
    &\leq \sum_{a=0\vee(i-n+m)-m}^{m\wedge(i-1)}\sum^m_{j=0}d_{i-a,j}^{2}
    =O(m).
\end{align*}
Collecting the above results,
we can rewrite the variance expression in (\ref{eqt:var_SGW1993})
for $\hat{\sigma}_\Long^2(m,r)$ as
\begin{align*}
    \Var\left\{\hat{\sigma}_\Long^2(m,r)\right\}&=\frac{1}{(N-m)^2}\left[2\sigma^4\tr(A^2)+(\lambda_4-3)\sigma^4\tr\left\{A\Diag(A)\right\}\right]\\
    &\quad+O_p\left\{\frac{n^{-r}m+n^{-r-1}S_nm^2}{(N-m)^2}\right\}.
\end{align*}

\subsection{Proof of Theorem {\ifnum\isXr=1{\ref{thm:repeat_varboost}}\else{4.4}\fi}}
First, we prove (i).
Without loss of generality,
let $\hat{\sigma}^2_{\Long,(h)}=Y^\T A_{(h)}Y/\tr(A_{(h)})$,
where $A_{(h)}=\DiffSeqMat_{(h)}^\T \DiffSeqMat_{(h)}$ in {\ifnum\isXr=1{(\ref{eq:Wmat_diffseq})}\else{(6)}\fi} under the $h$th re-arrangement of $(Y_i)_{i=1}^N$.
Let
$\DiffSeqMat=\DiffSeqMat_{(1)}$
and
$A=A_{(1)}=\DiffSeqMat^\T \DiffSeqMat$.
So, $\hat{\sigma}^2_{\Long}=\hat{\sigma}^2_{\Long,(1)}$.
Recall
$\DiffSeqMat=
\begin{pmatrix}
    v_1 & \cdots & v_N
\end{pmatrix}$.
It can be easily seen that
$\DiffSeqMat_{(h)}=
\begin{pmatrix}
    v_{\psi(1,h)} & \cdots & v_{\psi(N,h)}
\end{pmatrix}$.
In addition, the $(i,j)$th element of $A$ and $A_{(h)}$ is $v_i^\T v_j$ and $v_{\psi(i,h)}^\T v_{\psi(j,h)}$, respectively.
Then, we can rewrite $\hat{\sigma}^2_\VB$ as
\begin{align*}
    \hat{\sigma}^2_\VB &= \frac{1}{|\Psi|}\sum_{h=1}^{|\Psi|} \hat{\sigma}^2_{\Long,(h)}
    = \frac{1}{|\Psi|}\sum_{h=1}^{|\Psi|} \frac{Y^\T A_{(h)}Y}{\tr(A_{(h)})}
    = \frac{1}{N-m}Y^\T\left\{\sum_{h=1}^{|\Psi|} \frac{A_{(h)}}{|\Psi|}\right\}Y
\end{align*}
because $\tr(A_{(h)})=N-m$; see (\ref{eqt:trA}) for the explanation.
So we have $\Bar{A} =\sum_{h=1}^{|\Psi|}A_{(h)}/|\Psi|$.
By \citet{SGW1993},
the variance of $\hat{\sigma}^2_\VB(m,r)$ is
\begin{align*}
    \Var\left\{\hat{\sigma}^2_\VB(m,r)\right\}
    &=\frac{1}{\tr(\Bar{A})^2}\Bigg(2\sigma^4\tr(\Bar{A}^2)+4\sigma^2g^\T \Bar{A}^2g \\
    &\qquad+2\gamma_3\sigma^3\bigg[\tr\Big\{\Bar{A}\Diag(\Bar{A} g{1}^\T_{N})\Big\}+g^\T \Bar{A}\Diag(\Bar{A})1_N\bigg] \nonumber \\
    &\qquad +(\lambda_4-3)\sigma^4\tr\Big\{\Bar{A}\Diag(\Bar{A})\Big\}\Bigg).
\end{align*}
Recall $\Bar{a}_{i,j}$ is denoted as the $(i,j)$th element of $\Bar{A}$.
For the term $\tr(\Bar{A}^2)$, we have
\begin{align}
    \tr(\Bar{A}^2) &= \sum_{i=1}^N\sum_{j=1}^N \Bar{a}_{i,j}^2 \notag\\
    &= \sum_{i=1}^N \Bar{a}_{i,i}^2
    +\sum_{i=1}^N\sum_{j=1:i\neq j}^N \Bar{a}_{i,j}^2 \notag\\
    &= \sum_{i=1}^N\left[\sum_{k\in \mathcal{Q}_i}\frac{v_k^\T v_k}{s\left\{\alpha(i)\right\}}\right]^2 \notag\\
    &\quad+\sum_{i=1}^N\sum_{j=1:i\neq j}^N \left(
    \sum_{k\in\mathcal{Q}_i}\sum_{\ell\in\mathcal{Q}_j:\ell\neq k}\frac{v_k^\T v_\ell}{s\left\{\alpha(i)\right\}
    \left[s\left\{\alpha(j)\right\}-\mathbb{1}\left\{\alpha(i)=\alpha(j)\right\}\right]}
    \right)^2 \label{eq:thm:vb:1}\\
    &= \sum_{i=1}^n s(i)\left\{\sum_{k\in \mathcal{Q}^\star_i}\frac{v_k^\T v_k}{s(i)}\right\}^2 \notag
    \\
    &\quad+\sum_{i=1}^n\sum_{j=1}^n s(i)
    \left\{s(j)-\mathbb{1}\left(i=j\right)\right\}\left[
    \sum_{k\in\mathcal{Q}^\star_i}\sum_{\ell\in\mathcal{Q}^\star_j:\ell\neq k}\frac{v_k^\T v_\ell}{s(i)
    \left\{s(j)-\mathbb{1}\left(i=j\right)\right\}}
    \right]^2 \notag\\
    &= \sum_{i=1}^n \left\{\sum_{k\in \mathcal{Q}^\star_i}(v_k^\T v_k)^2 -\frac{1}{2s(i)}\sum_{u\in \mathcal{Q}^\star_i}\sum_{t\in \mathcal{Q}^\star_i}(v_u^\T v_u-v_t^\T v_t)^2\right\} \label{eq:thm:vb:2}
    \\
    &\quad+\sum_{i=1}^n\sum_{j=1}^n\left[
    \sum_{k\in\mathcal{Q}^\star_i}\sum_{\ell\in\mathcal{Q}^\star_j:\ell\neq k}(v_k^\T v_\ell)^2\right. \notag\\
    &\qquad-\left.\frac{\sum_{u\in\mathcal{Q}^\star_i}\sum_{t\in\mathcal{Q}^\star_j:t\neq u}
    \sum_{p\in\mathcal{Q}^\star_i}\sum_{q\in\mathcal{Q}^\star_j:q\neq p}(v_t^\T v_u-v_p^\T v_q)^2}{2s(i)
    \left\{s(j)-\mathbb{1}\left(i=j\right)\right\}}
    \right] \label{eq:thm:vb:3}\\
    &= \sum_{i=1}^N(v_i^\T v_i)^2 -\mathcal{A}_n
    +\sum_{k=1}^N\sum_{\ell=1:k\neq \ell}^N
    (v_k^\T v_\ell)^2
    -\mathcal{B}_n \notag\\
    &= \tr(A^2)-\mathcal{A}_n-\mathcal{B}_n, \notag
\end{align}
where (\ref{eq:thm:vb:1}) follows from {\ifnum\isXr=1{(\ref{eq:vb_Amat})}\else{(29)}\fi}, whereas
(\ref{eq:thm:vb:2}) and (\ref{eq:thm:vb:3}) follow from the formula:
\begin{align*}
    n\left(\frac{1}{n}\sum_{i=1}^nx_i\right)^2 =
    \sum_{i=1}^nx_i^2-\frac{1}{2n}\sum_{i=1}^n\sum_{j=1}^n(x_i-x_j)^2.
\end{align*}
For the term $\tr(\Bar{A}\Diag(\Bar{A}))$,
we have
\begin{align*}
    \tr(\Bar{A}\Diag(\Bar{A})) &= \sum_{i=1}^N\Bar{a}_{i,i}^2
    = \sum_{i=1}^N(v_i^\T v_i)^2 -\mathcal{A}_n
    = \tr\{ A\Diag(A) \} - \mathcal{A}_n.
\end{align*}
For the term $g^\T \Bar{A}^2 g$, we have
\begin{align*}
    g^\T \Bar{A}^2 g
    &= g^\T\left\{\sum_{h=1}^{|\Psi|} \frac{ A_{(h)}}{|\Psi|}\right\}^2 g
    = \frac{1}{|\Psi|^2}g^\T\left\{\sum_{h=1}^{|\Psi|} A_{(h)}\right\}^2 g  \\
    &= \frac{1}{|\Psi|^2}\sum_{h=1}^{|\Psi|}g^\T A_{(h)}^2 g
    +\frac{1}{|\Psi|^2}\sum_{k=1}^{|\Psi|}\sum_{\ell=1:\ell \neq k}^{|\Psi|}g^\T A_{(k)}A_{(\ell)} g\\
    &= O(n^{-2r-1}m+n^{-2r-2}S_nm^2),
\end{align*}
where the last line follows from (\ref{eq:thm2:5}) and the fact that
\begin{align}
    |g^\T A_{(k)}A_{(\ell)} g| &\leq \left(g^\T A_{(k)}^2 gg^\T A_{(\ell)}^2 g\right)^{1/2} \label{eq:thm:vb:6}\\
    &= O(n^{-2r-1}m+n^{-2r-2}S_nm^2), \label{eq:thm:vb:10}
\end{align}
where (\ref{eq:thm:vb:6}) follows from Cauchy--Schwarz inequality and (\ref{eq:thm:vb:10}) follows from
\begin{align*}
    g^\T A^2 g=g^\T A_{(1)}^2 g=\ldots = g^\T A_{(|\Psi|)}^2 g
    =O(n^{-2r-1}m+n^{-2r-2}S_nm^2).
\end{align*}

Let $D^g_{1:N-m}=\DiffSeqMat g
=
\begin{pmatrix}
    D^{ g}_1 & \cdots & D^{ g}_{N-m}
\end{pmatrix}^\T$.
For the term $g^\T \Bar{A}\Diag(\Bar{A})1_N$,
we have
\begin{align}
    g^\T \Bar{A}\Diag(\Bar{A})1_N &=
    g^\T \left\{\frac{1}{|\Psi|}\sum_{h=1}^{|\Psi|} \DiffSeqMat_{(h)}^\T \DiffSeqMat_{(h)}\right\}\Diag(\Bar{A})1_N \notag\\
    &= \left\{\frac{1}{|\Psi|}\sum_{h=1}^{|\Psi|} g^\T \DiffSeqMat_{(h)}^\T \DiffSeqMat_{(h)}\right\}
    \begin{pmatrix}
        \Bar{a}_{1,1}\\
        \vdots\\
        \Bar{a}_{N,N}
    \end{pmatrix} \notag\\
    &= \frac{1}{|\Psi|}\sum_{h=1}^{|\Psi|} (D^g_{1:N-m})^\T
    \sum_{i=1}^N \Bar{a}_{i,i}v_{\psi(i,h)} \label{eq:thm:vb:8}\\
    &= \frac{1}{|\Psi|}\sum_{h=1}^{|\Psi|} (D^g_{1:N-m})^\T
    \begin{pmatrix}
        \sum_{j=0}^m O(m)d_{1,j}\\
        \vdots\\
        \sum_{j=0}^m O(m)d_{N-m,j}
    \end{pmatrix} \label{eq:thm:vb:4}\\
    &= \frac{1}{|\Psi|}\sum_{h=1}^{|\Psi|}
    \sum_{i=1}^N D^g_i \sum_{j=0}^m O(m)d_{i,j} \notag\\
    &= O_p(n^{-r}m+n^{-r-1}S_nm^{2}), \label{eq:thm:vb:7}
\end{align}
where (\ref{eq:thm:vb:8}), (\ref{eq:thm:vb:4}) and (\ref{eq:thm:vb:7})
follows from the following observations:
\begin{itemize}
    \item (\ref{eq:thm:vb:8}) is obtained by noting that $\DiffSeqMat g=\DiffSeqMat_{(1)}g=\cdots=\DiffSeqMat_{(|\Psi|)}g=D^g_{1:N-m}$.
    \item (\ref{eq:thm:vb:4}) is obtained by noting that
\begin{align}
    0\leq \Bar{a}_{i,i}&=\frac{1}{s\left\{\alpha(i)\right\}}\sum_{k\in\mathcal{Q}_i}v_k^\T v_k
    \leq \frac{1}{s\left\{\alpha(i)\right\}} \left[m+s\left\{\alpha(i)\right\}\right] \label{eq:thm:vb:5}\\
    &\leq m+1, \notag
\end{align}
where the inequality in
(\ref{eq:thm:vb:5}) follows from the fact that sum of $b$ consecutive $v_k^\T v_k$ is upper bounded by $m+b$ because
\begin{align*}
    \sum_{\ell=0}^{b-1} v_{k+\ell}^\T v_{k+\ell}
    \leq
    \sum_{\ell=(k-m)\vee 0}^{(k+b-1)\wedge (N-m)}\sum_{j=0}^m d_{\ell,j}^{ 2}\leq m+b.
\end{align*}
    Hence, $\sup_{i=1, \ldots, N} |\bar{a}_{i,i}| = O(m)$.
    \item (\ref{eq:thm:vb:7}) is obtained from the fact that there are $O(n+S_n m)$ non-zero $D^{ g}_i$ of order $O(n^{-r-1})$
and that by Cauchy--Schwarz inequality,
\begin{align*}
    \left|\sum_{j=0}^mO(m)d_{i,j}\right|\leq \left\{O(m^2)\sum_{j=0}^md_{i,j}^{ 2}\right\}^{1/2}=O(m).
\end{align*}

\end{itemize}

For the term $\tr\left\{\Bar{A}\Diag(\Bar{A} g{1}^\T_{N})\right\}$, we have
\begin{align}
    \tr\left\{\Bar{A}\Diag(\Bar{A} g{1}^\T_{N})\right\} &= g^\T \Bar{A}\Diag(\Bar{A})1_N \label{eq:thm:vb:9}\\
    &= O_p(n^{-r}m+n^{-r-1}S_nm^{2}), \notag
\end{align}
where the identity (\ref{eq:thm:vb:9}) can be verified as follows:
\begin{align*}
    \tr\left\{\Bar{A}\Diag(\Bar{A} g{1}^\T_{N})\right\}&=
    \tr\left[\Bar{A}\Diag\left\{
    \begin{pmatrix}
        \sum_{j=0}^N \Bar{a}_{1,j}g_{j}\\
        \vdots\\
        \sum_{j=0}^N \Bar{a}_{N,j}g_{j}
    \end{pmatrix}
    {1}^\T_{N}\right\}\right]\\
    &=
    \tr\left\{\Bar{A}
    \begin{pmatrix}
        \sum_{j=0}^N \Bar{a}_{1,j}g_{j} & 0 & \cdots & 0 & 0\\
        \vdots & \vdots & \ddots & \vdots & \vdots\\
        0 & 0 & \cdots & 0 & \sum_{j=0}^N \Bar{a}_{N,j}g_{j}
    \end{pmatrix}\right\}
    = \sum_{i=1}^N \Bar{a}_{i,i}\sum_{j=0}^N \Bar{a}_{i,j}g_{j};\\
    g^\T \Bar{A}\Diag(\Bar{A})1_N  &=
    \begin{pmatrix}
        \sum_{j=0}^Ng_j \Bar{a}_{j,1} & \cdots & \sum_{j=0}^Ng_j \Bar{a}_{j,N}
    \end{pmatrix}
    \begin{pmatrix}
        \Bar{a}_{1,1}\\
        \vdots\\
        \Bar{a}_{N,N}
    \end{pmatrix}\\
    &=\sum_{i=1}^N \Bar{a}_{i,i}\sum_{j=0}^N \Bar{a}_{j,i}g_{j}
    =\sum_{i=1}^N \Bar{a}_{i,i}\sum_{j=0}^N \Bar{a}_{i,j}g_{j}
    = \tr\left\{\Bar{A}\Diag(\Bar{A} g{1}^\T_{N})\right\}.
\end{align*}
Collecting the above results, we have (i):
\begin{align*}
    \Var\left\{\hat{\sigma}^2_\VB(m,r)\right\}
    &=\frac{1}{(N-m)^2}\Bigg(2\sigma^4\tr(\Bar{A}^2)
    +(\lambda_4-3)\sigma^4\tr\Big\{\Bar{A}\Diag(\Bar{A})\Big\}\Bigg)\\
    &\quad+O_p\left\{\frac{n^{-r}m+n^{-r-1}S_nm^2}{(N-m)^2}\right\}.\\
    &=\frac{1}{(N-m)^2}\Bigg[2\sigma^4\tr(A^2)
    +(\lambda_4-3)\sigma^4\tr\Big\{A\Diag(A)\Big\}\\
    &\qquad-(\lambda_4-1)\sigma^4\mathcal{A}_n
    -2\sigma^4\mathcal{B}_n
    \Bigg]\\
    &\quad+O_p\left\{\frac{n^{-r}m+n^{-r-1}S_nm^2}{(N-m)^2}\right\}\\
    &=\frac{\sigma^4}{(N-m)^2}\theta(A,\lambda_4)
    -\frac{\sigma^4}{(N-m)^2}\left\{
    (\lambda_4-1)\mathcal{A}_n
    +2\mathcal{B}_n
    \right\}\\
    &\quad+O_p\left\{\frac{n^{-r}m+n^{-r-1}S_nm^2}{(N-m)^2}\right\}\\
    &=\Var\left\{\hat{\sigma}^2_\Long(m,r)\right\}
    -\frac{\sigma^4}{(N-m)^2}\left\{
    (\lambda_4-1)\mathcal{A}_n
    +2\mathcal{B}_n
    \right\}\\
    &\quad+O_p\left\{\frac{n^{-r}m+n^{-r-1}S_nm^2}{(N-m)^2}\right\}.
\end{align*}
Next, we prove (ii).
When $n=1$ and $0<m\leq N-1$, we have
$|\Psi|=N!$.
Then $\hat{\sigma}^2_{\VB}$ no longer depends on $r$ since intra-group differencing is not used.
We show that $\hat{\sigma}^2_{\VB}$ reduces to the sample variance for all $0<m\leq N-1$:
\begin{align*}
    \hat{\sigma}^2_\VB &= \frac{1}{|\Psi|}\sum_{h=1}^{|\Psi|} \hat{\sigma}^2_{\Long,(h)}\\
    &= \frac{1}{|\Psi|}\sum_{h=1}^{|\Psi|} \sum_{i=1}^{N-m} \frac{1}{N-m} \left(\sum_{j=0}^m d_{i,j} Y_{\psi(i+j,h)}\right)^2\\
    &= \frac{1}{|\Psi|} \frac{1}{N-m}\sum_{i=1}^{N-m}\sum_{h=1}^{|\Psi|} \left(\sum_{j=0}^m d_{i,j} Y_{\psi(i+j,h)}\right)^2\\
    &= \frac{1}{|\Psi|} \frac{1}{N-m} \sum_{i=1}^{N-m}(N-m-1)!
    \mathop{\mathop{\sum \cdots \sum}_{1\leq k_0,\ldots,k_m \leq N}}_{k_0, \ldots, k_m~\text{are distinct}}
    \left(\sum_{j=0}^m d_{i,j} Y_{k_{j}}\right)^2\\
    &= \frac{(N-m-1)!}{N!(N-m)} \sum_{i=1}^{N-m}
    \mathop{\mathop{\sum \cdots \sum}_{1\leq k_0,\ldots,k_m \leq N}}_{k_0, \ldots, k_m~\text{are distinct}}
    \left(\sum_{j=0}^m d_{i,j}^{ 2}Y_{k_{j}}^{ 2}
    +2\sum_{u=1}^m\sum_{t=0}^{u-1}d_{i,u}d_{i,t}Y_{k_u}Y_{k_t}\right).
\end{align*}
Let
\begin{align*}
    U_{i1}=\mathop{\mathop{\sum \cdots \sum}_{1\leq k_0,\ldots,k_m \leq N}}_{k_0, \ldots, k_m~\text{are distinct}}
    \sum_{j=0}^m d_{i,j}^{ 2}Y_{k_{j}}^{ 2},
    \quad
    U_{i2}=2\mathop{\mathop{\sum \cdots \sum}_{1\leq k_0,\ldots,k_m \leq N}}_{k_0, \ldots, k_m~\text{are distinct}}
    \sum_{u=1}^m\sum_{t=0}^{u-1}d_{i,u}d_{i,t}Y_{k_u}Y_{k_t}.
\end{align*}
Let $U_3=\sum_{i=1}^NY_i/N$ and $U_4=\sum_{i=1}^NY_i^{ 2}/N$.
By swapping the summation over $j$ and that over $k_0,\ldots,k_m$,
we have
\begin{align*}
    U_{i1}&=\sum_{j=0}^m\mathop{\mathop{\sum \cdots \sum}_{1\leq k_0,\ldots,k_m \leq N}}_{k_0, \ldots, k_m~\text{are distinct}}
     d_{i,j}^{ 2}Y_{k_{j}}^{ 2}\\
     &=\sum_{j=0}^m \sum_{k_j=1}^N
     \mathop{\mathop{\sum \cdots \sum}_{1\leq k_a \leq N:a\in\{1,\ldots,m\}\setminus \{j\}}}_{k_0, \ldots, k_m~\text{are distinct}}
     d_{i,j}^{ 2}Y_{k_{j}}^{ 2}
     =\sum_{j=0}^m \sum_{k_j=1}^N
     \frac{(N-1)!}{(N-m-1)!}
     d_{i,j}^{ 2}Y_{k_{j}}^{ 2}\\
     &=\frac{(N-1)!}{(N-m-1)!}\sum_{j=0}^m \sum_{k=1}^N
     d_{i,j}^{ 2}Y_{k}^{ 2}
     =\frac{N!}{(N-m-1)!}
     U_4,
\end{align*}
where the last line follows by noting that $\sum_{j=0}^m d_{i,j}^2=1$.
Similarly, for the term $U_{i,2}$, we have
\begin{align*}
    U_{i2}&=2\sum_{u=1}^m\sum_{t=0}^{u-1}
    \mathop{\mathop{\sum \cdots \sum}_{1\leq k_0,\ldots,k_m \leq N}}_{k_0, \ldots, k_m~\text{are distinct}}
    d_{i,u}d_{i,t}Y_{k_u}Y_{k_t}\\
    &=2\sum_{u=1}^m\sum_{t=0}^{u-1}
    \mathop{\mathop{\sum \sum}_{1\leq k_u, k_t \leq N}}_{k_u\neq k_t}
    \mathop{\mathop{\sum \cdots \sum}_{1\leq k_a \leq N:a\in\{1,\ldots,m\}\setminus \{u,t\}}}_{k_0, \ldots, k_m~\text{are distinct}}
    d_{i,u}d_{i,t}Y_{k_u}Y_{k_t}\\
    &=2\sum_{u=1}^m\sum_{t=0}^{u-1}
    \mathop{\mathop{\sum \sum}_{1\leq k_u, k_t \leq N}}_{k_u\neq k_t}
    \frac{(N-2)!}{(N-m-1)!}
    d_{i,u}d_{i,t}Y_{k_u}Y_{k_t}\\
    &=2\frac{(N-2)!}{(N-m-1)!}\sum_{u=1}^m\sum_{t=0}^{u-1}
    \mathop{\mathop{\sum \sum}_{1\leq k_a, k_b \leq N}}_{k_a\neq k_b}
    d_{i,u}d_{i,t}Y_{a}Y_{b}\\
    &=-\frac{(N-2)!}{(N-m-1)!}
    \left(\sum_{a=1}^N\sum_{b=1}^N Y_aY_b -\sum_{c}^N Y_c^{ 2}\right)\\
    &=-N\frac{(N-2)!}{(N-m-1)!}
    \left(NU_3^2 -U_{4}\right),
\end{align*}
where the second to the last line follows from $\sum_{u=1}^m\sum_{t=0}^{u-1}d_{i,u}d_{i,t}=-1/2$ because
\begin{align*}
    0 = \sum_{u=0}^m\sum_{t=0}^{m}d_{i,u}d_{i,t}
    = \sum_{u=0}^md_{i,u}^{ 2}+2\sum_{u=1}^m\sum_{t=0}^{u-1}d_{i,u}d_{i,t}
    \qquad \text{or} \qquad
    \sum_{u=1}^m\sum_{t=0}^{u-1}d_{i,u}d_{i,t}=-1/2.
\end{align*}
Collecting the above results, we have
\begin{align*}
    \hat{\sigma}^2_\VB
    &= \frac{(N-m-1)!}{N!(N-m)} \sum_{i=1}^{N-m}
    \left(U_{i,1}+U_{i,2}\right)\\
    &= \frac{(N-m-1)!}{N!(N-m)} \sum_{i=1}^{N-m}
    \left\{\frac{N!}{(N-m-1)!}
     U_4
     -N\frac{(N-2)!}{(N-m-1)!}
    \left(NU_3^2 -U_{4}\right)\right\}\\
    &= \frac{1}{(N-1)(N-m)} \sum_{i=1}^{N-m}
    \left\{NU_4
    -NU_3^2\right\}\\
    &=\frac{1}{N-1}\sum_{i=1}^N\left\{Y_i-\frac{1}{N}\sum_{j=1}^NY_{j}\right\}^2.
\end{align*}

\subsection{Proof of Theorem {\ifnum\isXr=1{\ref{thm:lamda4_est}}\else{5.2}\fi}}
By Theorems {\ifnum\isXr=1{\ref{thm_diff_bias}}\else{4.1}\fi} and {\ifnum\isXr=1{\ref{thm_diff_var}}\else{4.2}\fi}, it follows immediately that $\hat{\sigma}^{\dag 2}$ is weakly consistent, i.e., $\hat{\sigma}^{\dag 2}\rightarrow \sigma^2$ in probability.

Let $D_i^{g*}=\sum^{m^\dag}_{j=0}d_{i,j}^*g(X_{i+j})$,
$D_i^{\epsilon*}=\sum^{m^\dag}_{j=0}d_{i,j}^*\epsilon_{i+j}$,
$D_i^*=\sum^{m^\dag}_{j=0}d_{i,j}^*y_{i+j}=D_i^{g*}+D_i^{\epsilon*}$ and
$A_1=\sum^{N-m^\dag}_{i=1}D_i^{*4}/(N-m^\dag)$.
Since $r\geq 0$ and $g(\cdot)$ is continuously differentiable,
$D_i^{g*}$ is at least of order $O(N^{-1})$.
Recall that all expectations are conditioned upon $(X_i)_{i=1}^N$.
We have the following results:
\begin{eqnarray*}
    \E\big\{(D_i^{g*})^4\big\}&=&\left\{\sum^{m^\dag}_{j=0}d_{i,j}^*g(X_{i+j})\right\}^4
    =\left\{\sum^{m^\dag}_{j=0}d_{i,j}^\dag g(X_{i+j})\right\}^4/\sum^{m^\dag}_{j=0}d_{i,j}^{\dag 4}
    =O_p(N^{-4});\\
    \E\left\{(D_i^{g*})^3D_i^{\epsilon*}\right\}
    &=&0;\\
    \E\left\{(D_i^{g*})^2(D_i^{\epsilon*})^{2}\right\}&=&(D_i^{g*})^2\E\left\{(D_i^{\epsilon*})^{2}\right\}
    =(D_i^{g*})^2\sigma^2/\left(\sum^{m^\dag}_{j=0}d_{i,j}^{\dag 4}\right)^{1/2}
    =O_p(N^{-2});\\
    \E\left\{D_i^{g*}(D_i^{\epsilon*})^3\right\}
    &=&O_p(N^{-1})\E\left(\sum^{m^\dag}_{j=0}d_{i,j}^{\dag 3}\epsilon_{i+j}^{3}\right)
    =O_p(N^{-1})\sum^{m^\dag}_{j=0}d_{i,j}^{\dag 3}\lambda_3\sigma^3
    =O_p(N^{-1});\\
    \E\left\{(D_i^{\epsilon*})^4\right\}&=&\E\left\{\left(\sum^{m^\dag}_{j=0}d_{i,j}^\dag \epsilon_{i+j}\right)^{4}/\sum^{m^\dag}_{j=0}d_{i,j}^{\dag 4}\right\}\\
    &=&\E\left\{\sum_{k_0+\cdots+k_{m^\dag}=4;k_0,\ldots,k_{m^\dag}\geq 0}\binom{4}{k_0,\ldots,k_{m^\dag}}\prod^{m^\dag}_{j=0}(d_{i,j}^\dag \epsilon_{i+j})^{k_j}\right\}/\sum^{m^\dag}_{j=0}d_{i,j}^{\dag 4}\\
    &=&\E\left\{\sum^{m^\dag}_{j=0}d_{i,j}^{\dag 4}\epsilon_{i+j}^{4}+\sum^{m^\dag-1}_{j=0}\sum^{m^\dag}_{k=j+1}\frac{4!}{2!2!}(d_{i,j}^\dag\epsilon_{i+j}d_{i,k}^\dag\epsilon_{i+k})^2\right\}/\sum^{m^\dag}_{j=0}d_{i,j}^{\dag4}\\
    &=&\left(\sum^{m^\dag}_{j=0}d_{i,j}^{\dag 4}\lambda_4\sigma^4+\sum^{m^\dag-1}_{j=0}\sum^{m^\dag}_{k=j+1}6d_{i,j}^{\dag2}d_{i,k}^{\dag2}\sigma^4\right)/\sum^{m^\dag}_{j=0}d_{i,j}^{\dag4}\\
    &=&\lambda_4\sigma^4+\sum^{m^\dag-1}_{j=0}\sum^{m^\dag}_{k=j+1}6d_{i,j}^{\dag2}d_{i,k}^{\dag2}\sigma^4/\sum^{m^\dag}_{j=0}d_{i,j}^{\dag4}.
\end{eqnarray*}
Therefore, we have
\begin{eqnarray*}
    \E(A_1)&=&\frac{1}{N-m^\dag }\sum^{N-m^\dag }_{i=1}\E(D_i^{*4})\\
    &=&\frac{1}{N-m^\dag }\sum^{N-m^\dag }_{i=1}\left\{\lambda_4\sigma^4+\sum^{m^\dag-1}_{j=0}\sum^{m^\dag}_{k=j+1}6d_{i,j}^{\dag2}d_{i,k}^{\dag2}\sigma^4/\sum^{m^\dag}_{j=0}d_{i,j}^{\dag4}+O(N^{-1})\right\}\\
    &=&\lambda_4\sigma^4+\frac{1}{N-m^\dag }\sum^{N-m^\dag }_{i=1}\left(\sum^{m^\dag-1}_{j=0}\sum^{m^\dag}_{k=j+1}6d_{i,j}^{\dag2}d_{i,k}^{\dag2}\sigma^4/\sum^{m^\dag}_{j=0}d_{i,j}^{\dag4}\right)+O(N^{-1}).
\end{eqnarray*}
Now we show that $\Var(A_1)$ converges to $0$. Since $m$ is finite and fixed and $\E(\epsilon_i^8/\sigma^8)<\infty$, it is obvious that $\Var(D_i^{*4})=O(1)$. Also observe that
\begin{eqnarray*}
    \Cov(D_i^{*4},D_{i+p}^{*4})
        = \left\{
            \begin{array}{ll}
            O(1) & \text{if} \quad |p|\leq m, \\
            0 & \text{if} \quad |p|>m.
            \end{array}
        \right.
\end{eqnarray*}
Therefore,
\begin{eqnarray*}
    \Var(A_1)&=&\frac{1}{(N-m^\dag )^2}\sum^{N-m^\dag }_{i=1}\Var(D_i^{*4})+\frac{2}{(N-m^\dag )^2}\sum^{m}_{p=1}\sum^{N-m^\dag -p}_{i=1}\Cov(D_i^{*4},D_{i+p}^{*4})\\
    &=&O(N^{-1}).
\end{eqnarray*}
Hence, $A_1-\lambda_4\sigma^4+\sum^{N-m^\dag }_{i=1}\left(\sum^{m^\dag-1}_{j=0}\sum^{m^\dag}_{k=j+1}6d_{i,j}^{\dag2}d_{i,k}^{\dag2}\sigma^4/\sum^{m^\dag}_{j=0}d_{i,j}^{\dag4}\right)/(N-m^\dag)\rightarrow 0$ in probability.
Finally, as $\hat{\sigma}^{\dag2}$ and $A_1$ are consistent, by continuous mapping theorem, $\hat{\lambda}_4^\dag$ is a consistent estimator of $\lambda_4$,
i.e., $\hat{\lambda}_4^\dag\rightarrow \lambda_4$ in probability.

\subsection{Proof of Corollary {\ifnum\isXr=1{\ref{coro:equiv_diff_est}}\else{1}\fi}}
First we show equivalence between $\hat{\sigma}_*^2(m,0)$, $\hat{\sigma}^2(m,0)$ and variance-optimal differencing estimator by \citet{HKT1990}.
It suffices to show that
$\hat{\sigma}_*^2(m,0)$ and $\hat{\sigma}^2(m,0)$ uses the same global difference sequence as $\hat{\sigma}_\textsc{h}$.
Since $r=0$, the constraints {\ifnum\isXr=1{(\ref{diff_poly_random_cond})}\else{(11)}\fi} and {\ifnum\isXr=1{(\ref{diff_poly_cond})}\else{(12)}\fi} are equivalent and reduce to
\begin{align*}
    \sum_{j=0}^m d_j=0,
\end{align*}
which coincide with the second constraint in {\ifnum\isXr=1{(\ref{diff_poly_cond0})}\else{(7)}\fi}.
It can be easily seen that $\hat{\sigma}_*^2(m,0)$, $\hat{\sigma}^2(m,0)$ and $\hat{\sigma}_\textsc{h}$ are equivalent
since they are restricted by the same constraints and minimize the same quantity $\delta(d)$.

Second, we show equivalence between $\hat{\sigma}_*^2(2,1)$ and estimator proposed by \citet{GSJ1986}.
Given the data $(Y_i,X_i)_{i=1}^n$,
$\hat{\sigma}^2_{\textsc{g}}$ from \citet{GSJ1986} needs the following definitions:
\begin{eqnarray*}
    &\tilde{D}_i&=\frac{X_{i+2}-X_{i+1}}{(X_{i+2}-X_i)c_i}Y_{i}+\frac{-1}{c_i}Y_{i+1}+\frac{X_{i+1}-X_i}{(X_{i+2}-X_i)c_i}Y_{i+2},\\
    &c_i^2&=\left(\frac{X_{i+2}-X_{i+1}}{X_{i+2}-X_i}\right)^2+1+\left(\frac{X_{i+1}-X_i}{X_{i+2}-X_i}\right)^2,\\
    &\hat{\sigma}^2_{\textsc{g}}&=\frac{1}{n-2}\sum_{i=1}^{n-2}\tilde{D}_i^2.
\end{eqnarray*}
Let $A_{j}=X_{j}-X_i$.
The dependence in $i$ is omitted for simplicity of notations.
We can rewrite $\tilde{D}_i$ as
\begin{eqnarray*}
    \tilde{D}_i&=&\frac{A_{i+2}-A_{i+1}}{\left\{(A_{i+2}-A_{i+1})^2+A_{i+2}^2+A_{i+1}^2\right\}^{1/2}}\hspace{2pt}Y_i\\
    &&+\frac{-A_{i+2}}{\left\{(A_{i+2}-A_{i+1})^2+A_{i+2}^2+A_{i+1}^2\right\}^{1/2}}\hspace{2pt}Y_{i+1}
    +\frac{A_{i+1}}{\left\{(A_{i+2}-A_{i+1})^2+A_{i+2}^2+A_{i+1}^2\right\}^{1/2}}\hspace{2pt}Y_{i+2}.
\end{eqnarray*}
To show $\hat{\sigma}^2_{\textsc{g}}=\hat{\sigma}_*^2(2,1)$,
it is equivalent to show that
\begin{align}
        d_{i,0}&=\frac{A_{i+2}-A_{i+1}}{\left\{(A_{i+2}-A_{i+1})^2+A_{i+2}^2+A_{i+1}^2\right\}^{1/2}}, \label{eq:GSJeqProposal1}\\
        d_{i,1}&=\frac{-A_{i+2}}{\left\{(A_{i+2}-A_{i+1})^2+A_{i+2}^2+A_{i+1}^2\right\}^{1/2}}, \label{eq:GSJeqProposal2}\\
        d_{i,2}&=\frac{A_{i+1}}{\left\{(A_{i+2}-A_{i+1})^2+A_{i+2}^2+A_{i+1}^2\right\}^{1/2}}.   \label{eq:GSJeqProposal3}
\end{align}
Now we show (\ref{eq:GSJeqProposal1})--(\ref{eq:GSJeqProposal3}) hold.
According to \S~{\ifnum\isXr=1{\ref{section_algorithm}}\else{5}\fi}, let
\begin{equation*}
    \mathcal{L}_i=
    \begin{pmatrix}
        1 & 1 & 1\\
        0 & X_{i+1}-X_i & X_{i+2}-X_i
    \end{pmatrix}.
\end{equation*}
The reduced row echelon form of $\mathcal{L}_i$ yields
\begin{equation*}
    \mathcal{L}_i'=
    \begin{pmatrix}
        1 & 0 & \frac{A_{i+1}-A_{i+2}}{A_{i+1}}\\
        0 & 1 & \frac{A_{i+2}}{A_{i+1}}
    \end{pmatrix}.
\end{equation*}
By {\ifnum\isXr=1{(\ref{independent_phi_formula})}\else{(24)}\fi}, we have
\begin{eqnarray*}
    \phi_{i,2}=-\arctan\left(\frac{A_{i+1}}{A_{i+2}}\right)
    \qquad \text{and} \qquad
    \phi_{i,1}=-\arctan\left\{\frac{(A_{i+1}^2+A_{i+2}^2)^{1/2}}{A_{i+2}-A_{i+1}}\right\}.
\end{eqnarray*}
The parametrization {\ifnum\isXr=1{(\ref{patternvector})}\else{(23)}\fi} leads to
\begin{eqnarray*}
    d_{i,0}&=&\cos\phi_{i,1}\\
    &=&1/\left\{1+\frac{A_{i+1}^2+A_{i+2}^2}{(A_{i+2}-A_{i+1})^2}\right\}^{1/2}\\
    &=&\frac{A_{i+2}-A_{i+1}}{\left\{(A_{i+2}-A_{i+1})^2+A_{i+2}^2+A_{i+1}^2\right\}^{1/2}},\\
    d_{i,1}&=&\sin\phi_{i,1}\cos\phi_{i,2}\\
    &=&-\frac{(A_{i+1}^2+A_{i+2}^2)^{1/2}/(A_{i+2}-A_{i+1})}{\left\{1+(A_{i+1}^2+A_{i+2}^2)/(A_{i+2}-A_{i+1})^2\right\}^{1/2}}\frac{1}{\left(1+A_{i+1}/A_{i+2}\right)^{1/2}}\\
    &=&\frac{-1}{\left\{(A_{i+2}-A_{i+1})^2/A_{i+2}^2+1+A_{i+1}^2/A_{i+2}^2\right\}^{1/2}}\\
    &=&\frac{-A_{i+2}}{\left\{(A_{i+2}-A_{i+1})^2+A_{i+2}^2+A_{i+1}^2\right\}^{1/2}},\\
    d_{i,2}&=&\sin\phi_{i,1}\sin\phi_{i,2}\\
    &=&\frac{-A_{i+1}/A_{i+2}}{\left(1+A_{i+1}^2/A_{i+2}^2\right)^{1/2}}\frac{-(A_{i+1}^2+A_{i+2}^2)^{1/2}/(A_{i+2}-A_{i+1})}{\left\{1+(A_{i+1}^2+A_{i+2}^2)/(A_{i+2}-A_{i+1})^2\right\}^{1/2}}\\
    &=&\frac{A_{i+1}}{\left\{(A_{i+2}-A_{i+1})^2+A_{i+2}^2+A_{i+1}^2\right\}^{1/2}}.
\end{eqnarray*}
Therefore, $\hat{\sigma}^2_{\textsc{g}}\equiv\hat{\sigma}_*^2(2,1)$.

\subsection{Proof of Corollary {\ifnum\isXr=1{\ref{coro:long_adj_bias}}\else{2}\fi}}

First, we derive the bias for the case $m\geq m'_{\min}$.
Denote $\mathcal{I}$ as the set of indices $i$ such that $D_i$ is repetitional-bias-corrected.
Then we have
\begin{align}
    \Bias\big\{\hat{\sigma}^2_{\Long}(m,r)\big\}
    &=\frac{1}{|\mathcal{I}|}\sum_{i\in\mathcal{I}}\left(D^{ g}_i\right)^2 \notag\\
    &=\frac{1}{N-m-\mathcal{D}}
    \left\{
    \sum_{i\in\mathcal{I}_m}\left(D^{ g}_i\right)^2
    +\sum_{i\in\mathcal{I}\setminus\mathcal{I}_m}\left(D^{ g}_i\right)^2
    \right\} \notag\\
    &=\frac{1}{N-m-\mathcal{D}}\sum_{i\in\mathcal{I}\setminus\mathcal{I}_m}O_p(n^{-2r-2}) , \label{eq:coro1:bias1}
\end{align}
where
$\mathcal{D}$ is the number of difference statistics that are not repetitional-bias-corrected
and
(\ref{eq:coro1:bias1}) follows from the fact that $D^{ g}_i=0$ for $i\in\mathcal{I}_m$ and $D^{ g}_i=O_p(n^{-2r-2})$ for $i\in\mathcal{I}\setminus\mathcal{I}_m$.
To prove that (\ref{eq:coro1:bias1}) is equal to the right-hand side of {\ifnum\isXr=1{(\ref{eq:thm2:bias})}\else{(20)}\fi} asymptotically,
it suffices to show that $\mathcal{D}/N\rightarrow 0$.
It also suffices to consider the case where $r>0$
since the case $r=0$ is trivial.
Let
\begin{align}\label{eq:coro:1}
        \mathcal{U}=\bigcup_{i'\in\mathcal{T}_s}\mathcal{U}_{i'},
\end{align}
where $\mathcal{U}_{i'}$ is the set of indices $i$ such that
$D_{i}$ are not repetitional-bias-corrected due to the observations $\{Y_{1,i'},\ldots,Y_{s(i'),i'}\}$.
Note that $\mathcal{U}_{i'}$ may not be disjoint over $i'\in\mathcal{T}_s$.
The set $\mathcal{U}$ contains indices $i$ such that $D_{i}$ is not repetitional-bias-corrected,
hence
\begin{align}\label{eqt:Dp_upperBound}
\mathcal{D}\equiv|\mathcal{U}|\leq \sum_{i'\in\mathcal{T}_s}|\mathcal{U}_{i'}| .
\end{align}
We show $\mathcal{D}=O(S'_n)$ with the following cases.
From now on, we denote
    $Y_{i}$ as the leading observation and $Y_{i+m}$ as the ending observation
    in the difference statistic $D_{i}$.
Also recall that $\mathcal{T}_s=\{i\in[1,\ldots,n]\cap\mathbb{Z}:s(i)>s_{\max}\}$.
Two cases are considered.
\begin{itemize}
    \item[(i)] Suppose $s(i)>1$ for all $i$.
    Consider a particular index $i'\in\mathcal{T}_s$.
    Similar to the proof of Theorem {\ifnum\isXr=1{\ref{thm:long_diff_bias}}\else{4.3}\fi}, the difference statistic $D_i$ is possibly not repetitional-bias-corrected,
    i.e., $i\in\mathcal{U}_{i'}$,
    only when
    $Y_i=Y_{s(i'-a),i'-a}$
    or
    $Y_{i+m}=Y_{1,i'+a}$,
    for some $a=1,\ldots,r$,
    i.e., when either the leading observation is $Y_{s(i'-a),i'-a}$ or the ending observation is $Y_{1,i'+a}$.
    So, we have
    \begin{align}\label{eqt:upperbound_Uip}
        |\mathcal{U}_{i'}|\leq 2r .
    \end{align}
    Therefore, by (\ref{eqt:Dp_upperBound}) and (\ref{eqt:upperbound_Uip}), we have
    \begin{align*}
        \mathcal{D}\leq\sum_{i\in\mathcal{T}_s}|\mathcal{U}_{i'}|\leq 2r|\mathcal{T}_s|.
    \end{align*}
    See the cartoon in Figure \ref{fig:coro1:1} for a graphical illustration.
    \begin{figure}[t]
        \centering
        \begin{tikzpicture}
            \tikzset{
                vertex/.style={circle,draw, inner sep=0pt, outer sep=0pt
                , minimum width=0.4cm
                ,fill=black
                },
            }
            \draw[->] (-\tikzTextMinDist,0)--(8*\tikzTextDist,0) node[below=1mm]{};
            \node at (8*\tikzTextDist+\tikzTextMinDist,0) [] {$i$};
            \draw[->] (0,-\tikzTextMinDist)--(0,4*\tikzTextDist) node[left]{};
            \node at (0,4*\tikzTextDist+\tikzTextMinDist) [] {$h$};
            \foreach \i in {1,...,3}{
                \pgfmathsetmacro\index{\i-4}
                \draw (\i*\tikzTextDist,0) node[below] {$i'\pgfmathprintnumber{\index}$};
            }
            \foreach \i in {4}{
                \pgfmathsetmacro\index{\i-4}
                \draw (\i*\tikzTextDist,0) node[below] {$i'$};
            }
            \foreach \i in {5,...,7}{
                \pgfmathsetmacro\index{\i-4}
                \draw (\i*\tikzTextDist,0) node[below] {$i'+\pgfmathprintnumber{\index}$};
            }
            \foreach \h in {1,...,6}{
                \draw (0,\h*\tikzNodeDist) node[left] {$\h$};
            }
            \foreach \i in {1,...,7}{
                \foreach \h in {1,2,3}{
                    \node at (\i*\tikzTextDist,\h*\tikzNodeDist) [vertex] {};
                }
            }
            \foreach \h in {4,...,6}{
                \node at (4*\tikzTextDist,\h*\tikzNodeDist) [vertex] {};
            }
            \foreach \i in {2,3,4}{
                \foreach \h in {3}{
                    \node at (\i*\tikzTextDist,\h*\tikzNodeDist) [vertex,fill=red] {};
                }
            }
            \draw[rounded corners,green,thick]
            (1*\tikzTextDist-\tikzTextMinDist,4*\tikzNodeDist-\tikzNodeMinDist)
            --(1*\tikzTextDist-\tikzTextMinDist,3*\tikzNodeDist-\tikzNodeMinDist)
            --(2*\tikzTextDist-\tikzTextMinDist,3*\tikzNodeDist-\tikzNodeMinDist)
            --(2*\tikzTextDist-\tikzTextMinDist,1*\tikzNodeDist-\tikzNodeMinDist)
            --(4*\tikzTextDist+\tikzTextMinDist,1*\tikzNodeDist-\tikzNodeMinDist)
            --(4*\tikzTextDist+\tikzTextMinDist,2*\tikzNodeDist-\tikzNodeMinDist)
            --(3*\tikzTextDist+\tikzTextMinDist,2*\tikzNodeDist-\tikzNodeMinDist)
            --(3*\tikzTextDist+\tikzTextMinDist,4*\tikzNodeDist-\tikzNodeMinDist)
            --cycle  {};
            \draw[rounded corners,orange,thick]
            (4*\tikzTextDist-\tikzTextMinDist,7*\tikzNodeDist-\tikzNodeMinDist)
            --(4*\tikzTextDist-\tikzTextMinDist,6*\tikzNodeDist-\tikzNodeMinDist)
            --(5*\tikzTextDist-\tikzTextMinDist,6*\tikzNodeDist-\tikzNodeMinDist)
            --(5*\tikzTextDist-\tikzTextMinDist,1*\tikzNodeDist-\tikzNodeMinDist)
            --(7*\tikzTextDist+\tikzTextMinDist,1*\tikzNodeDist-\tikzNodeMinDist)
            --(7*\tikzTextDist+\tikzTextMinDist,2*\tikzNodeDist-\tikzNodeMinDist)--(6*\tikzTextDist+\tikzTextMinDist,2*\tikzNodeDist-\tikzNodeMinDist)
            --(6*\tikzTextDist+\tikzTextMinDist,4*\tikzNodeDist-\tikzNodeMinDist)
            --(5*\tikzTextDist+\tikzTextMinDist,4*\tikzNodeDist-\tikzNodeMinDist)
            --(5*\tikzTextDist-\tikzTextMinDist,7*\tikzNodeDist-\tikzNodeMinDist)
            --cycle  {};
            \draw[rounded corners,red,thick]
            (2*\tikzTextDist-\tikzTextMinDist,3*\tikzNodeDist+1.2*\tikzNodeMinDist)
            --(2*\tikzTextDist-\tikzTextMinDist,3*\tikzNodeDist-1.2*\tikzNodeMinDist)
            --(3*\tikzTextDist-\tikzTextMinDist,3*\tikzNodeDist-1.2*\tikzNodeMinDist)
            --(3*\tikzTextDist-\tikzTextMinDist,1*\tikzNodeDist-1.2*\tikzNodeMinDist)
            --(4*\tikzTextDist+\tikzTextMinDist,1*\tikzNodeDist-1.2*\tikzNodeMinDist)
            --(4*\tikzTextDist+\tikzTextMinDist,5*\tikzNodeDist-\tikzNodeMinDist)
            --(3*\tikzTextDist+\tikzTextMinDist,5*\tikzNodeDist-\tikzNodeMinDist)
            --(3*\tikzTextDist+\tikzTextMinDist,3*\tikzNodeDist+1.2*\tikzNodeMinDist)
            --cycle;
        \end{tikzpicture}
        \caption{Suppose $s(i)=3$ for all $i$ except $s(i')=6$.
        Set $(m,r)=(m'_{\min},2)=(7,2)$.
        The red nodes corresponds to the indices $i\in\mathcal{U}_{i'}$.
         The red group is the difference statistics which
        satisfy $Y_{i}=Y_{3,i'-2}$ and thus is not repetitional-bias-corrected.
        The difference statistics in green group and orange group satisfies $Y_{i}=Y_{s(i'-r-1),i'-r-1}$ and $Y_{i+m}=Y_{s(i'+r+1),i'+r+1}$, respectively, but still repetitional-bias-corrected.
        It is noteworthy that $D_{\ell(i')-1}$ satisfies both $Y_{i}=Y_{s(i'-1),i'-1}$ and $Y_{i+m}=Y_{1,i'+1}$.
        So, $|\mathcal{U}_{i'}|=3<2r$.
        }

        \label{fig:coro1:1}
    \end{figure}

    \item[(ii)] Suppose $s(i)\geq 1$ for all $i$, which includes case (i) as a special case.
    Consider each index $i'\in\mathcal{T}_s$.
    \begin{itemize}
        \item If $s(i'-a)=1$ for some integer $1\leq a\leq r$,
    then
    the difference statistic $D_{i}$ is possibly not repetitional-bias-corrected for all $i\in[\ell(i'-r),\ell(i'-a)]\cap\mathbb{Z}$.
    So, there are at most
    $
        \sum_{j=i'-r}^{i'-a}\left\{s(j)-1\right\}
    $
    additional difference statistics that are not repetitional-bias-corrected
    compared with (\ref{eqt:upperbound_Uip}),
    where
    $1$ is subtracted because one possibly
    non-repetitional-bias-corrected difference statistics
    satisfying $Y_i=Y_{s(i'-b),i'-b}$ for each $b=a,\ldots,r$
    have been counted in the upper bound of (\ref{eqt:upperbound_Uip}).
    Hence, in this case
    \begin{align}
        |\mathcal{U}_{i'}| &\leq 2r +  \sum_{j=i'-r}^{i'-a}\left\{s(j)-1\right\} \notag\\
        &=2r +  \sum_{j=i'-r}^{i'-1}\left\{
        s(j)-1\right\}\mathbb{1}\left\{\min_{k=j,\ldots,i'-1}s(k)=1\right\}
         \notag\\
        &=r+\sum_{j=i'-r}^{i'-1}
        s(j)\mathbb{1}\left\{\min_{k=j,\ldots,i'-1}s(k)=1\right\}
        +r-\sum_{j=i'-r}^{i'-1}\mathbb{1}\left\{\min_{k=j,\ldots,i'-1}s(k)=1\right\} \notag\\
        &=r+\sum_{j=i'-r}^{i'-1}
        s(j)\mathbb{1}\left\{\min_{k=j,\ldots,i'-1}s(k)=1\right\}
        +\sum_{j=i'-r}^{i'-1}\mathbb{1}\left\{\min_{k=j,\ldots,i'-1}s(k)>1\right\}. \label{eqt:caseii_upperboundUip_1}
    \end{align}
        \item     Similarly,
    if $s(i'+a)=1$ for some integer $1\leq a\leq r$,
    then
    the difference statistics $D_{i}$ is possibly not repetitional-bias-corrected for all $i\in[\ell(i'+a)-m,\ell(i'+r)+s(i'+r)-1-m]\cap\mathbb{Z}$.
    So, there are at most $\sum_{j=i'+a}^{i'+r}\left\{s(j)-1\right\}$ more difference statistics that are not repetitional-bias-corrected.
    Hence, in this case,
    \begin{align}
        |\mathcal{U}_{i'}| &\leq 2r +\sum_{j=i'+a}^{i'+r}\left\{s(j)-1\right\}
         \notag\\
        &=r+\sum_{j=i'+1}^{i'+r}
        s(j)\mathbb{1}\left\{\min_{k=i'+1,\ldots,j}s(k)=1\right\}
        +\sum_{j=i'+1}^{i'+r}\mathbb{1}\left\{\min_{k=i'+1,\ldots,j}s(k)>1\right\} . \label{eqt:caseii_upperboundUip_2}
    \end{align}
    \end{itemize}

    See the cartoon in Figure \ref{fig:coro1:2} for a graphical illustration.
    \begin{figure}[t]
        \centering
        \begin{tikzpicture}
            \tikzset{
                vertex/.style={circle,draw, inner sep=0pt, outer sep=0pt
                , minimum width=0.4cm
                ,fill=black
                },
            }
            \draw[->] (-\tikzTextMinDist,0)--(8*\tikzTextDist,0) node[below=1mm]{};
            \node at (8*\tikzTextDist+\tikzTextMinDist,0) [] {$i$};
            \draw[->] (0,-\tikzTextMinDist)--(0,4*\tikzTextDist) node[left]{};
            \node at (0,4*\tikzTextDist+\tikzTextMinDist) [] {$h$};
            \foreach \i in {1,...,3}{
                \pgfmathsetmacro\index{\i-4}
                \draw (\i*\tikzTextDist,0) node[below] {$i'\pgfmathprintnumber{\index}$};
            }
            \foreach \i in {4}{
                \pgfmathsetmacro\index{\i-4}
                \draw (\i*\tikzTextDist,0) node[below] {$i'$};
            }
            \foreach \i in {5,...,7}{
                \pgfmathsetmacro\index{\i-4}
                \draw (\i*\tikzTextDist,0) node[below] {$i'+\pgfmathprintnumber{\index}$};
            }
            \foreach \h in {1,...,6}{
                \draw (0,\h*\tikzNodeDist) node[left] {$\h$};
            }
            \foreach \i in {1,3,4,6,7}{
                \foreach \h in {1,2,3}{
                    \node at (\i*\tikzTextDist,\h*\tikzNodeDist) [vertex] {};
                }
            }
            \foreach \h in {4,...,6}{
                \node at (4*\tikzTextDist,\h*\tikzNodeDist) [vertex] {};
            }
            \foreach \i in {2,5}{
                \foreach \h in {1}{
                    \node at (\i*\tikzTextDist,\h*\tikzNodeDist) [vertex,fill=red] {};
                }
            }
            \foreach \i in {4}{
                \foreach \h in {1,...,3}{
                    \node at (\i*\tikzTextDist,\h*\tikzNodeDist) [vertex,fill=red] {};
                }
            }
            \node at (3*\tikzTextDist,3*\tikzNodeDist) [vertex,fill=red] {};
            \draw[rounded corners,green,thick]
            (3*\tikzTextDist-\tikzTextMinDist,4*\tikzNodeDist-\tikzNodeMinDist+0.1)
            --(3*\tikzTextDist-\tikzTextMinDist,1*\tikzNodeDist-\tikzNodeMinDist-0.1)
            --(5*\tikzTextDist-\tikzTextMinDist-0.1,1*\tikzNodeDist-\tikzNodeMinDist-0.1)
            --(5*\tikzTextDist-\tikzTextMinDist-0.1,6*\tikzNodeDist-\tikzNodeMinDist)
            --(4*\tikzTextDist-\tikzTextMinDist-0.1,6*\tikzNodeDist-\tikzNodeMinDist)
            --(4*\tikzTextDist-\tikzTextMinDist-0.1,4*\tikzNodeDist-\tikzNodeMinDist+0.1)
            --cycle  {};

            \draw[rounded corners,orange,thick]
            (2*\tikzTextDist-\tikzTextMinDist,2*\tikzNodeDist-\tikzNodeMinDist)
            --(2*\tikzTextDist-\tikzTextMinDist,1*\tikzNodeDist-\tikzNodeMinDist)
            --(4*\tikzTextDist+\tikzTextMinDist,1*\tikzNodeDist-\tikzNodeMinDist)
            --(4*\tikzTextDist+\tikzTextMinDist,3*\tikzNodeDist-\tikzNodeMinDist)
            --(4*\tikzTextDist+\tikzTextMinDist,4*\tikzNodeDist-\tikzNodeMinDist)
            --(2*\tikzTextDist+\tikzTextMinDist,4*\tikzNodeDist-\tikzNodeMinDist)
            --(2*\tikzTextDist+\tikzTextMinDist,2*\tikzNodeDist-\tikzNodeMinDist)
            --cycle  {};

            \draw[rounded corners,red,thick]
            (4*\tikzTextDist-\tikzTextMinDist,7*\tikzNodeDist-\tikzNodeMinDist)
            --(4*\tikzTextDist-\tikzTextMinDist,2*\tikzNodeDist-\tikzNodeMinDist)
            --(5*\tikzTextDist-\tikzTextMinDist,2*\tikzNodeDist-\tikzNodeMinDist)
            --(5*\tikzTextDist-\tikzTextMinDist,1*\tikzNodeDist-\tikzNodeMinDist)
            --(5*\tikzTextDist-\tikzTextMinDist,1*\tikzNodeDist-\tikzNodeMinDist)
            --(6*\tikzTextDist+\tikzTextMinDist,1*\tikzNodeDist-\tikzNodeMinDist)
            --(6*\tikzTextDist+\tikzTextMinDist,3*\tikzNodeDist-\tikzNodeMinDist)
            --(5*\tikzTextDist+\tikzTextMinDist,3*\tikzNodeDist-\tikzNodeMinDist)
            --(4*\tikzTextDist+\tikzTextMinDist,7*\tikzNodeDist-\tikzNodeMinDist)
            --cycle  {};
        \end{tikzpicture}
        \caption{Suppose $s(i)=3$ for all $i$ except $s(i')=6$, $s(i'+1)=s(i'-2)=1$.
        Set $(m,r)=(m'_{\min},2)=(7,2)$.
        The difference statistics in green group are still repetitional-bias-corrected since it does not consist of the observation $Y_{1,i'-2}$.
        The difference statistics in orange group
        satisfies both that its leading observation is $Y_{s(i'-2),i'-2}$ and
        that it consists of the observation $Y_{1,i'-2}$ which does not have duplication.
        The difference statistics in red group is not repetitional-bias-corrected due to the observation $Y_{1,i'+1}$ which does not have duplication.
        }

        \label{fig:coro1:2}
    \end{figure}
    Combining (\ref{eqt:caseii_upperboundUip_1}) and (\ref{eqt:caseii_upperboundUip_2}),
    we have
    \begin{align}
        |\mathcal{U}_{i'}|&\leq\sum_{h=1}^r
        s(i'-h)\mathbb{1}\left\{\min_{k=i'-h,\ldots,i'-1}s(k)=1\right\}+\sum_{h=1}^r\mathbb{1}\left\{\min_{k=i'-h,\ldots,i'-1}s(k)>1\right\} \notag\\
        &\quad+\sum_{h=1}^r
        s(i'+h)\mathbb{1}\left\{\min_{k=i'+1,\ldots,i'+h}s(k)=1\right\}
        +\sum_{h=1}^r\mathbb{1}\left\{\min_{k=i'+1,\ldots,i'+h}s(k)>1\right\} \notag\\
        &=\left\{\sum_{a=1}^rs(i'-a)^{b(i',-,a)}+s(i'+a)^{b(i',+,a)}\right\}. \label{eq:coro1:U}
    \end{align}
    Recall the function $b(i',\pm,a)=\mathbb{1}\left\{\min_{k=i\pm 1,\ldots,i\pm a}s(k)=1\right\}$.
    By definition, the right hand side of (\ref{eq:coro1:U}) reduces to
    \begin{align*}
        |\mathcal{U}_{i'}|\leq 2r,
    \end{align*}
    if $s(i)>1$ for $i\in[i-r,i+r]\cap\mathbb{Z}$.
\end{itemize}
Therefore, we have
\begin{align}
    \mathcal{D}&\leq \sum_{i'\in\mathcal{T}_s}|\mathcal{U}_{i'}|=S'_n, \notag\\
    \Bias\big\{\hat{\sigma}^2_{\Long}(m,r)\big\}
    &=\frac{1}{N-m-\mathcal{D}}\sum_{i\in\mathcal{I}\setminus\mathcal{I}_m}O_p(n^{-2r-2}) \notag\\
    &= O(n+S_nm-\mathcal{D})O_p(n^{-2r-2})/(N-m-\mathcal{D}) \label{eq:coro1:bias2}\\
    &=O_p\left\{(n^{-2r-1}+n^{-2r-2}S_nm)/N\right\}, \label{eq:coro1:bias3}
\end{align}
where (\ref{eq:coro1:bias2}) follows from $|\mathcal{I}\setminus\mathcal{I}_m|=O(n+S_nm-\mathcal{D})$
and (\ref{eq:coro1:bias3}) follows from $\mathcal{D}/N=S'_n/N\rightarrow 0$.

We continue to show the asymptotic variance remains the same as in Theorem {\ifnum\isXr=1{\ref{thm:long_diff_bias}}\else{4.3}\fi}.
Recall $a_{i,j}$ denotes the $(i,j)$th elements of $A$.
By similar arguments as in the proof of Theorem {\ifnum\isXr=1{\ref{thm:long_diff_bias}}\else{4.3}\fi},
we show $\tr(A^2)$ and $\tr\left\{A\Diag(A)\right\}$ are the leading order terms.
\begin{align}
    \tr(A^2)
    &=\sum_{i\in\mathcal{I}}\left(\sum^m_{j=0}d^{2}_{i,j}\right)^2+2\sum^m_{\ell=1}\sum_{i\in\mathcal{I}:i\leq N-m-\ell}\left(\sum^{m-\ell}_{j=0}d_{i,j+\ell}d_{i+\ell,j}\right)^2 \notag\\
    &=O(m)O(N-m-\mathcal{D}) \notag
    =O(Nm) ;  \notag \\
    g^\T A^2g
    &=\sum_{i\in\mathcal{I}}(D^{ g}_i)^2
    +2\sum^m_{\ell=1}\sum_{i\in\mathcal{I}:i\leq N-m-\ell}\left(\sum^{m-\ell}_{j=0}d_{i,j+\ell}d_{i+\ell,j}\right)D^{ g}_iD^{ g}_{i+\ell} \notag \\
    &=O(n+S_nm-\mathcal{D})O_p(n^{-2r-2})+2O(m)O(n+S_nm-\mathcal{D})O_p(n^{-2r-2}) \notag\\
    &=O(n^{-2r-1}m+n^{-2r-2}S_nm^2-n^{-2r-2}S'_n); \notag\\
    \left|\tr\left\{A\Diag(A)g{1}^\T_{n}\right\}\right|
    &=\left|\sum_{i\in\mathcal{I}}\sum^m_{j=0}d^{2}_{i,j}\sum^{(i+j)\wedge (N-m)}_{\substack{h=1\vee (i+j-m) \\
    h\in\mathcal{I}}}d_{h,i+j-h}D^{ g}_h\right| \notag\\
    &\leq \sum_{i\in\mathcal{I}}\max_{0\leq j\leq m}\left|\sum^{(i+j)\wedge (N-m)}_{\substack{h=1\vee (i+j-m) \\
    h\in\mathcal{I}}}d_{h,i+j-h}D^{ g}_h\right|\sum^m_{j=0}d^{2}_{i,j} \notag\\
    &=O(n+S_nm-\mathcal{D})O(m)O_p(n^{-r-1})
    =O_p(n^{-r}m+n^{-r-1}S_nm^2) ; \notag \\
    g^\T A\Diag(A)1_n
    &=\sum_{i\in\mathcal{I}}\left(\sum^m_{j=0}d_{i,j}a_{i+j,i+j}\right)D^{ g}_i \notag\\
    &=O_p(n+S_nm-\mathcal{D})O(m)O_p(n^{-r-1}) \label{eq:coro2:3} \\
    &=O_p(n^{-r}m+n^{-r-1}S_nm^{2}); \notag\\
    \left|\tr\left\{A\Diag(A)\right\}\right|
    &=\left|\sum_{i\in\mathcal{I}}\sum^m_{j=0}d^{2}_{i,j}a_{i+j,i+j}\right| \notag
    \leq \sum_{i\in\mathcal{I}}\sum^m_{j=0}\left|1a_{i+j,i+j}\right| \notag\\
    &=O(N-m-\mathcal{D})O(m)
    =O(Nm), \notag
\end{align}
where (\ref{eq:coro2:3}) follows from the fact that $\sum^m_{j=0}d_{i,j}a_{i+j,i+j}=O(m)$.
Therefore, in view of (\ref{eqt:var_SGW1993}),
the variance of $\hat{\sigma}_\Long^2(m,r)$
can be written as
\begin{equation*}
    \Var\left\{\hat{\sigma}_\Long^2(m,r)\right\}=\frac{1}{(N-m)^2}\left[2\sigma^4\tr(A^2)+(\lambda_4-3)\sigma^4\tr\left\{A\Diag(A)\right\}\right]+O_p\left\{\frac{n^{-r}m+n^{-r-1}S_nm^2}{(N-m)^2}\right\}.
\end{equation*}

\subsection{Proof of Proposition {\ifnum\isXr=1{\ref{prop:poly_cancelling}}\else{1}\fi}}
(a)
Under Assumption {\ifnum\isXr=1{\ref{ass:typeDiffSeq}\ref{assumption:design_adapt_con}}\else{1(a)}\fi},
for regression functions that admit the form $g(x)=\sum_{\ell=0}^ka_\ell x^\ell$,
we have
\begin{eqnarray*}
    \sum_{j=0}^md_{i,j}g(X_{i+j})&=&\sum^m_{j=0}d_{i,j}\sum^k_{\ell=0}a_\ell X_{i+j}^\ell\\
    &=&
    \sum^k_{\ell=0}a_\ell\sum^\ell_{h=0}\binom{\ell}{h}X_{i}^{\ell-h}\sum^m_{j=0}d_{i,j}(X_{i+j}-X_{i})^h\\
    &=& \sum^k_{\ell=r+1}a_\ell\sum^\ell_{h=r+1}\binom{\ell}{h}X_{i}^{\ell-h}\sum^m_{j=0}d_{i,j}(X_{i+j}-X_{i})^h,
\end{eqnarray*}
where the last line follows from the first constraint in {\ifnum\isXr=1{(\ref{diff_poly_cond0})}\else{(7)}\fi} and the constraints in {\ifnum\isXr=1{(\ref{diff_poly_random_cond})}\else{(11)}\fi}.
It is obvious that $d_{(i)}$ removes any $k$th degree polynomials exactly for any $k\leq r$.

(b)
Under Assumption {\ifnum\isXr=1{\ref{ass:typeDiffSeq}\ref{assumption:design_adapt_con}}\else{1(a)}\fi},
for an arbitrary function $g(x)$ that is $(r+1)$ times continuously differentiable,
Taylor expansion of $g(X_{i+j})$ about $X_i$ yields
\begin{eqnarray*}
    \sum_{j=0}^md_{i,j}g(X_{i+j})&=&
    \left\{\sum_{h=0}^{r+1}\frac{g^{(h)}(X_i)}{h!}\sum_{j=0}^md_{i,j}(X_{i+j}-X_{i})^h\right\}+o_p\big\{(X_{i+j}-X_{i})^{r+1}\big\}\\
    &=&\frac{g^{(r+1)}(X_i)}{(r+1)!}\sum_{j=0}^md_{i,j}(X_{i+j}-X_{i})^{r+1}+o_p\big\{(X_{i+j}-X_{i})^{r+1}\big\}\\
    &=&O_p(n^{-r-1}).
\end{eqnarray*}
Therefore, $d_{(i)}$ removes $g(x)$ up to order $O_p(n^{-r-1})$.

(c)
Under Assumption {\ifnum\isXr=1{\ref{ass:typeDiffSeq}\ref{assumption:regular_con}}\else{1(b)}\fi},
it can be easily seen that $D^g_i=0$ for any constant function $g$ due to the first constraint in {\ifnum\isXr=1{(\ref{diff_poly_cond0})}\else{(7)}\fi}.

(d)
Under Assumption {\ifnum\isXr=1{\ref{ass:typeDiffSeq}\ref{assumption:regular_con}}\else{1(b)}\fi},
for an arbitrary function $g(x)$ that is continuously differentiable,
Taylor expansion of $g(X_{i+j})$ about $X_{i}$ yields
\begin{align*}
    D^g_i&=\sum^m_{j=0}d_j\left\{g(X_{i})+\frac{g^{(1)}(X_{i})}{1!}(X_{i+j}-X_i)+o_p(n^{-1})\right\}\\
    &=g^{(1)}(X_{i})\sum^m_{j=0}d_j(X_{i+j}-X_i)+o_p(n^{-1})\\
    &=O_p(n^{-1})
\end{align*}
because the constraints {\ifnum\isXr=1{(\ref{diff_poly_cond0})}\else{(7)}\fi} and {\ifnum\isXr=1{(\ref{diff_poly_cond})}\else{(12)}\fi} cannot exactly remove the expression $\sum^m_{j=0}d_j(X_{i+j}-X_i)$.
If $g(x)$ that is Lipschitz continuous,
then
\begin{align*}
    |D^g_i|&=\left|\sum^m_{j=0}d_j\left\{g(X_{i})+g(X_{i+j})-g(X_i)\right\}\right|\\
    &=\left|\sum^m_{j=0}d_j\left\{g(X_{i+j})-g(X_i)\right\}\right|\\
    &\leq\sum^m_{j=0}\left|d_j\left\{g(X_{i+j})-g(X_i)\right\}\right|\\
    &\leq\sum^m_{j=0}|d_j|\left|X_{i+j}-X_i\right|\\
    &=O_p(n^{-1}),
\end{align*}
where the second to the last line follows from Lipschitz continuity.

\subsection{Proof of Lemma {\ifnum\isXr=1{\ref{lemma:diff_sol}}\else{5.1}\fi}}

Recall that $d_{(i)} = (d_{i,0}, \ldots, d_{i,m})^{\T}$.
By the property of the reduced row echelon form,
solving the linear system
$$\mathcal{L}_id_{(i)}=\begin{pmatrix}
    0 & \cdots & 0
\end{pmatrix}^\T$$
for $d_{(i)}$
can be done by solving
$$
    \mathcal{L}_{i,\beta}^{\prime\hspace{1pt}\T}d_{(i)} = 0
$$
iteratively for $\beta=r+1, \ldots, 1$,
where $\mathcal{L}_{i,\beta}^{\prime\hspace{1pt}\T}$ denotes the $\beta$th row of $\mathcal{L}'_i$.
Specifically, for $\beta=r+1$,
we have $0=
\mathcal{L}_{i,r+1}^{\prime\hspace{1pt}\T}
d_{(i)}$, i.e.,
\begin{align}
    0&=d_{i,r}+\sum^m_{j=r+1}L_{i,r+1,j+1}'d_{i,j}. \label{eq:prop1_proof:1}
\end{align}
Note that, for any $\phi_1,\ldots,\phi_m$,
the vector $(a_0,\ldots,a_m)^\T = u_0(\phi_1,\ldots,\phi_m)$
must satisfy that $\sum_{j=0}^m a_j^2=1$
by the definition of the function $u_0$.
So,
we can substitute $d_{(i)}=u_{0}(\phi_{i,1},\ldots,\phi_{i,m})$ into (\ref{eq:prop1_proof:1}) for some unknown $\phi_{i,1},\ldots,\phi_{i,m}$.
Then we solve for the unknown $\phi_{i,1},\ldots,\phi_{i,m}$, as follows:
\begin{align}
    0&=\left(\prod^{r}_{a=1}\sin\phi_{i,a}\right)\cos\phi_{i,r+1}+\sum^m_{j=r+1}L_{i,r+1,j+1}' \left( \prod^{j}_{b=1}\sin\phi_{i,b} \right) (\cos\phi_{i,j+1})^{\mathbb{1}(j\neq m)} \nonumber\\
    &=\left(\prod^{r}_{a=1}\sin\phi_{i,a} \right)\cos\phi_{i,r+1}+\sum^m_{j=r+1}L_{i,r+1,j+1}'
    \left( \prod^{r}_{c=1}\sin\phi_{i,c}  \right)
    \left(\prod^{j}_{b=r+1}\sin\phi_{i,b} \right)(\cos\phi_{i,j+1})^{\mathbb{1}(j\neq m)} . \label{eq:prop1_proof:2}
\end{align}
Dividing $\sin\phi_{i,r+1}\prod^{r}_{a=1}\sin\phi_{i,a}$ on both sides in (\ref{eq:prop1_proof:2}), we have
\begin{align*}
    0&=\frac{1}{\tan\phi_{i,r+1}}+L_{i,r+1,r+2}'\cos\phi_{i,r+2}+\sum^m_{j=r+2}L_{i,r+1,j+1}'
    \left( \prod^{j}_{b=r+2}\sin\phi_{i,b} \right) (\cos\phi_{i,j+1})^{\mathbb{1}(j\neq m)}.
\end{align*}
Let $u_{i,r+1} = u_{r+1}(\phi_{i,r+2},\ldots,\phi_{i,m})$.
Solving for $\phi_{i,r+1}$, we obtain
$$\phi_{i,r+1}=\arctan\left(\frac{-1}{{u}_{i,r+1}^\T\cdot L_{i,r+1}^{\prime}}\right).$$
Following similar procedures, for $1\leq\beta\leq r$,
solving
$0=
\mathcal{L}_{i,\beta}^{\prime\hspace{1pt}\T}
d_{(i)}$,
or equivalently $0=d_{i,\beta-1}+\sum^m_{j=r+1}L_{i,\beta,j+1}'d_{i,j}$ in its expanded form,
leads to
\begin{align}
    0
    &=\frac{1}{\tan\phi_{i,\beta}}+\sum^m_{j=r+1}L_{i,\beta,j+1}'
    \left( \prod^{j}_{b=\beta+1}\sin\phi_{i,b} \right) (\cos\phi_{i,j+1})^{\mathbb{1}(j\neq m)} \nonumber \\
    &=\frac{1}{\tan\phi_{i,\beta}}+
    \left( \prod^{r+1}_{b=\beta+1}\sin\phi_{i,b} \right)
    \begin{pmatrix}
        L_{i,\beta,r+2}' & \cdots & L_{i,\beta,m+1}'
    \end{pmatrix}
    \begin{pmatrix}
        \cos\phi_{i,r+2}\\
        \sin\phi_{i,r+2}\cos\phi_{i,r+3}\\
        \vdots\\
        \sin\phi_{i,r+2}\sin\phi_{i,r+3}\cdots\cos\phi_{i,m}\\
        \sin\phi_{i,r+2}\sin\phi_{i,r+3}\cdots\sin\phi_{i,m}
    \end{pmatrix}\nonumber\\
    &=\frac{1}{\tan\phi_{i,\beta}}+{u}_{i,r+1}^\T L_{i,\beta}'\prod^{r+1}_{b=\beta+1}\sin\phi_{i,b}. \nonumber
\end{align}
Solving for $\phi_{i,\beta}$, we obtain
\begin{align*}
    \phi_{i,\beta}&=\arctan\left(\frac{-1}{{u}_{i,r+1}^\T L_{i,\beta}'\prod^{r+1}_{b=\beta+1}\sin\phi_{i,b}}\right).
\end{align*}
Hence, any $d_{(i)}$ generated in this way will always satisfy the constraints given in $\mathcal{L}_i$ which is defined so that Assumption {\ifnum\isXr=1{\ref{ass:typeDiffSeq}\ref{assumption:design_adapt_con} or \ref{ass:typeDiffSeq}\ref{assumption:regular_con}}\else{1(a) or 1(b)}\fi}
 holds.
Derivation of difference sequence when $m=r+1$ follows similar procedures.

\section{Additional results}

\subsection{Sign and order reversal \label{miscell:SignOrder}}
The proposed difference sequence $d_{(i)}$ is always equivalent upon sign reversal, i.e.,
the estimators $\hat{\sigma}^2(m,r)$ computed based on
\begin{equation*}
    (d_{i,0},\ldots,d_{i,m})\quad\text{and}\quad(-d_{i,0},\ldots,-d_{i,m})
\end{equation*}
are numerically identical.
However, order reversal, i.e.,
changing $(d_{i,0},\ldots,d_{i,m})$ to\\
$(d_{i,m},\ldots,d_{i,0})$,
does not lead to an equivalent estimator in general.
It is trivial to see with a counter example:
Consider $m=2$, $r=1$ and $(X_1,X_2,X_3)=(1,3,4)$.
Then {\ifnum\isXr=1{(\ref{diff_poly_random_cond})}\else{(11)}\fi} becomes
\begin{align}
    \sum_{j=0}^2d_j(X_{1+j}-X_1)=2d_1+3d_2=0.\label{eq:signorder:1}
\end{align}
For the estimator with reversed difference sequence to be equivent to
the original estimator, we require
\begin{align}
    \sum_{j=0}^2d_{2-j}(X_{1+j}-X_1)=2d_1+3d_0=0.\label{eq:signorder:2}
\end{align}
Then (\ref{eq:signorder:1}) implies $d_1=-3d_2/2$.
Substituting $d_1=-3d_2/2$ into the first constraint in {\ifnum\isXr=1{(\ref{diff_poly_cond0})}\else{(7)}\fi}, we have
$d_0-3d_2/2+d_2=0$ or $d_0=d_2/2$.
Finally substituting $d_0=d_2/2$ and $d_1=-3d_2/2$ into (\ref{eq:signorder:2}), we have
\begin{align*}
    2d_1+3d_0=-3d_2+3d_2/2=-3d_2/2.
\end{align*}
So, order reversal leads to an equivalent estimator unless $d_2=0$.
However, $d_2=0$ will force $d_0=d_1=d_2=0$,
which violates the second constraints in {\ifnum\isXr=1{(\ref{diff_poly_cond0})}\else{(7)}\fi}.

However, order reversal always leads to an equivalent estimator
under equidistant design.
We prove by showing $\sum_{j=0}^md_jj^h=0$ if and only if $\sum_{j=0}^md_{m-j}j^h=0$ for $h=1,\ldots,r$.
First, suppose $\sum_{j=0}^md_jj^h=0$ holds for $h=1,\ldots,r$. Then
\begin{align*}
    \sum_{j=0}^md_{m-j}j^h
    &=\sum_{a=0}^md_{a}(m-a)^h
    =\sum_{a=0}^md_{a}\sum_{k=0}^h\binom{h}{k}m^{h-k}(-1)^k a^k\\
    &=\sum_{k=0}^h\binom{h}{k}m^{h-k}(-1)^k \sum_{a=0}^md_{a}a^k
    =0.
\end{align*}
Second, suppose $\sum_{j=0}^md_{m-j}j^h=0$ holds for $h=1,\ldots,r$. Then
\begin{align*}
    \sum_{j=0}^md_{j}j^h
    =\sum_{b=0}^md_{m-b}(m-b)^h
    =
    \sum_{k=0}^hm^{h-k}(-1)^k \sum_{b=0}^md_{m-b}b^k
    =0.
\end{align*}

\subsection{Residual-based variance estimators \label{rem:resBasedVarEst}}
The variance estimators based on detrending the data with
spline or kernel estimators are called residual-based
and their estimators generally admit the form of {\ifnum\isXr=1{(\ref{quadratic_form})}\else{(5)}\fi} with
$
A=(I-H)^\T (I-H),
$
where $I$ is an identity matrix and $H$ is the hat matrix such that
$\hat{Y}=HY$ is a linear estimate of $( g(X_1),\ldots,g(X_n) )^\T$.
Although \citet{HM1990} showed their residual-based method leads to
an smaller asymptotic mean squared error of $\Var(\epsilon^2)/n+o_p(n^{-1})$
comparing to
$C_1\Var(\epsilon^2)/n+o_p(n^{-1})$ with some $C_1>1$ of difference-based methods to be introduced.
Difference-based estimators are usually preferred because of the following reasons:
(a) ease of computation,
(b) tuning-free procedures,
(c) independence from trend estimators, and
(d) better-controlled bias.

\subsection{Change points \label{section:cp}}
This section discusses the robustness of $\hat{\sigma}^2_{*}(m,r)$ against regression function with jump discontinuities or change points.
\begin{definition}[piecewise-differentiable]
    A function $g(\cdot)$ is piecewise-differentiable if
    \begin{equation}\label{cp_trend}
        g(x)=g_{1}(x)\mathbb{1}(\beta_0\leq x\leq \beta_1)+\sum^B_{b=1}g_{b+1}(x)\mathbb{1}(\beta_b< x\leq \beta_{b+1}),
    \end{equation}
    where $B$ is finite, $\beta_0=0$, $\beta_{B+1}=1$ and $g_b(\cdot)$ is differentiable.
\end{definition}
\begin{proposition}
    \label{prop_cp_asymp}
    Suppose $(X_i)_{i=1}^n$ satisfies Assumption {\ifnum\isXr=1{\ref{assumption:basic}}\else{2}\fi} and
    $g(\cdot)$ is piece-wise-differentiable.
    Then
    \begin{align*}
            \Bias\big\{\hat{\sigma}_*^2(m,r)\big\}=O_p(n^{-1})
            \quad\text{and}\quad
            \Bias\big\{\hat{\sigma}^2(m,r)\big\}=O_p(n^{-1})
    \end{align*}
\end{proposition}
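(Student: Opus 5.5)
The plan is to use the bias identity already invoked in the proofs of Theorems \ref{thm_diff_bias} and \ref{thm_diff_var} (via \citet{SGW1993}): conditional on $X$, and since $\tr(A)=n-m$ by the normalisation in (\ref{diff_poly_cond0}),
\[
\Bias\{\tilde\sigma^2\}=\frac{g^\T A g}{n-m}=\frac{1}{n-m}\sum_{i=1}^{n-m}(D_i^g)^2 ,\qquad \tilde\sigma^2\in\{\hat\sigma^2_*(m,r),\hat\sigma^2(m,r)\}.
\]
The bias then reduces to a deterministic count of how large each $(D_i^g)^2$ can be. I will split the indices $i\in\{1,\ldots,n-m\}$ into two groups, according to whether the window $\{X_i,\ldots,X_{i+m}\}$ straddles one of the finitely many breakpoints $\beta_1,\ldots,\beta_B$ in (\ref{cp_trend}).

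\emph{Good windows.} If no $\beta_b$ lies in $(X_i,X_{i+m}]$, then all $X_{i+j}$ lie in a single piece, on which $g=g_b$ is differentiable. I will invoke Proposition \ref{prop:poly_cancelling}(d) for global differencing, or (b) with $r=0$ for local differencing; only $\sum_j d_{i,j}=0$ is needed. Applied to $g_b$, this gives $D_i^g=O_p(n^{-1})$ under Assumption \ref{assumption:basic}, since $X_{i+m}-X_i=O_p(m n^{-1})$.

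I need this bound uniformly in $i$, so I will assume that each $g_b'$ is bounded on its closed piece. Without that, I would extend $g_b$ to be $C^1$ (or Lipschitz) there, which I expect is the intended reading. With this in hand, the mean value theorem gives
\[
|D_i^g|\le \sum_j|d_{i,j}|\,\sup|g_b'|\,(X_{i+m}-X_i)\le (m+1)^{1/2}\sup|g_b'|\max_k|X_{k+1}-X_k|\,m=O_p(n^{-1})
\]
uniformly. These windows therefore contribute at most $O_p(n^{-2})$ to the average.

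\emph{Bad windows.} A given breakpoint $\beta_b$ lies in $(X_i,X_{i+m}]$ for at most $m$ values of $i$, because the $X_i$ are strictly increasing. So there are at most $Bm=O(1)$ bad indices. For these I use only boundedness: $g$ is bounded on $[0,1]$ because each $g_b$ is continuous on a compact piece. By Cauchy--Schwarz together with $\sum_j d_{i,j}^2=1$,
\[
|D_i^g|\le (m+1)^{1/2}\sup|g| ,
\]
so each bad window contributes $O(1)$. Together they contribute $O(Bm)/(n-m)=O(n^{-1})$ to the bias.

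Adding the two groups gives $\Bias=O_p(n^{-2})+O(n^{-1})=O_p(n^{-1})$ for both estimators. The main obstacle is the uniformity in the good-window step, namely making the Taylor remainder $o_p(n^{-1})$ hold uniformly over $i$ rather than pointwise. I will handle it by the boundedness assumption on $g_b'$ described above. The counting of bad windows is elementary.
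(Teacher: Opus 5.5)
Your proposal is correct and follows essentially the same route as the paper: write the bias as $\frac{1}{n-m}\sum_{i}(D_i^g)^2$, note that at most $mB$ windows straddle a change point and each such $D_i^g$ is $O_p(1)$, while the remaining ones are $O_p(n^{-1})$, which gives $O_p(n^{-2})+O(mB/(n-m))=O_p(n^{-1})$. Your explicit uniformity assumption (bounded $g_b'$, hence bounded $g$) makes rigorous what the paper leaves implicit. The only slip is that, with pieces $(\beta_b,\beta_{b+1}]$, a window straddles $\beta_b$ iff $X_i\le\beta_b<X_{i+m}$, not iff $\beta_b\in(X_i,X_{i+m}]$; this leaves your count of at most $m$ bad windows per breakpoint unchanged.
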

Proposition \ref{prop_cp_asymp} states the robustness of $\hat{\sigma}_*^2(m,r)$ and $\hat{\sigma}^2(m,r)$.
They remain asymptotically unbiased even if the regression function consists of change points,
however, the biases are no longer high-order corrected.

\begin{proof}
For any consecutive design points $(X_i,\ldots,X_{i+m})$
which contains $B^\star$ change points
such that $X_{i+v_{u}}\leq\beta_{b^\star_u}< X_{i+v_{u}+1}$
for $u=1,\ldots,B^\star$
and some integers $v_u,b^\star_u$, where $0\leq v_1<\cdots<v_{B^*}\leq m-1$,
\begin{eqnarray*}
    D^g_i
    &=&\sum^m_{j=0}d_j\Big\{\sum^{B^\star}_{u=1}g_{b^\star_{u}}(X_{i+j})\mathbb{1}(\beta_{b^\star_u-1}< X_{i+j}\leq \beta_{b^\star_u})\Big\}\\
    &=&\sum^{v_1}_{j=0}d_jg_{b^\star_1}(X_{i+j})
    +\sum^{{B^\star}}_{u=2}\sum^{v_{u}}_{j=v_{u-1}+1}d_jg_{b^\star_{u}}(X_{i+j})
    +\sum^m_{j=v_{B^\star}+1}d_jg_{b^\star_{B^\star+1}}(X_{i+j})\\
    &=&O_p(1).
\end{eqnarray*}
There are at most $mB$ $D^g_i$ that are of order $O_p(1)$. Thus, the bias of $\hat{\sigma}_*^2$ is inflated as follows:
\begin{eqnarray*}
    \Bias\big\{\hat{\sigma}_*^2(m,r)\big\}&=&\frac{1}{n-m}\sum^{n-m}_{i=1}(D^g_i)^2\\
    &=&O_p(n^{-2})+\frac{mB}{n-m}O_p(1)\\
    &=&O_p(n^{-1}).
\end{eqnarray*}
It can be easily verified that the variance remains the order of $O(n^{-1})$ as
\begin{align*}
    B_1 & =\frac{1}{n-m}\sum_{i=1}^{n-m}D^g_i = O_p(n^{-1}),\\
    B_2 & =\frac{1}{n-m}\sum_{i=1}^{n-m}(D^g_i)^2 = O_p(n^{-1}),\\
    B_3 & =\frac{1}{n-m}\sum_{\ell=1}^m\sum_{i=1}^{n-m-\ell}D^g_i D^g_{i+\ell} = O_p(n^{-1}).
\end{align*}
Thus, Assumption {\ifnum\isXr=1{\ref{assumption:Dgi_mean}}\else{4}\fi} holds with $\kappa=1/2$ and the variance expression follows from Theorem {\ifnum\isXr=1{\ref{thm_diff_var}}\else{4.2}\fi}.

\end{proof}

\subsection{Differenced-residuals \label{miscell:DiffResid}}
Difference-based estimator can be combined with regression techniques and yield a potentially better estimator.
\citet{TW2005} combined a first-order difference-based estimator with linear regression.
Here we propose a doubly-difference-based estimator $\hat{\sigma}^2_{\DR}$ which incorporates nonparametric regression technique to achieve bias of order $o_p(n^{-2r-2})$.

Suppose $g$ is a smooth and continuous function. By Taylor expansion of $g(X_{i+j})$ about $X_{i}$, we have
\begin{equation*}
    D^g_i=\frac{g^{(r+1)}(X_{i})}{(r+1)!}\sum^m_{j=0}d_{i,j}(X_{i+j}-X_{i})^{r+1}+o_p(n^{-r-1}).
\end{equation*}
We observe that the bias cancelling effect of traditional difference-based estimators depends solely on $X_{i+j}-X_{i}$ of order $O_p(n^{-1})$ for $i=1,\ldots,n-m$ and $j=0,\ldots,m$.
Inspired by the smoothing-type techniques, we propose
\begin{eqnarray}
    \hat{\sigma}^2_{\DR}(m,r)&=&\frac{1}{\eta}\sum^{n-m}_{i=1}\hat{D}_i^2,\label{diff_est_new}\\
    \hat{D}_i&=& \sum^m_{j=0}d_{i,j}\left\{y_{i+j}-\frac{\hat{g}^{(r+1)}_{(i)}}{(r+1)!}(X_{i+j}-X_{i})^{r+1}\right\},\label{D_estII}\\
    \eta&=&\left\{n-m-2\sum^{n-m}_{i=1}\sum^m_{\ell=0}d_{i,\ell}\frac{\omega_{i,r+1,i+\ell}}{(r+1)!}\sum^m_{j=0}d_{i,j}(X_{i+j}-X_{i})^{r+1}\right\}, \notag
\end{eqnarray}
where $\hat{g}^{(h)}_{(i)}=\sum^n_{v=1}\omega_{i,h,v}y_v$ is a linear estimate of $g^{(h)}(X_{i})$ and $\eta$ is the effective sample size.

We embed the idea of smoothing in the traditional difference-based estimator and
construct a differenced-residual statistics $\hat{D}_i$ which further smooths the $D_i$,
\begin{equation*}
    \hat{D}_i=\frac{g^{(r+1)}(X_{i})-\hat{g}^{(r+1)}_{(i)}}{(r+1)!}\sum^m_{j=0}d_{i,j}(X_{i+j}-X_{i})^{r+1}+o_p(n^{-r-1})+D^{\epsilon}_i.
\end{equation*}
The additional layer of differencing boosts the convergence rate on top of the effect of conventional difference sequence
as long as $\hat{g}^{(r+1)}_{(i)}$ is an consistent estimate of $g^{(r+1)}(x_{i})$.

We suggest to estimate the required derivatives by local polynomial regression of order
$p\geq r+1$
such that $p-r-1$ is odd.
Following the two assumptions concerning the kernel $K$ and bandwidth $h_n$ involved in the local polynomial estimator, Theorem \ref{thm_double_diff_asymp} states the improvement in bias.
\begin{assumption}\label{assumption:locpol}
    The kernel $K$ is compactly supported on $[-1,1]$ and bounded such that $\int_{\mathbb{R}} u^2K(u)\dd u$ is a non-zero constant.
    In addition, $\int_{\mathbb{R}} u^hK(u)\dd u=0$ for all odd $h$.
    The bandwidth parameter $h_n$ satisfies $h_n\rightarrow0$ and $nh_n^{2(r+1)+1}\rightarrow\infty$ as $n\rightarrow\infty$.
\end{assumption}
\begin{theorem}\label{thm_double_diff_asymp}
    Suppose Assumptions {\ifnum\isXr=1{\ref{assumption:basic}}\else{2}\fi} and \ref{assumption:locpol} hold. Under a random design, assume $g^{(p+2)}(\cdot)$ is continuous,
    $p\geq r+1$ and $p-r-1$ is odd, where $p$ is the order of local polynomial regression for derivatives estimation.
    Then
    \begin{align*}
        \Bias\big\{\hat{\sigma}^2_{\DR}(m,r)\}=&\frac{1}{\eta}\sum^{n-m}_{i=1}\Bigg\{\frac{\sum^m_{j=0}d_{i,j}(X_{i+j}-X_{i})^{r+1}}{(r+1)!}\Bigg\}^2\E\Biggl[\Bigg\{g^{(r+1)}(X_{i})-\hat{g}^{(r+1)}_{(i)}\Bigg\}^2\Biggr]\\
        &+o_p(n^{-2r-2})\\
        =& O_p\big\{n^{-2r-2}(h_n^{2p-2r}+n^{-1}h_n^{-2r-3})\big\}+o_p(n^{-2r-2}),\\
        \Var\big\{\hat{\sigma}^2_{\DR}(m,r)\big\}=&\frac{1}{\eta^2}\Big[(n-m)^2\Var\big\{\hat{\sigma}_*^2(m,r,0)\big\}+O_p(n^{-r+1/2}h_n^{1/2})\Big].
    \end{align*}
\end{theorem}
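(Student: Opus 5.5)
The plan is to treat $\hat{\sigma}^2_{\DR}(m,r)$ as a quadratic form in $Y$, because $\hat D_i$ is linear in $Y$. Write $\hat D_i = c_i^\T Y$, where
\[
c_i = v_i - \frac{\tau_i}{(r+1)!}\,\omega_{i,r+1,\cdot}, \qquad \tau_i=\sum_{j=0}^m d_{i,j}(X_{i+j}-X_i)^{r+1}.
\]
Here $v_i$ is the $i$th row of $\DiffSeqMat$. Then $\hat{\sigma}^2_{\DR}=Y^\T\tilde A Y/\eta$ with $\tilde A=\sum_i c_ic_i^\T$.

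\emph{Step 1: $\eta$ is the trace.} Expanding $\sum_i\|c_i\|^2$ gives $n-m-2\sum_i \tau_i v_i^\T\omega_{i,r+1,\cdot}/(r+1)!$ plus a term quadratic in $\tau_i\omega$. The quadratic term is $O_p(n\cdot n^{-2r-2}\cdot n^{-1}h_n^{-2r-3})=o(1)$ by Assumption \ref{assumption:locpol}. So $\eta=\tr(\tilde A)+o_p(1)$, and $\E(Y^\T\tilde AY)=\sigma^2\tr(\tilde A)+g^\T\tilde Ag$.

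\emph{Step 2: bias.} This reduces to $\eta^{-1}\sum_i (c_i^\T g)^2$, plus the negligible $o(1)\sigma^2/\eta$ mismatch. Using Proposition \ref{prop:poly_cancelling}(b) with one more Taylor term (we have $g^{(p+2)}$ continuous, so $D_i^g=\tau_i g^{(r+1)}(X_i)/(r+1)!+o_p(n^{-r-1})$), we get
\[
c_i^\T g=\frac{\tau_i}{(r+1)!}\{g^{(r+1)}(X_i)-\E\hat g^{(r+1)}_{(i)}\}+o_p(n^{-r-1}).
\]
Adding back the variance of the linear estimator (the cross term $\epsilon$-part of $\hat D_i$ with $D_i^\epsilon$ produces exactly the $-2$ correction absorbed by $\eta$) yields the stated first display, with $\E[\{g^{(r+1)}-\hat g^{(r+1)}\}^2]$. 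Then I would invoke standard local polynomial derivative estimation results: with $p-r-1$ odd, symmetric $K$ and Assumption \ref{assumption:basic}, the bias is $O(h_n^{p-r})$ and the variance is $O\{(nh_n^{2r+3})^{-1}\}$. Since $\tau_i=O_p(n^{-r-1})$, this gives the $O_p\{n^{-2r-2}(h_n^{2p-2r}+n^{-1}h_n^{-2r-3})\}$ rate.

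\emph{Step 3: variance.} Apply the \citet{SGW1993} formula (\ref{eqt:var_SGW1993}) to $\tilde A=A+\Delta$, where $A=\DiffSeqMat^\T\DiffSeqMat$ is the matrix of $\hat\sigma^2_*(m,r)$. The leading piece $2\tr(A^2)+(\lambda_4-3)\tr\{A\Diag(A)\}$, together with the $g$-terms, reproduces $(n-m)^2\Var\{\hat\sigma^2_*(m,r)\}$ by Theorem \ref{thm_diff_var}. The remainder must be bounded by terms like $\tr(A\Delta)$, $\tr(\Delta^2)$ and $g^\T A\Delta g$. Here $\Delta$ is a sum of rank-one pieces of size $\|\tau_i\omega_i\|\lesssim n^{-r-1}(nh_n^{2r+3})^{-1/2}$, with weights $\omega$ supported on $O(nh_n)$ points.

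The \textbf{main obstacle} is Step 3: getting the precise $O_p(n^{-r+1/2}h_n^{1/2})$ remainder. I would first bound $|\tr(A\Delta)|\le\sum_i|\tau_i|\,|v_i^\T A\omega_i|$ using Cauchy--Schwarz and the banded structure of $A$ ($a_{jk}=O(m)$, bandwidth $m$). I would then bound $|\omega_{i,r+1,v}|\lesssim (nh_n^{r+2})^{-1}$ on $O(nh_n)$ indices. Finally, I would check that summing over $i$ produces $n\cdot n^{-r-1}\cdot(nh_n)^{1/2}\cdot(nh_n^{r+2})^{-1}\cdot$(powers of $h_n$), and tune the bookkeeping to match the claimed order. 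If the exponents do not align, I would fall back to a crude bound that is still $o(n)$.
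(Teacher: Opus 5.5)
Your Steps 1--2 are the paper's bias argument in quadratic-form language. The paper splits $\hat D_i=\mathcal{D}_{g,i}+D^\epsilon_i$, lets $\eta$ absorb the cross term $2\E(\mathcal{D}_{g,i}D^\epsilon_i)$, and reads off the MSE of $\hat g^{(r+1)}_{(i)}$ from Ruppert--Wand, as you do. One correction: the ``mismatch'' $\sigma^2\{\tr(\tilde A)-\eta\}/\eta$ is exactly $\eta^{-1}\sum_i\beta_i^2\Var\{\hat g^{(r+1)}_{(i)}\}$, with $\beta_i=\tau_i/(r+1)!$. It is the variance half of the MSE in the first display, so it must be kept, not dropped.

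The gap is Step 3. You never produce the remainder, and a ``crude bound that is still $o(n)$'' would not give the stated $O_p(n^{-r+1/2}h_n^{1/2})$. The terms you list are also not the dominant ones. Sizing $\Delta=\tilde A-A$ by $\|\tau_i\omega_i\|_2\lesssim n^{-r-1}(nh_n^{2r+3})^{-1/2}$ badly underestimates how $\Delta$ acts on $g$. Since $\omega_i^\T g=\E\hat g^{(r+1)}_{(i)}=O(1)$, you get $\Delta g=-V^\T\mu+(\text{smaller})$, where $\mu_i=\beta_i\E\hat g^{(r+1)}_{(i)}$ is of order $n^{-r-1}$. Consequently the third-moment term $4\gamma_3\sigma^3 g^\T\tilde A\,\mathrm{diag}(\tilde A)$ in (\ref{eqt:var_SGW1993}) shifts by $-4\gamma_3\sigma^3\sum_i\mu_i\sum_j d_{i,j}a_{i+j,i+j}+\cdots=O(n^{-r})$. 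This is the leading correction, and it is missing from your list, as is the $(\lambda_4-3)$ term $2\,\mathrm{diag}(A)^\T\mathrm{diag}(\Delta)$.

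Once this is seen, closing your route is routine. Use $|\omega_{i,r+1,v}|\lesssim(nh_n^{r+2})^{-1}$ on $O(nh_n)$ indices and the band structure of $A$. Then:
$\tr(A\Delta)$ and $\mathrm{diag}(A)^\T\mathrm{diag}(\Delta)$ are $O(n^{-r-1}h_n^{-r-2})$;
$\tr(\Delta^2)=O(n^{-2r-2}h_n^{-2r-3})$;
$g^\T A\Delta g$ and $\|\Delta g\|^2$ are $O(n^{-2r-1})$;
the skewness term is $O(n^{-r})$.
All of these are $o(n^{-r+1/2}h_n^{1/2})$ because $nh_n\to\infty$ and $nh_n^{2r+3}\to\infty$. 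So your exact-formula route, completed, gives a sharper remainder than the one stated.

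The paper goes another way. It writes $\eta\hat\sigma^2_{\DR}=Y^\T AY+B^\T B-2B^\T VY$ with $B=\Omega Y$. It then gets $O(n^{-r+1/2}h_n^{1/2})$ as the Cauchy--Schwarz bound $\{\Var(Y^\T AY)\,\E(B^\T VY)^2\}^{1/2}$, using $\Var(Y^\T AY)=O(n)$ and $\E(B^\T VY)^2=O(n^{-2r}h_n)$. The stated order is therefore a loose upper bound, and there is nothing to ``tune.'' If you take that route, bound $\Var(B^\T VY)$ directly; it is $O(n^{-2r-1})$, driven by the linear term $\sum_i\mu_iD^\epsilon_i$. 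Do not go through $\E\{(B^\T B)^2\}$: $\E(B^\T B)\geq\sum_i\mu_i^2$ is generally of exact order $n^{-2r-1}$, with no factor $h_n$, so Cauchy--Schwarz through it only yields a remainder $O(n^{-r+1/2})$.
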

Proof of Theorem \ref{thm_double_diff_asymp} is deferred to the end of this section.
Theorem \ref{thm_double_diff_asymp} states
the additional differencing layer further lowers the bias to the order of $o_p(n^{-2r-2})$.
The bias of $\hat{\sigma}^2_{\DR}$ is dominated by a term depending on the mean squared error of $\hat{g}^{(r+1)}_{(i)}$.
In general, we recommend using $p=r+2$ which provide satisfactory improvements.

According to Theorem \ref{thm_double_diff_asymp}, $h_n$ influences the asymptotic properties through
\begin{equation*}
    \E\Big[\big\{g^{(r+1)}(x_{i})-\hat{g}^{(r+1)}_{(i)}\big\}^2\Big].
\end{equation*}
As a result, the optimal $h_n$ for $\hat{\sigma}^2_{\DR}$ is equivalent to the optimal bandwidth for estimating $g^{(r+1)}(x_{i})$. See \citet{FG1995}, \citet{WJ1995}, \citet{FG2018} and \citet{CCF2018} for some existing bandwidth selection methods. Also see \citet{CCF2019} for an R package for bandwidth selection. Alternatively, one may also obtain one bandwidth globally by minimizing the main terms in $\Bias(\hat{\sigma}^2_{\DR})$ with plug-in estimators for unknown terms.

\begin{proof}[Proof of Theorem \ref{thm_double_diff_asymp}]

We begin with the bias of $\hat{\sigma}^2_{\DR}(m,r)$.
By Taylor's expansion,
\begin{eqnarray*}
    \hat{\sigma}^2_{\DR}(m,r)&=&\frac{1}{\eta}\sum^{n-m}_{i=1}\hat{D}_i^2
    =\frac{1}{\eta}\sum^{n-m}_{i=1}\left[\sum^m_{j=0}d_{i,j}\Big\{Y_{i+j}-\frac{\hat{g}^{(r+1)}_{(i)}}{(r+1)!}(X_{i+j}-X_{i})^{r+1}\Big\}\right]^2\\
    &=&\frac{1}{\eta}\sum^{n-m}_{i=1}\left[\sum^m_{j=0}d_{i,j}\Big\{g(X_{i})+\sum^{r+1}_{h=1}\frac{g^{(h)}(X_{i})}{h!}(X_{i+j}-X_{i})^h+o(X_{i+j}-X_{i})^{r+1}\right.\\
    &&\left.-\frac{\hat{g}^{(r+1)}_{(i)}}{(r+1)!}(X_{i+j}-X_{i})^{r+1}+\epsilon_{i+j}\Big\}\right]^2\\
    &=&\frac{1}{\eta}\sum^{n-m}_{i=1}\Bigg[\sum^m_{j=0}d_{i,j}\Big\{\frac{g^{(r+1)}(X_{i})-\hat{g}^{(r+1)}_{(i)}}{(r+1)!}(X_{i+j}-X_{i})^{r+1}+o(X_{i+j}-X_{i})^{r+1}\Big\}\\
    &&\qquad+\sum^m_{j=0}d_{i,j}\epsilon_{i+j}\Bigg]^2\label{eq:1}.
\end{eqnarray*}

Let
\begin{eqnarray*}
    \mathcal{D}_{g,i}&=&\sum^m_{j=0}d_{i,j}\Big\{\frac{g^{(r+1)}(X_{i})-\hat{g}^{(r+1)}_{(i)}}{(r+1)!}(X_{i+j}-X_{i})^{r+1}+o(X_{i+j}-X_{i})^{r+1}\Big\}.
\end{eqnarray*}
Then
\begin{eqnarray*}
    \hat{\sigma}^2_{\DR}(m,r)&=&\frac{1}{\eta}\sum^{n-m}_{i=1}\left\{\mathcal{D}_{g,i}+D^\epsilon_i\right\}^2
    =\frac{1}{\eta}\sum^{n-m}_{i=1}\left\{\mathcal{D}_{g,i}^2+2\mathcal{D}_{g,i}D^\epsilon_i+(D^\epsilon_i)^2\right\}.
\end{eqnarray*}
By Theorem 4.2 in \cite{RW1994},
where the asymptotic properties of $\hat{g}^{(r+1)}_{(i)}$ under local polynomial regression are given, we have
\begin{align}
    \E(\mathcal{D}_{g,i}^2)
    &=\E\left[\Bigg\{\frac{g^{(r+1)}(X_{i})-\hat{g}^{(r+1)}_{(i)}}{(r+1)!}\sum^m_{j=0}d_{i,j}(X_{i+j}-X_{i})^{r+1}+o(X_{i+j}-X_{i})^{r+1}\Bigg\}^2\right] \notag\\
    &=\E\left[\Bigg\{\frac{g^{(r+1)}(X_{i})-\hat{g}^{(r+1)}_{(i)}}{(r+1)!}\sum^m_{j=0}d_{i,j}(X_{i+j}-X_{i})^{r+1}\Bigg\}^2\right. \notag\\
    &\quad\left.+2o_p(n^{-r-1})\Bigg\{\frac{g^{(r+1)}(X_{i})-\hat{g}^{(r+1)}_{(i)}}{(r+1)!}\sum^m_{j=0}d_{i,j}(X_{i+j}-X_{i})^{r+1}\Bigg\}\right]+o_p(n^{-2r-2}) \notag\\
    &=\E\left[\Bigg\{\frac{g^{(r+1)}(X_{i})-\hat{g}^{(r+1)}_{(i)}}{(r+1)!}\sum^m_{j=0}d_{i,j}(X_{i+j}-X_{i})^{r+1}\Bigg\}^2\right] \notag\\
    &\quad-2o_p(n^{-r-1})\Bigg[O_p(n^{-r-1})\E\left\{\hat{g}^{(r+1)}_{(i)}-g^{(r+1)}(X_{i})\right\}\Bigg]+o_p(n^{-2r-2}) . \label{eq:diff_resi:1}
\end{align}
The asymptotic bias formula is applied on $\E\left\{\hat{g}^{(r+1)}_{(i)}-g^{(r+1)}(X_{i})\right\}$ in (\ref{eq:diff_resi:1}),
which gives
\begin{align*}
    \E\left\{\hat{g}^{(r+1)}_{(i)}-g^{(r+1)}(X_{i})\right\}&=
    \int u^{p+1}K_{(r+1,p)}(u)\dd u\frac{g^{(p+1)}(X_{i})}{(p+1)!}h_n^{p-r-1+1}\{1+o_p(1)\}\\
    &=O(h_n^{p-r})\{1+o_p(1)\},
\end{align*}
where $K_{(r+1,p)}(t)$ satisfies
\begin{align*}
    \int_{\mathbb{R}} u^jK_{(r+1,p)}(u)\dd u=
    \left\{\begin{array}{ll}
        0,&\qquad 0 \leq j\leq p,j\neq r+1,\\
        (r+1)!,&\qquad j=r+1,\\
        \xi_{r+1,p},&\qquad j=p+1,\\
    \end{array}\right.
\end{align*}
for some non-zero constant $\xi_{r+1,p}$.
See \citet{RW1994} for more details.
Therefore, we have
\begin{align}
    \E(\mathcal{D}_{g,i}^2)
    &=\E\left[\Bigg\{\frac{g^{(r+1)}(X_{i})-\hat{g}^{(r+1)}_{(i)}}{(r+1)!}\sum^m_{j=0}d_{i,j}(X_{i+j}-X_{i})^{r+1}\Bigg\}^2\right] \notag\\
    &\quad-2o_p(n^{-r-1})O_p(n^{-r-1})O(h_n^{p-r})\{1+o_p(1)\}+o_p(n^{-2r-2}) \notag\\
    &=\E\left[\Bigg\{\frac{g^{(r+1)}(X_{i})-\hat{g}^{(r+1)}_{(i)}}{(r+1)!}\sum^m_{j=0}d_{i,j}(X_{i+j}-X_{i})^{r+1}\Bigg\}^2\right]+o_p(n^{-2r-2}) \notag\\
    &=\Bigg\{\frac{\sum^m_{j=0}d_{i,j}(X_{i+j}-X_{i})^{r+1}}{(r+1)!}\Bigg\}^2\E\Biggl[\Bigg\{g^{(r+1)}(X_{i})-\hat{g}^{(r+1)}_{(i)}\Bigg\}^2\Biggr]+o_p(n^{-2r-2}) \label{eq:diff_resi:2}\\
    &=O_p(n^{-2r-2})\Biggl[\Bigg\{\int u^{p+1}K_{(r+1,p)}(u)\dd u\Bigg\}^2\Bigg\{\frac{g^{(p+1)}(X_{i})}{(p+1)!}\Bigg\}^2h_n^{2(p-r-1+1)}\{1+o_p(1)\} \notag\\
    &\quad+\frac{\sigma^2}{nh_n^{2r+2+1}f_X(X_{i})}\int K_{(r+1,p)}(u)^{2}\dd u\{1+o_p(1)\}\Biggr]+o_p(n^{-2r-2}) \notag\\
    &=O_p(n^{-2r-2})O(h_n^{2p-2r}+n^{-1}h_n^{-2r-3})\{1+o_p(1)\}+o_p(n^{-2r-2}) \notag\\
    &=O_p\big\{n^{-2r-2}(h_n^{2p-2r}+n^{-1}h_n^{-2r-3})\big\}+o_p(n^{-2r-2}) \notag,
\end{align}
where the asymptotic bias and variance formulae in Theorem 4.2 by \cite{RW1994} are applied to (\ref{eq:diff_resi:2}).
\begin{eqnarray*}
    \E(\mathcal{D}_{g,i}D^\epsilon_i)
    &=&\E\Biggl[\sum^m_{\ell=0}d_{i,\ell}\epsilon_{i+\ell}\Bigg\{\sum^m_{j=0}d_{i,j}\frac{g^{(r+1)}(X_{i})-\hat{g}^{(r+1)}_{(i)}}{(r+1)!}(X_{i+j}-X_{i})^{r+1}\\
    &&\qquad +o(X_{i+j}-X_{i})^{r+1}\Bigg\}\Biggr]\\
    &=&\E\Biggl[\Bigg\{\frac{g^{(r+1)}(X_{i})-\sum^n_{v=1}\omega_{i,r+1,v}y_v}{(r+1)!}\sum^m_{j=0}d_{i,j}(X_{i+j}-X_{i})^{r+1}\Bigg\}\sum^m_{\ell=0}d_{i,\ell}\epsilon_{i+\ell}\Biggr]\\
    &=&\E\Biggl[\Bigg\{\frac{-\sum^n_{v=1}\omega_{i,r+1,v}y_v}{(r+1)!}\sum^m_{j=0}d_{i,j}(X_{i+j}-X_{i})^{r+1}\Bigg\}\sum^m_{\ell=0}d_{i,\ell}\epsilon_{i+\ell}\Biggr]\\
    &=&\E\Biggl[\Bigg\{\sum^n_{v=1}\frac{-\omega_{i,r+1,v}}{(r+1)!}\sum^m_{j=0}d_{i,j}(X_{i+j}-X_{i})^{r+1}y_v\Bigg\}\sum^m_{\ell=0}d_{i,\ell}\epsilon_{i+\ell}\Biggr]\\
    &=&\E\Biggl(\Bigg[\sum^n_{v=1}\frac{-\omega_{i,r+1,v}}{(r+1)!}\sum^m_{j=0}d_{i,j}(X_{i+j}-X_{i})^{r+1}\big\{g(X_{v})+\epsilon_v\big\}\Bigg]\sum^m_{\ell=0}d_{i,\ell}\epsilon_{i+\ell}\Biggr)\\
    &=&\E\Biggl[\Bigg\{\sum^n_{v=1}\frac{-\omega_{i,r+1,v}}{(r+1)!}\sum^m_{j=0}d_{i,j}(X_{i+j}-X_{i})^{r+1}\epsilon_v\Bigg\}\sum^m_{\ell=0}d_{i,\ell}\epsilon_{i+\ell}\Biggr]\\
    &=&\E\Biggl\{\sum^m_{\ell=0}d_{i,\ell}\frac{-\omega_{i,r+1,i+\ell}}{(r+1)!}\sum^m_{j=0}d_{i,j}(X_{i+j}-X_{i})^{r+1}\epsilon_{i+\ell}^2\Biggr\}\\
    &=&-\sum^m_{\ell=0}d_{i,\ell}\frac{\omega_{i,r+1,i+\ell}}{(r+1)!}\sum^m_{j=0}d_{i,j}(X_{i+j}-X_{i})^{r+1}\sigma^2.
\end{eqnarray*}
Therefore,
\begin{eqnarray*}
    \E(\hat{\sigma}^2_{\DR})&=&\frac{1}{\eta}\sum^{n-m}_{i=1}\left[\E(\mathcal{D}_{g,i}^2)+2(\mathcal{D}_{g,i}D^\epsilon_i)+\E\Big\{(D^\epsilon_i)^2\Big\}\right]\\
    &=&O_p\big\{n^{-2r-2}(h_n^{2p-2r}+n^{-1}h_n^{-2r-3})\big\}+o_p(n^{-2r-2})\\
    &&-\frac{2}{\eta}\sum^{n-m}_{i=1}\sum^m_{\ell=0}d_{i,\ell}\frac{\omega_{i,r+1,i+\ell}}{(r+1)!}\sum^m_{j=0}d_{i,j}(X_{i+j}-X_{i})^{r+1}\sigma^2
    +\frac{1}{\eta}\sum^{n-m}_{i=1}\sigma^2\\
    &=&O_p\big\{n^{-2r-2}(h_n^{2p-2r}+n^{-1}h_n^{-2r-3})\big\}+o_p(n^{-2r-2})\\
    &&+\frac{\sigma^2}{\eta}\left\{n-m-2\sum^{n-m}_{i=1}\sum^m_{\ell=0}d_{i,\ell}\frac{\omega_{i,r+1,i+\ell}}{(r+1)!}\sum^m_{j=0}d_{i,j}(X_{i+j}-X_{i})^{r+1}\right\}\\
    &=&O_p\big\{n^{-2r-2}(h_n^{2p-2r}+n^{-1}h_n^{-2r-3})\big\}+o_p(n^{-2r-2})+\sigma^2.
\end{eqnarray*}
Hence,
$$\Bias\big\{\hat{\sigma}^2_{\DR}(m,r)\big\}
=O_p\big\{n^{-2r-2}(h_n^{2p-2r}+n^{-1}h_n^{-2r-3})\big\}+o_p(n^{-2r-2}) . $$

Now we derive the variance of $\hat{\sigma}^2_{\DR}(m,r)$. In matrix form, we have
\begin{equation*}
    \hat{\sigma}^2_{\DR}=\frac{1}{\eta}(\DiffSeqMat Y-B)^\T(\DiffSeqMat Y-B),
\end{equation*}
where
\begin{align*}
    B&=\begin{pmatrix}
        \frac{\hat{g}^{(r+1)}_{(1)}}{(r+1)!}\sum^m_{j=0}d_{1,j}(X_{1+j}-X_{1})^{r+1}\\
        \vdots\\
        \frac{\hat{g}^{(r+1)}_{(n-m)}}{(r+1)!}\sum^m_{j=0}d_{n-m,j}(X_{n-m+j}-X_{n-m})^{r+1}\\
    \end{pmatrix}\\
    &=\Omega Y,\\
    \Omega&=\begin{pmatrix}
        \frac{\omega_{1,r+1,1}}{(r+1)!}\sum^m_{j=0}d_{1,j}(X_{1+j}-X_{1})^{r+1} & \cdots & \frac{\omega_{1,r+1,n}}{(r+1)!}\sum^m_{j=0}d_{1,j}(X_{1+j}-X_{1})^{r+1}\\
        \vdots & & \vdots\\
        \frac{\omega_{n-m,r+1,1}}{(r+1)!}\sum^m_{j=0}d_{n-m,j}(X_{1+j}-X_{1})^{r+1} & \cdots & \frac{\omega_{n-m,r+1,n}}{(r+1)!}\sum^m_{j=0}d_{n-m,j}(X_{1+j}-X_{1})^{r+1}
    \end{pmatrix}.
\end{align*}
Since $K(\cdot)$ is bounded, there are only $O(h_n)$ non-zero elements in each row and column of $\Omega$. Now we have
\begin{equation*}
    \Var\big\{\hat{\sigma}^2_{\DR}(m,r)\big\}=\frac{1}{\eta^2}\Var\left(Y^\T \DiffSeqMat^\T \DiffSeqMat Y+B^\T B-2B^\T \DiffSeqMat Y\right).
\end{equation*}
Let $\Omega_2=\Omega^\T \Omega$ and denote $1_n$ as a column vector of $1$ of length $n$.
By the theorem in \cite{UZ1992} which provides the formulae for the first and second moments of $\hat{\sigma}^2_{\text{QF}}$,
we have
\begin{align*}
    \E(B^\T B)^2
    &=\left\{
    g^\T \Omega_2g + \sigma^2 \tr(\Omega_2)
    \right\}^2\\
    &=\left\{
    O(n)O(n^{-2r-2}h_n^2)
    +
    \sigma^2 O(n)O(n^{-2r-2}h_n)
    \right\}^2\\
    &=\left\{
    O(n^{-2r-1}h_n)
    \right\}^2\\
    &=O(n^{-4r-2}h_n^2),\\
    \Var(B^\T B)&=\Var(Y^\T \Omega_2 Y)\\
    &=\sigma^4(\lambda_4-3)\tr\{\Diag(\Omega_2)^2\}+2\sigma^4\tr(\Omega_2^2)\\
    &\qquad+4\sigma^2g^\T \Omega_2^2 g+4\sigma^3\lambda_3g^\T\Omega_2\Diag(\Omega_2)1_n\\
    &=\sigma^4(\lambda_4-3)O_p(n^{-4r-3}h_n^{2})+2\sigma^4O_p(n^{-4r-2}h_n^2)\\
    &\qquad+4\sigma^2O_p(n^{-4r-3}h_n^{4})+4\sigma^3\lambda_3O_p(n^{-4r-3}h_n^3)\\
    &=O_p(n^{-4r-2}h_n^2),\\
    \E\left\{(B^\T B)^2\right\}&=\Var(B^\T B)+\E(B^\T B)^2\\
    &=O_p(n^{-4r-2}h_n^2).
\end{align*}
By Cauchy--Schwarz inequality,
we have
\begin{align}
    0\leq(B^\T \DiffSeqMat Y)^2\leq B^\T B Y^\T \DiffSeqMat^\T \DiffSeqMat Y\label{eq:DR:1}.
\end{align}
Taking expectation in (\ref{eq:DR:1}) yields
\begin{align}
    0\leq\E\Big\{(B^\T \DiffSeqMat Y)^2\Big\}&\leq \E(B^\T B Y^\T \DiffSeqMat^\T \DiffSeqMat Y) \nonumber\\
    &\leq\left[\E\Big\{(B^\T B)^2\Big\} \E\Big\{(Y^\T \DiffSeqMat^\T \DiffSeqMat Y)^2\Big\}\right]^{\frac{1}{2}} \label{eq:DR:2}\\
    &=\left\{O_p(n^{-4r-2}h_n^2)
    O(n^2)
    \right\}^{\frac{1}{2}} \nonumber\\
    &=O_p(n^{-2r}h_n) \nonumber,
\end{align}
where (\ref{eq:DR:2}) follows from Cauchy--Schwarz inequality again.
As a result,
we have
\begin{align*}
    0\leq\Var(B^\T \DiffSeqMat Y)&\leq\E\Big\{(B^\T \DiffSeqMat Y)^2\Big\} \\
    &=O_p(n^{-2r}h_n).
\end{align*}
Therefore, the variance of $\hat{\sigma}^2_{\DR}(m,r)$ is asymptotically equal to that of $\hat{\sigma}_*^2(m,r)$:
\begin{eqnarray*}
    \Var\big\{\hat{\sigma}^2_{\DR}(m,r)\big\}&=&\frac{1}{\eta^2}\Var\left(Y^\T \DiffSeqMat^\T \DiffSeqMat Y+B^\T B-2B^\T \DiffSeqMat Y\right)\\
    &=&\frac{1}{\eta^2}\Big[(n-m)^2\Var\big\{\hat{\sigma}_*^2(m,r)\big\}+O_p(n^{-r+1/2}h_n^{1/2})\Big].
\end{eqnarray*}
\end{proof}

\subsection{Discussion: Global polynomial fitting \label{miscell:PolyRegFit}}
The idea of cancelling a polynomial trend may similarly be achieved by a parametric approach.
One can fit an $r$th order polynomial globally to obtain the residuals,
then the noise variance can be estimated by the average of squared residuals.
If the true regression function is an $r$th order polynomial,
Lemma B.10 in \citet{TC2023} ensures that the estimated coefficients are accurate up to $O(n^{-1/2})$ errors.

However, using $\hat{\sigma}_*^2$ is superior in terms of the types of regression functions that it can cancel.
According to Theorem {\ifnum\isXr=1{\ref{thm_diff_bias}}\else{4.1}\fi}, $\hat{\sigma}_*^2$ still performs well for arbitrary functions that are sufficiently smooth.
On the contrary, \cite{FG2018} has illustrated that even a global polynomial regression up to $4$th order fails to produce visually satisfying fits in real data.
Moreover, local differencing offers protection against the lack of knowledge about the potentially non-polynomial regression functions.

In addition, difference-based estimators are still valid under the existence of change points according to Proposition \ref{prop_cp_asymp}.
While $\hat{\sigma}_*^2$ suffers from inflation of bias yet remains consistent,
the global polynomial fitting approach suffers worst problems.
Suppose the true regression function is
\begin{equation*}
    g(x)=\sum^r_{\ell=0}a_{1,\ell} x^\ell \mathbb{1}(x\leq0.5)+\sum^r_{\ell=0}a_{2,\ell} x^\ell \mathbb{1}(x>0.5).
\end{equation*}
Fitting a $r$th order polynomial always results in
\begin{equation*}
    g(x)=\sum^r_{\ell=0}(a_{1,\ell}-\hat{a}_{\ell}) x^\ell \mathbb{1}(x\leq0.5)+\sum^r_{\ell=0}(a_{2,\ell}-\hat{a}_{\ell}) x^\ell \mathbb{1}(x>0.5),
\end{equation*}
where $\hat{a}_{\ell}$ are the fitted coefficients. The global fitting can never get rid of the indicator function, which is the change point. Basically, the global polynomial fitting approach is invalid. In contrast, local differencing is still robust under the existence of change points.

\subsection{Reduced row echelon form}\label{miscell:RREF}
To define reduced row echelon form,
we first define row echelon form.
By \citet[Chapter~2.1]{M2023},
an $m\times n$ matrix $E$ with rows $E_{i\bullet}$ and column $E_{\bullet j}$ is said to be in row echelon form if the following two conditions hold.
\begin{itemize}
    \item If $E_{i\bullet}$ consists entirely of zeros, then all rows below $E_{i\bullet}$ are also entirely zero, i.e., all zero rows are at the bottom.
    \item If the first nonzero entry in $E_{i\bullet}$ lies in the $j$th position, then all entries below the $i$th position in columns $E_{\bullet 1}$, $E_{\bullet 2}$, \ldots , $E_{\bullet j}$ are zero.
\end{itemize}
Let $*$ be any real number.
An example of matrix in row echelon form is
\[
\begin{pmatrix}
    * & * & * & * & *\\
    0 & * & * & * & *\\
    0 & 0 & 0 & * & *\\
    0 & 0 & 0 & 0 & *\\
    0 & 0 & 0 & 0 & 0\\
\end{pmatrix}.
\]

According to \citet[Chapter~2.2]{M2023},
an $m\times n$ matrix $E$ is said to be in reduced row echelon form if the following three conditions hold.
\begin{itemize}
    \item $E$ is in row echelon form.
    \item In each row, the pivot (i.e., the first nonzero entry) is $1$.
    \item All entries above each pivot are 0.
\end{itemize}
An example of matrix in reduced row echelon form is
\[
\begin{pmatrix}
    1 & 0 & * & 0 & 0\\
    0 & 1 & * & 0 & 0\\
    0 & 0 & 0 & 1 & 0\\
    0 & 0 & 0 & 0 & 1\\
    0 & 0 & 0 & 0 & 0\\
\end{pmatrix}.
\]
Let $M$ be an $m\times n$ matrix where $m<n$.
Denote the row operations to transform $M$ to its reduced row echelon form as $R$,
i.e., $M'=RM$.
Write $M$ as
\[
M = \begin{pmatrix}
    A_{m\times m} & B_{m\times (n-m)}
\end{pmatrix},
\]
where $A_{m\times m}$ is an $m\times m$ matrix and $B_{m\times (n-m)}$ is an $m\times (n-m)$ matrix.
If $A_{m\times m}$ is non-singular,
then $M'$ admits the form:
\[
M' = \begin{pmatrix}
    1 & \cdots & 0 & * & \cdots & *\\
    \vdots & \ddots & \vdots & \vdots & \ddots & \vdots\\
    0 & \cdots & 1 & * & \cdots & *\\
\end{pmatrix},
\]
i.e., the first $m$ columns of $M'$ is an identity matrix.
Because
\[
M' = RM
= R\begin{pmatrix}
    A_{m\times m} & B_{m\times (n-m)}
\end{pmatrix}
=\begin{pmatrix}
    RA_{m\times m} & RB_{m\times (n-m)}
\end{pmatrix}
=\begin{pmatrix}
    I_{m\times m} & RB_{m\times (n-m)}
\end{pmatrix}.
\]
\end{appendix}

\begin{acks}[Acknowledgments]
The authors would like to thank the anonymous referees, an associate editor, and the editor for their constructive comments that improved the quality of the article.
The authors would also like to thank Siu Hang Edmund Ip for his preliminary work contributed at the early stage of this project.
\end{acks}

\begin{funding}
This research was partially supported by grants GRF-14306421 and 14307922 provided by
the Research Grants Council of HKSAR.
\end{funding}

\bibliographystyle{imsart-nameyear.bst}
\bibliography{myRef}

\end{document}